\newcommand{\dram}{\texttt{DRAM}}
\DeclareMathOperator{\reg}{Reg}
\DeclareMathOperator{\tv}{TV}
\DeclareMathOperator{\lp}{LP}
\newcommand{\indep}{\perp\!\!\!\perp}
\theoremstyle{remark}    
\newtheorem{remark}{Remark}
\newcommand{\blue}[1]{{#1}}
\title{Multi-agent Adaptive Mechanism Design}
\author{Qiushi Han}
\affiliation{%
  \institution{Massachusetts Institute of Technology}
  \country{USA},
  \textsf{joshhan@mit.edu}
}
\author{David Simchi-Levi}
\affiliation{%
  \institution{Massachusetts Institute of Technology}
  \country{USA},
  \textsf{dslevi@mit.edu}
}
\author{Renfei Tan}
\affiliation{%
  \institution{Massachusetts Institute of Technology}
  \country{USA},
  \textsf{rftan@mit.edu}
}
\author{Zishuo Zhao}
\affiliation{%
  \institution{National University of Singapore}
  \country{Singapore},
  \textsf{wiku30@nus.edu.sg}
}
\begin{abstract}
    We study the sequential mechanism design problem in which a principal seeks to elicit truthful reports from multiple rational agents while starting with no prior knowledge of agents’ beliefs.
    We introduce Distributionally Robust Adaptive Mechanism ($\dram$), a general framework combining insights from both mechanism design and online learning to jointly address truthfulness and cost-optimality.
    Throughout the sequential game, the mechanism would estimate agents' beliefs, then iteratively updates a distributionally robust linear program with shrinking ambiguity sets to reduce payments while preserving truthfulness. 
    Our mechanism guarantees truthful reporting with high probability while achieving $\tilde{O}(N\sqrt{T})$ cumulative regret, and we establish a matching lower bound showing that no feasible adaptive mechanism can asymptotically do better.
    The framework generalizes to plug-in estimators ($\dram +$), supporting structured priors and delayed feedback.
    To our knowledge, this is the first adaptive mechanism under the general settings that maintains truthfulness and achieves optimal regret when incentive constraints are unknown and must be learned.
\end{abstract}
\begin{document}






\maketitle

\section{Introduction}




The theory of mechanism design studies rules and institutions in various disciplines, ranging from auctions and online advertisements to business contracts and trading rules. 
The formulation often involves a central principal (system) and one or many rational agents (players), where the principal designs a mechanism to achieve a given objective subject to agents’ incentives. 
A typical component is the \textit{common knowledge assumption}: certain information about agents is presumed known to the principal and can be exploited to design analytically tractable, often optimal mechanisms. 
For example, knowledge over bidders’ value distributions over the auctioned can be used to design revenue-optimal auctions \cite{myerson1981optimal}. 
However, the availability of such knowledge is difficult to justify in practice. 
This observation, originally due to \cite{wilson1985game}, is now known as \textit{Wilson’s critique}. 
It proposes that some information is too private to be common knowledge, and such assumptions should be weakened to approximate reality.

In parallel, the theory of online learning studies algorithms that learn and make decisions in unfamiliar environments, aiming to approach the performance of oracles that have full knowledge from the start. 
The principal typically begins with no knowledge of the environment, and information is acquired through repeated data collection and carefully designed statistical methods. 
A common assumption is that the environment is unknown but \textit{stationary}. 
For example, in the classical multi-armed bandit model \cite{lattimore2020bandit}, each arm’s reward is a stochastic distribution, and the best arm can be discovered via repeated sampling.
An alternative is to assume the worst-case scenario from the environment, i.e., fully adversarial feedback.
These algorithms have wide applications in recommendation, pricing, scheduling, and more \cite{lattimore2020bandit}. 
In application, however, they often interact with humans, who are neither stationary nor fully adversarial. 
In fact, a foundational assumption in economics is that humans are \textit{rational} \cite{von2007theory}. 

Therefore, the strength and weaknesses from both fields seems to complement to each one.
Mechanism design incentivizes nice behavior from rational agents for proper learning guarantees, and online learning can provide the necessary knowledge for efficient mechanisms.
For this reason, the combination of mechanism design and online learning has received increasing attention, most notably in settings such as online contract design \cite{ho2014adaptive, zhu2022sample} and online auctions \cite{blum2004online, cesa2014regret}.
However, the design of general multi-agent adaptive mechanisms remains an under-explored problem.

In this work, we study the sequential mechanism design problem in which a principal in each round designs reward mechanisms for multiple rational agents, while starting with no prior knowledge of agents’ beliefs.
The principal's objective is three-fold: \textit{report quality}, \textit{truthfulness}, and \textit{cost-optimality}.
The principal wants to design a reward mechanism that can obtain the highest-quality data from the task, while incentivizing truthful report from agents, and do so in a cost-minimal way.
As a motivating example, consider the image labeling task where the principal assigns raw images to multiple agents for labeling (Figure \ref{fig: toy_example}).
In each round, each agent makes a private observation of the image as her type, then reports her type to the principal, and finally receives a payment.

\begin{figure}
    \centering
    \includegraphics[width=0.7\linewidth]{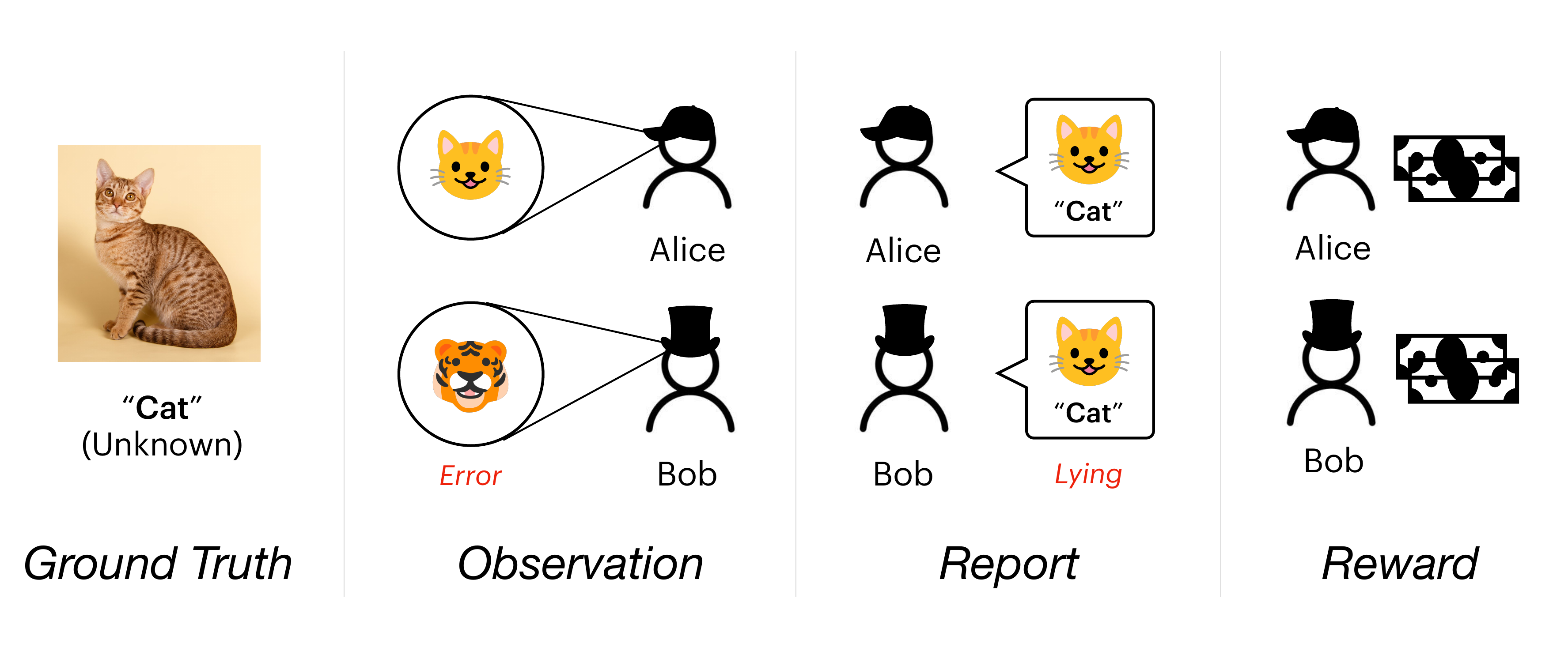}
    \caption{An image labeling example. Nature samples an unlabeled image with an unknown ground truth, which is then independently observed by multiple agents. Each agent's observation (type) is private to herself. The agents then report to the principal and receive rewards in the end. Lying or lazy behavior is possible, since the principal does not know the ground truth or the agents' observations. One objective is to incentivize truthful behavior via reward mechanisms based on only agents' reports.}
    \label{fig: toy_example}
\end{figure}

The multi-agent mechanism design problem faces several major challenges. 
First, each agent’s observation is private and unknown to the principal.
Agents are rational and pursue utility, so they may lie or become lazy.
However, ground truth may be unavailable or expensive to obtain, making it hard to directly control report quality or infer agent skills.
Second, classical mechanism designs that rely on common knowledge are inapplicable: even mechanisms that only focus on maintaining truthfulness often assume accurate knowledge of posteriors or correlation structure \cite{miller2005eliciting}. 
Finally, we cannot exploit the structures in specific mechanism design problems.
For example, in auction design, a second-price auction incentivizes truthfulness without any common knowledge.
However, for a general mechanism design problem, maintaining truthfulness itself under limited knowledge is already a substantial difficulty.


\subsection{Our Contribution}
Our work draws insights from both the mechanism design and online learning literature.
From a mechanism design perspective, we relaxed the common knowledge assumption, and generalizes the optimal mechanism design problem.
From an online learning perspective, we relaxed the setting from always-honest agents to the more realistic rational agents, and generalizes the prediction with expert advice problem.

\textbf{The necessity of truthfulness.}
We show that truthfulness is in a sense ``necessary'' for sequential decision making.
Any decision making process based on agents' rational reports can achieve highest performance if and only if agents are truthful (up to permutations).
This is a result based on Blackwell's informativeness theorem \cite{blackwell1953equivalent}, and stronger than the revelation principle \cite{myerson1979incentive}.
Since mechanism design is also a decision making task itself, it implies truthfulness is necessary for learning optimal mechanisms.

\textbf{Distributionally robust mechanisms.}
We introduce and study a family of distributionally robust mechanisms, which preserves truthfulness and incurs low cost even when the principal's knowledge is ambiguous.
We study the relations between design parameters and achievable robustness, establish methods to tractably acquire these mechanisms, and finally characterize the cost of robustness.

\textbf{Optimal adaptive mechanism design.}
We design a general framework named \textit{Distributionally Robust Adaptive Mechanism} ($\dram$), which estimate agents' beliefs and iteratively updates a distributionally robust mechanism based on estimation accuracy.
Our algorithm achieves an $\tilde{O}(N\sqrt{T})$ regret guarantee (up to logarithmic factors) while preserving truthfulness with high probability.
We complement this result with a matching lower bound showing that no feasible adaptive mechanism can do better in the worst case.
The theoretical results are validated by numerical simulations.
This framework extends to any plug-in estimators (e.g., structured or regularized estimators for discrete distributions) and is compatible with delayed or batched feedback. 
To our knowledge, this is the first general adaptive mechanism that maintains truthfulness and achieves optimal regret when incentive constraints depend on unknown and learned information.


\subsection{Related Work}

Our work draws insights from both the mechanism design and online learning literature.


\subsubsection{Online and adaptive Mechanism Design}


The combination of mechanism design and online learning is a fast-growing direction in algorithmic game theory \cite{roughgarden2010algorithmic}.
Prior work spans a variety of participation structures and problem settings.
Some works study \textit{synchronous} settings, where the same agents have multiple encounters with each other, as in repeated games \cite{hart2000simple, satchidanandan2023incentive, papadimitriou2022complexity}.
Others consider \textit{asynchronous} settings, where new agents may arrive and depart over time, a structure particularly relevant in online auctions and advertising platforms \cite{milgrom2019auction, choi2020online, hajiaghayi2004adaptive, cesa2007improved, wang2017display}. 
Under both settings, two domains have received the most attention: online contract design \cite{ho2014adaptive, zhu2022sample} and online auctions \cite{blum2004online, cesa2014regret}. 
These works typically focus on learning an optimal mechanism, such as an optimal contract or an optimal reserve price, using tools from bandit learning.

A key observation is that preserving truthfulness is not a substantive difficulty in these existing models. 
In contract design, agents’ choices naturally reflect their private information, and no truthful reporting constraint is involved. 
In online auctions, structural properties ensure incentive compatibility: for instance, in second-price auctions, truthful bidding remains a dominant strategy even when the reserve price is inaccurate.
Even without any knowledge, the second-price auction guarantee's agents' truthfulness.
As a result, works such as \cite{cesa2014regret} can safely explore suboptimal mechanisms during learning without risking incentive distortion or data contamination.
The focus is solely on learning an optimal mechanism, which makes these problems are more or less reducible to a bandit problem \cite{zhu2022sample, cesa2014regret}.

In contrast, in a general mechanism design problem, preserving truthfulness becomes a significant difficulty.
When the principal begins with ambiguous knowledge, an improperly constructed mechanism can immediately encourage agents to lie or exert low effort, thereby corrupting the collected data, undermining subsequent learning processes. 
Thus, unlike prior literature, maintaining truthfulness throughout the entire learning trajectory is not merely desirable but essential, and this requirement is one of the central challenge we must address.

\subsubsection{Prediction with Expert Advice}

We note that our setup is a stochastic, label-efficient variant of the classical \textit{prediction with expert advice} problem.
The prediction with expert advice problem is fundamental in online learning \cite{cesa2006prediction}.
In the standard framework, the agents (often called ``experts'') can sequentially provides arbitrary and even adversarial signals, and the principal's objective is to implement an \textit{aggregation} algorithm that achieves sublinear regret.
A simplification is to assume agents behave stochastically (report signals according to a probability law), under which the aggregation regret can be significantly improved \cite{cesa2004generalization}.
The stochastic variant also has connections with other online learning problems such as online optimization \cite{agarwal2017second, cesa2007improved, gaillard2014second}, with extensions in contextual or non-stationary settings \cite{besbes2016optimization}.

In the practical setting, acquiring a true label might be expensive.
It may be only feasible to query the true label for a small portion of rounds. 
This is the label-efficient setting of prediction with expert advice \cite{helmbold1997some, cesa2005minimizing}.
Roughly speaking, the aggregation regret decreases as the inverse square root of the number of queries.
There also exist adaptive algorithms that achieve the same regret with far fewer queries in benign cases \cite{mitra2020adaptivity, castro2023adaptive}.

Compared to the standard framework, our setup deals with \textit{rational} experts who need proper incentives for nice behaviors, which (on the difficulty of response aggregation) lies between the adversarial and the stochastic setting.
The assumption on rationality brings additional considerations on the design of incentives, which is the main concern of our work.
Also, different from the standard or the label-efficient settings, in our model, the true signal is never revealed (or only revealed for a constant number of rounds), adding difficulty to distinguish poorly-performed experts.

\subsubsection{Information elicitation and peer prediction}

The field of \emph{information elicitation} studies the mechanism design task to incentivize honest feedback from untrusted but rational participants, generally via designing \emph{scoring rules} \cite{li2022optimization} as rewards or penalties for participants. Particularly, \emph{peer prediction} \cite{miller2005eliciting} studies the scenarios in which ground truth is unavailable for direct verification of collected reports, with applications in dataset acquisition and evaluation \cite{chen2020truthful,zheng2024proper}, crowdsourcing \cite{dasgupta2013crowdsourced}, and recent blockchain-based decentralized ecosystems \cite{wang2023uncertainty,zhao2024takes}. The general paradigm of peer prediction mechanisms is to ask multiple participants the same question and reward them according to the comparison among their reports. While peer prediction mechanisms provide elegant results on truthful Nash equilibria without requirement of ground-truth information, most of existing mechanism rely on strong unrealistic assumptions on know prior and observation matrices, forming a gap to practical usage in real-world systems.

For practical usage, researcher develop a series of works with relaxed assumptions or stronger incentive guarantees. Particularly, \cite{kong2024dominantly} develops a prior-free multi-task peer prediction mechanism with dominant-strategy incentive compatibility, with a lack of permutation-proof property that is impossible for any prior-free peer prediction mechanisms \cite{kong2019information}. Besides, \cite{shnayder2016informed} provides an \emph{informed truthful} mechanism ensuring that the truthful equilibrium achieves highest utilities among all Nash equilibria, and \cite{zhang2025stochastically} develops a mechanism with a \emph{stochastic dominance} property ensuring incentive compatibility even under non-linear utilities.

In our setting, we address the gap between existing prior-dependent designs and reality via acquiring the prior distribution by \emph{online learning}, with a multi-round adaptive mechanism that learns the distributional information during the process. Besides, we also explicitly consider the robustness property that ensures incentive guarantees under inaccurate knowledge, thus making the peer prediction framework applicable in realistic applications.

\section{Problem Formulation}
We consider the sequential mechanism design problem where a principal seeks to elicit truthful reports from rational agents.
The principal sequentially assigns $T$ prediction tasks to a group of $N$ rational agents.
Each task has a true label $Y_t \in \mathcal{Y}$, i.i.d. sampled from an unknown and stationary distribution $p_Y(\cdot)$.
Unless stated otherwise, this true label is not revealed to anyone, either the principal or the agents.
We let $\mathcal{Y}$ be finite to avoid mathematical complications.
In each round, each agent $i= 1, \dotsb, n$ independently studies the task, acquiring her own observation $X_{it}\in \mathcal{Y}$ with a constant cost $c$.
$X_{it}$ is generated according to the agent's skill $p_i(x \mid y)$, a stationary conditional probability law.
Each agent might know her own skill distribution $p_i$, but has no information about anyone else’s, and the principal initially knows none of them.
Aside from observation, agents also has an outside option of lazily reporting a random label without observing the label (shirking, does not incur cost $c$ as well).

After studying the task, agents produce their \textit{public reports} $Z_{it} \in \mathcal{Y}$ to the principal.
\blue{Reports are independent across agents conditional on their observations and all history up to $t$.}
We assume agents are \textit{risk-neutral} and \textit{myopic}.
Being risk-neutral means agents aim to maximize their expected reward, conditional on the public and private information they have.
Being myopic means agents only care about immediate reward in the current round but not future rewards.
Under these settings, we have rational agents who do not necessarily report their observations.
Instead, they would \textit{lie} (report $Z_{it} \neq X_{it}$) or be \textit{lazy} (report $Z_{it}$ without observation $X_{it}$) when they expect an advantage in doing so.
We denote the observation and report profile of all agents in a round by $\boldsymbol{X}_t$ and $\boldsymbol{Z}_t$.
Note that $p_Y$ and $p_i$ together defines a joint law $p_{\boldsymbol{X}}$ over $\boldsymbol{X} \in\mathcal{Y}^N$, and we later show that learning this $p_{\boldsymbol{X}}$ is crucial for optimal mechanisms.

Collecting the reports, the principal rewards each agent $i$ with an reward mechanism $R_{it}(\boldsymbol{Z}_1, \dotsb, \boldsymbol{Z}_t)$.
The reports can then be used for downstream decision-making tasks, such as aggregation.
Note that the reward mechanism is non-anticipating, meaning the mechanism can only decide on past and current but not future report profiles.



The principal aims to design the online reward mechanism $\boldsymbol{R} = (R_{it})_{i\in [N], t\in [T]}$ with three objectives:
\begin{itemize}
    \item \textbf{Truthfulness} (aka \textit{incentive-compatibility}): given all other agents act honestly, a agent would maximize her own expected utility when she works, obtains observations, and then reports honestly ($Z_{it} = X_{it}$).
    
    \item \textbf{Report quality:} the reward mechanism should incentivize the highest-quality reports, such that downstream decision-making tasks may achieve the optimal objective.
    
    \item \textbf{Cost-optimality:} maintaining truthfulness and data-quality, the principal minimizes its total expected payment to agents.
\end{itemize}

We now compare between our setup and typical modeling assumptions in the online learning and mechanism design literature.
Online learning mainly targets at minimizing cumulative decisional error, while treating all reports as truthful. 
Mechanism design, by contrast, centers on strategic incentives, but usually presumes agents’ type distributions are known or even common knowledge. 
These assumptions ease analysis, yet rarely hold in practice. 
Our model pursues both goals at once and relaxes the assumptions from both fields. 


\begin{remark}
    \label{rmk: type report game}
    The proposed model can be further generalized to match the classical model in the mechanism design literature.
    Here, agents' observations are their own \textit{types} of the round.
    Assume in each round agents' types $\boldsymbol{X}$ are sampled from a stationary joint distribution.
    Agents then report their types (not necessarily truthful) $\boldsymbol{Z}$ to the principal and receive rewards.
    All of our analysis and algorithms applies to this generalized setting.
    In fact, our analysis does not make use of $p_Y$ and $p_i$ and focus exclusively on the joint law $p_{\boldsymbol{X}}$.
\end{remark}


\begin{remark}
    We note that each agent's utility is linear in only her own reward ($u_i = r_i$).
    In the most general setting of mechanism design, an agent's utility is a function of all agents' types and the resource allocation from principal: $u_i : \mathcal{Y}^N \times \mathcal{R} \to \mathbb{R}$, where $\mathcal{R}$ is the space of the principal's resource allocation decisions.
    For example, in contracts, utility depends on the agent's own type $x_i$ and the principal's payment $r_i$ ($u_i = f(x_i, r_i)$).
    In auctions, utility depends on whether or not the agent gets the item (with probability $p$), her valuation of the item (type $x_i$), and her requested payment $r_i$ ($u_i = p \cdot x_i - r_i$).
    Our analysis may potentially be generalized to the cases when the agents' utility function are not necessarily linear but still common knowledge.
\end{remark}

We conclude this section by revealing the importance of truthfulness.
After all, the principal's top objectives in outsourcing tasks are to improve data quality and lower costs.
Truthfulness, as a mechanism design objective, might not be of interest if the mechanism that reaches the highest quality or the lowest cost promotes dishonest behaviors.
From the revelation principle \cite{myerson1979incentive}, we know that truthfulness is ``free'', in the sense that we don't lose anything by focusing only on mechanisms with their incentive-compatible Nash equilibria.
For the same reason, it suffices to consider the setting where the true label $Y$, observation $X$, and report $Z$ all belong to the same space $\mathcal{Y}$.
The following proposition actually proves a stronger result, showing that truthfulness is not only free, but in fact almost \textit{necessary} for optimal downstream decision-making. 

\blue{Consider a decision-making problem with a finite action set $\mathcal{A}$ and bounded objective function $u:\mathcal{Y\times A \to \mathbb{R}}$ in a fixed round $t$.
Define the optimal value under a report profile $\boldsymbol{Z}$ by $V(\boldsymbol{Z}) = \max_\delta \mathbb{E}[u(Y_t, \delta(\boldsymbol{Z})]$, where the supremum ranges over all the deterministic decision rules $\delta$ measurable with respect to $\boldsymbol{Z}$.
Given a joint law $p_{Y, \boldsymbol{X}}$ on ground truth and agents' observations, the report profile is uniquely decided by the strategy profile of all agents.}

\begin{proposition}[permutative strategy achieves maximal report quality]
\blue{
\label{prop: truthfulness is necessary}
Consider a fixed round $t$ in the sequential mechanism design problem. Let the participating agents fix a strategy profile. Conditional on any public history up to $t-1$, the following statements are equivalent:
\begin{itemize}
    \item The strategy profile produces a report $\boldsymbol{Z}_t$ that attains the maximal quality across all possible $\boldsymbol{Z}$:
    \begin{align*}
        V(\boldsymbol{Z}_t) \geq V(\boldsymbol{Z}), \quad \text{for every joint law $p_{Y_t, \boldsymbol{X}_t}$ and bounded decision problem $(\mathcal{A}, u)$}.
    \end{align*}
    \item All agents observes and adopt a permutative reporting strategy: for each agent $i$, there exists a bijection $\pi_i:\mathcal{Y} \to \mathcal{Y}$ such that \(\mathbb{P}(Z_{it} = \pi_i(X_{it})) = 1.\)
\end{itemize}}

\end{proposition}


Proposition \ref{prop: truthfulness is necessary} is derived from Blackwell's informativeness theorem \cite{blackwell1953equivalent}.
Each round, the true label, observations, reports, and decisions form a Markov chain: $Y_t \to \boldsymbol{X}_t \to \boldsymbol{Z}_t \to A_t$.
An intuition is that optimal decision-making requires maximal \textit{information} from upstream.
Due to the data processing inequality \cite{cover1999elements}, information never increases going downstream; therefore, the best approach is to preserve as much information as possible at each link.
Truthful reporting preserves full information at the link $\boldsymbol{X}_t \to \boldsymbol{Z}_t$, which allows for optimal subsequent decisions.
Any lies that mixes up the labels would erode information.
Lazy behavior also produces less information than observing.
Aside from truthfulness, an alternative case that preserves full information is when agents permutes the symbol before reporting.
However, such a case is unrealistic in practical settings, as the principal would need to know each agent's permutation rule to reverse the encoding and uncover the true observation.
Therefore, this proposition essentially shows that eliciting truthfulness is the practical way to achieve maximal report quality.


Proposition \ref{prop: truthfulness is necessary} also implies that truthfulness is crucial not only for report quality but also cost-optimality.
This is because the design of cost-optimal mechanisms is itself a downstream decision-making task, and truthfulness ensures that optimal mechanisms can be achieved.

\section{Mechanism Design without Common Knowledge}
\label{sec: robust mechanism design}

In our work, a central relaxation of modeling is that we don't assume prior distribution of labels $p_Y$ or agents' skills $p_i$ are known by agents or the principal.
The principal's attempts to maintain truthfulness with unknown or inaccurate estimation of such knowledge.
In this section, we focus on distributionally robust mechanisms, which aim to incentivize truthful behavior under knowledge ambiguity.

\subsection{Optimal Single-round Mechanism Design}

We begin with the analysis of optimal mechanism design with \textit{known} $p_Y$ and $p_i$ within a single round.
When there are no true labels available, we apply the principles of peer prediction, which is to use other agents' report to verify a focal agent's report.
The delicacy lies in the careful design of the reward mechanism to ensure that truthfulness is a Nash Equilibrium.


We start with the two-agent mechanism. 
With a focal agent $i$ and a reference agent $j$, the optimal two-agent mechanism design problem could be formulated as a linear programming problem. 
The objective is to minimize expected reward to agents, and the constraints are the desired properties of the mechanism.
(we hide subscript $t$ for simplicity.)
\begin{equation}
    \label{eq: two-agent mechanism design}
    \begin{split}
        \min_{R_i} \quad& \mathbb{E}[R_i (X_i, X_j)] \\ 
        \textrm{s.t.} 
        \quad& \mathbb{E}[R_i (X_i, X_j)\mid X_i] \geq c \\
        \quad& \mathbb{E}[R_i (Z_i, X_j)\mid X_i] \leq c, \quad \forall Z_i \neq X_i\\
        \quad& \mathbb{E}[R_i (Z_i, X_j)] \leq 0, \quad \forall Z_i \indep X_i, X_j 
    \end{split}
\end{equation}

The expectation is taken under the joint probability law $p_{\boldsymbol{X}}$ induced by $p_Y, p_i,$ and $p_j$. 
In fact, we would show that $p_{\boldsymbol{X}}$ is the core parameter of the mechanism design problem.
\blue{The first constraint enforces the \textit{individual rationality} property, ensuring an observant, truthful agent obtains non-negative expected reward.
The second constraint states that any lying behavior receives a non-positive expected reward.
Together with the first constraint, they imply \textit{truthfulness} property, but in a stronger form whereas truthfulness provides at least $c$ and lying provides at most $c$.
The final constraint implements the \textit{no-free-lunch} property, meaning a lazy agent cannot get positive expected reward.}
We introduce the individual rationality and the no-free-lunch constraint as normalization to prevent arbitrary decrease of the objective function under affine transformations to rewards. (For risk-neutral agents, affine transformations on reward do not affect utility ordering and strategic behavior.)
\blue{In the following, we call a mechanism \textit{feasible} for player $i$ if it satisfies the three constraints in Eq.\eqref{eq: LP formation for two-agent mechanism design}.}

\begin{example}[Image Labeling]
    \label{eg: image labeling}
    Suppose there are two types of images $\mathcal{Y} = \{\texttt{Cat}, \texttt{Tiger}\}$, abbreviated with $C$ and $T$ respectively.
    We further assume that the prior distribution of the image types is balanced, i.e., $p_Y(C) = p_Y(T) = 0.5$.
    For each image with an unknown true label $Y \in \mathcal{Y}$, the principal would like to let two agents $1, 2$ individually observe it, and truthfully report their observations $X_1, X_2$ to label that image. 
    Assume that both agents are $90\%$ accurate: $p_i(C \mid C) = p_i(T\mid T) = 0.9$, and $p_i(T \mid C) = p_i(C \mid T) = 0.1$.
    
    The principal designs the reward mechanism as follows: both agents receive $1$ reward if their reports $Z_1, Z_2$ agree with each other, and receive $-1$ otherwise, i.e. $R_{\{1,2\}}(Z_1, Z_2) = 2\cdot \mathbf{1}_{[Z_1=Z_2]} - 1$. and assume that observation incurs cost $c=0.1$.
    Now we assume that agent $2$ observes and report honestly, and analyze the incentive of agent $1$.

    Suppose agent $1$ observes \texttt{Cat}, Bayes' formula gives that $\mathbb{P}(X_2 = C\mid X_1 = C) = 0.82$ and $\mathbb{P}(X_2 = C\mid X_1 = C) = 0.18$.
    On the other hand, if agent $1$ does not pay the effort to toss the coin, then her Bayesian belief on agent $2$'s observation (and report) is $P(X_2=C)=P(X_2=T)=0.5$.
    She can then work out expected reward under truthful, lying, and lazy strategies: truthful ($0.54$) > lazy ($0$) > lying ($-0.74$).
    Hence truthful behavior is desired.
    In fact, this simple mechanism is a feasible solution to Eq.\eqref{eq: two-agent mechanism design}.
    Intuitively, after the focal agent's observation, her posterior probability on the other agent's observing the same label is higher than observing a different label. 
    Therefore, it is preferable to report whatever you observe in the first place.
    Such a mechanism is called \textit{peer prediction} \cite{miller2005eliciting}, originated the fact that rational agents always tries to predict their peers' observations before action.
\end{example}

Define the belief matrix $\mathbf{B}$ where $\mathbf{B}_{xx'} = \mathbb{P}(X_j = x'\mid X_i = x)$, and reward matrix $\mathbf{R}$ where $\mathbf{R}_{xx'} = R_i(x, x')$.
We also let $\mathbf{d}$ be a column vector display of $j$'s observation distribution: $\mathbf{d}_x = \mathbb{P}(X_j = x)$.
Then we can reformulate \eqref{eq: two-agent mechanism design} into the following equivalent problem:
\begin{equation}
    \label{eq: LP formation for two-agent mechanism design}
    \begin{split}
        \min_{\mathbf{R}} \quad& \sum_{x,x'}\mathbb{P}(X_i = x)  \mathbf{B}_{xx'}\mathbf{R}_{xx'} \\ 
        \textrm{s.t.} 
        \quad& (\mathbf{B}\mathbf{R}^\intercal)_{xx} \geq c, \quad \forall x\in \mathcal{Y}\\
        \quad& (\mathbf{B}\mathbf{R}^\intercal)_{xy} \leq c, \quad \forall x\neq y \in \mathcal{Y}\\
        \quad&  \mathbf{R}\mathbf{d} \leq \mathbf{0}
    \end{split}
\end{equation}

Note that the second constraint only enforces \textit{pure} lying strategies to incur non-positive reward, nevertheless, it is sufficient since any mixed strategy is a convex combination of pure strategies and its corresponding reward is also a convex combination with the same weights.
The final constraint assumes all entries of $\mathbf{R}\mathbf{d}$ are negative, thus making sure any report strategy without observing incurs a non-positive reward.


\begin{theorem}[Optimal cost of a two–agent peer-prediction mechanism]
\label{thm: optimal cost of 2-agent peer prediction}
\blue{Assume that $\mathbb{P}(X_i = x) > 0$ for all $x\in \mathcal{Y}$, and the belief matrix $\mathbf{B}$ is invertible.}
Then the linear program \eqref{eq: two-agent mechanism design} (equivalently, its matrix form \eqref{eq: LP formation for two-agent mechanism design}) is feasible.  
Moreover, its optimal value equals the labor cost~$c$; that is,
\[
\min_{R_i\,\text{satisfying}\;\eqref{eq: two-agent mechanism design}}
\;  \mathbb{E}[R_i (X_i, X_j)] 
\;=\; c .
\]
In addition, at optimality, the first constraint in \eqref{eq: two-agent mechanism design} is binding.
\end{theorem}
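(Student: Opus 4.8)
The plan is to reduce to the matrix form \eqref{eq: LP formation for two-agent mechanism design} and argue by a matching lower and upper bound. First I would rewrite the objective using the diagonal of $\mathbf{B}\mathbf{R}^\intercal$: since $(\mathbf{B}\mathbf{R}^\intercal)_{xx}=\sum_{x'}\mathbf{B}_{xx'}\mathbf{R}_{xx'}$, the objective equals $\sum_x \mathbb{P}(X_i=x)\,(\mathbf{B}\mathbf{R}^\intercal)_{xx}$. The individual-rationality constraints force every diagonal entry to be at least $c$, and the weights $\mathbb{P}(X_i=x)$ are nonnegative and sum to one, so any feasible $\mathbf{R}$ has objective at least $c$. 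This gives the lower bound and simultaneously shows that equality forces $(\mathbf{B}\mathbf{R}^\intercal)_{xx}=c$ for every $x$ carrying positive probability, which is the binding-constraint claim.

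For the matching upper bound I would construct an explicit feasible mechanism, using invertibility of $\mathbf{B}$. I treat $\mathbf{M}:=\mathbf{B}\mathbf{R}^\intercal$ as the free variable: any target $\mathbf{M}$ is realized by $\mathbf{R}=\mathbf{M}^\intercal\mathbf{B}^{-\intercal}$. I would choose $\mathbf{M}$ with diagonal exactly $c$ (making IR binding) and off-diagonal entries at most $c$ (truthfulness), leaving the off-diagonal entries as free parameters to be pinned down by the remaining constraint. With such an $\mathbf{M}$ the objective is automatically $\sum_x \mathbb{P}(X_i=x)\,c=c$, matching the lower bound.

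The crux is the no-free-lunch constraint $\mathbf{R}\mathbf{d}\le \mathbf{0}$, which is where both hypotheses enter. Here I would use the consistency identity $\mathbf{d}=\mathbf{B}^\intercal\mathbf{a}$, where $\mathbf{a}_x=\mathbb{P}(X_i=x)$; this is just the law of total probability relating the marginal of $X_j$ to that of $X_i$ through $\mathbf{B}$. Consequently $\mathbf{B}^{-\intercal}\mathbf{d}=\mathbf{a}$ and $\mathbf{R}\mathbf{d}=\mathbf{M}^\intercal\mathbf{a}$, whose $z$-th coordinate is $c\,\mathbf{a}_z+\sum_{x\ne z}\mathbf{a}_x\mathbf{M}_{xz}$. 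Because the hypothesis forbids a point mass, for each $z$ there is some $x\ne z$ with $\mathbf{a}_x>0$, so I can drive that single off-diagonal entry sufficiently negative (still $\le c$, hence feasible for truthfulness) to make the coordinate nonpositive; different columns use disjoint entries, so no conflicts arise. This step is the main obstacle, both because it is where the three families of constraints must be reconciled and because it is the only place the invertibility and non-degeneracy assumptions are genuinely needed.

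Finally I would assemble the pieces: the constructed $\mathbf{R}$ is feasible and attains $c$, so the program is feasible with optimal value exactly $c$; and the lower-bound argument shows every optimizer must saturate the IR constraints on the support of $X_i$, completing the binding claim. The one point to flag carefully is labels $x$ with $\mathbb{P}(X_i=x)=0$, for which the diagonal constraint need not bind; under the stated non-degeneracy this is the only caveat to the assertion that the first constraint is binding.
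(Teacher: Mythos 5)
Your proposal is correct and follows essentially the same route as the paper: pass to the matrix form, treat $\mathbf{M}=\mathbf{B}\mathbf{R}^\intercal$ as the design variable via invertibility, use the identity $\mathbf{d}=\mathbf{B}^\intercal\mathbf{a}$ to turn the no-free-lunch constraint into $\mathbf{M}^\intercal\mathbf{a}\le\mathbf{0}$, set the diagonal to $c$, and make off-diagonal entries negative enough (the paper uses the uniform choice $-c\gamma/(1-\gamma)$ where you drive one entry per column; both work since those entries are unbounded below). Your caveat that the IR constraint need only bind on labels with $\mathbb{P}(X_i=x)>0$ is a valid and slightly more careful reading of the binding-constraint claim than the paper's own argument.
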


The tight result on the objective function is in the spirit of the classical Cr\'emer-McLean mechanism \cite{cremer1988full}, which can extract full surplus from the agents when type distributions are common knowledge.
The conditions are satisfied for ``almost all'' $\mathbf{B}$ and $\mathbf{d}$.

\begin{example}[Optimal Mechanism in Image Labeling]
    \label{eg: optimal mechanism in image labeling}
    Continuing the image labeling example from Example \ref{eg: image labeling}, we show that the optimal mechanism pays both agents the observation cost $c=0.1$ in expectation, as a demonstration of Theorem \ref{thm: optimal cost of 2-agent peer prediction}.
    The optimal mechanism can be acquired by solving Eq.\eqref{eq: LP formation for two-agent mechanism design}.
    Here, we first compute the belief matrix and agent $2$'s observation distribution.
    \[
        \mathbf{B} = 
        \begin{bmatrix}
            0.82 & 0.18 \\
            0.18 & 0.82
        \end{bmatrix},\quad
        \mathbf{d} = 
        \begin{bmatrix}
            0.5 \\
            0.5
        \end{bmatrix}.
    \]
    Solving the linear program would give the following mechanism: both agents receive $5/32$ reward if their reports agree, and receive $-5/32$ otherwise.

    We now show that the mechanism satisfies all constraints and is cost-optimal.
    First, no-free-lunch is satisfied since when a agent reports without observation, no matter what strategy she follows, expected reward is always $0.5 \times 5/32 - 0.5 \times 5/32= 0$ (since the other agent observes both labels with equal probability).
    When she lies, expected reward is $(\mathbf{B}\mathbf{R}^\intercal)_{xy} = -0.1$.
    When she is truthful, she gets $\sum_{x,x'}\mathbb{P}(X_i = x)  \mathbf{B}_{xx'}\mathbf{R}_{xx'} = 0.1$ in reward, exactly equal to her observation cost.
    The optimal mechanism extracts full surplus from agents.
\end{example}

\begin{remark}
    From Lemma 1 of \cite{radanovic2013robust}, it is known that for any mechanism with more than two agents with a truthful Bayesian Nash Equilibrium, it is possible to construct a 2-agent mechanism (with one focal and one reference agent) that achieves a truthful Bayesian Nash Equilibrium with the same expected payment.
    This lemma narrows our attention to reward mechanisms with only two agents.
    In application, the reference agent can be randomly picked to avoid collusion.
\end{remark}

\subsection{Distributionally Robust Mechanism Design with Inaccurate Knowledge}

Now we move on to the scenario where agents and the principal only have \textit{inaccurate} knowledge of $p_{\boldsymbol{X}}$.
Suppose they only know that the true distributions agents' observations $\boldsymbol{X}$ belong to some ambiguity sets $p\in \mathcal{P_{\boldsymbol{X}}}$.
From the principal's perspective, the challenge would be to design a \textit{distributionally robust} mechanism, such that for any possible realizations within the ambiguity set, the truthfulness constraint would still be met.
Its objective now becomes minimizing the expected payment in the worst case.

The notion of (distributionally) robust mechanisms has been studied in \cite{bergemann2005robust, koccyiugit2020distributionally}.
We focus on a specific family that is cheap enough and guarantees truthfulness under ambiguity, but we don't pursue exact worst-case optimality.
This greatly reduces computational complexity, and turns out to be sufficient for subsequent adaptive mechanism design.



We begin with a sensitivity analysis of \eqref{eq: two-agent mechanism design} with respect to shifts in probability law.
Assume the principal obtain a design by solving \eqref{eq: two-agent mechanism design} according to an erroneous probability law $p$, but the true law is $p^*$.
According to Theorem \ref{thm: optimal cost of 2-agent peer prediction}, at optimality, we have a binding constraint in \eqref{eq: two-agent mechanism design}.
Therefore, any slight deviation from $p$ would lead to the potential violation of constraints.
To hedge against violations, following the idea of \cite{zhao2024takes}, the principal could add \textit{safety margins} on the constraints.
Instead of only requiring the expected reward of truthful behaviors to be greater than $c$, the principal could let it be no less than $c+\delta$, where $\delta >0$ is the margin width.
In this case, even if the expected reward of truthful behaviors might decrease under $p'$, as long as the decrease is no more than $\delta$, the individual rationality property is still preserved.
Under this idea, we look at a variant of the mechanism design problem with margin $\delta$:
\begin{equation}
    \label{eq: robust two-agent mechanism design}
    \begin{split}
        \min_{R_i} \quad& \kappa \\ 
        \textrm{s.t.} 
        \quad& |R_i| \leq \kappa \\
        \quad& \mathbb{E}_p[R_i (X_i, X_j)\mid X_i] \geq c+\delta \\
        \quad& \mathbb{E}_p[R_i (Z_i, X_j)\mid X_i] \leq c-\delta, \quad \forall Z_i \neq X_i\\
        \quad& \mathbb{E}_p[R_i (Z_i, X_j)] \leq -\delta, \quad \forall Z_i \indep X_i, X_j 
    \end{split}
\end{equation}


There are two curious features to this problem variant.
First, the margin $\delta$ added to the constraints protects the principal against inaccurate knowledge at an additional cost of at least $\delta$, since the expected payment under each possible observation is at least $c+\delta$.
This is a lower bound on the cost of robustness.
Increasing $\delta$ means the mechanism from \eqref{eq: robust two-agent mechanism design} is robust for a higher degree of inaccuracies, but it would also cost more.
Pursuing optimality, the principal want to find the lowest $\delta$ just enough to guarantee constraints are still satisfied under $p^*$.
It would be crucial to understand the connection between the degree of misspecification and the minimal required margin $\delta$.
Second, the objective function alters from minimizing expected payment to minimizing worst-case payment.
The reason for this change is to limit the sensitivity of expected payment to worst case probability law deviation from $p^*$ to $p$.
Following the ``compactness'' criteria discussed in \cite{zhao2024takes}, the outcome incurring the highest absolute payment has the highest sensitivity to probability deviation, hence a large $\delta/\kappa$ ratio would ensure a high robustness to such deviations. Therefore, for a fixed $\delta$, we would like to lower $\kappa$ as much as possible.

\begin{theorem}[Robustness to distributional misspecification]
\label{thm: robustness}
\blue{Let $p$ and $p^*$ be two joint distributions of $(X_i, X_j) \in \mathcal{Y}$.
Let $R_i: \mathcal{Y}\times \mathcal{Y} \to \mathbb{R}$ be a mechanism feasible for the margin-$\delta$ constraints (Eq.\eqref{eq: robust two-agent mechanism design}) under $p$ and $\delta > 0$.
Assume that $p(X_i = x) > 0$ and $p^*(X_i = x) > 0$ for all $x \in \mathcal{Y}$.
}
 
Denote the total variation distance by $\tv(p,p^*)$. Suppose that  
\begin{align}
    \label{eq: tv condition}
    \tv\left(p(\cdot \mid X_i = x),p^*(\cdot \mid X_i = x)\right) \leq \frac{\delta}{2\kappa}, \quad \forall x\in \mathcal{Y} \cup \{\varnothing\},
\end{align}
where $\kappa = \max |R_i|$ and we use $p(\cdot \mid X_i = \varnothing$) to denote unconditional laws.
Then the same mechanism $R_i$ produced remains feasible for the original problem \eqref{eq: two-agent mechanism design} when evaluated under $p^*$.

\end{theorem}

 


Theorem \ref{thm: robustness} proves the robustness of the margin-$\delta$ mechanism under inaccurate distributional knowledge.
This suggests it is possible to design a mechanism that guarantees truthfulness for all distributions in an ambiguity set centered at a distribution estimation $p$: $\mathcal{P} = \{p'| p' \,\text{satisfying}\;\eqref{eq: tv condition}\}$.
However, notice that the objective $\kappa$ itself is influenced by the margin $\delta$ we choose and the distribution $p$ used.
Actually, when we increase $\delta$, the minimal achievable $\kappa$ would also increase, hence the increment in the provided robustness  (i.e. $\delta/\kappa$) may diminish.
Therefore, one cannot infinitely increase $\delta$ hoping for unlimited robustness.
In the following, we first provide an objective upper bound on \eqref{eq: robust two-agent mechanism design}, then provide the upper and lower bounds on the amount of robustness that can be provided by \eqref{eq: robust two-agent mechanism design}.




\begin{theorem}[Bounds on payments of robust mechanism]
    \label{thm: bounds on payments of robust mechanism}
    \blue{Fix a design distribution $p$ and margin $\delta$. Assume that $p(X_i = x) > 0$ for all $x\in \mathcal{Y}$, and the belief matrix $\mathbf{B}$ induced by $p$ is invertible.}
    Let $\kappa^*$ be the optimal value of \eqref{eq: robust two-agent mechanism design} under $p$ and $\delta$.
    Then  we have
    \begin{itemize}
        \item Worst-case payment:
            \begin{equation}
                \label{eq: kappa upper bound}
                c+\delta \; \leq\; \;\kappa^* \;\leq\; \|\mathbf{B}^{-1}\|_2 \cdot \left(c\cdot\frac{\gamma|\mathcal{Y}| + 1}{1-\gamma} + \delta \cdot\frac{(1+\gamma)|\mathcal{Y}|+2}{1-\gamma}\right).
            \end{equation}
            where $\gamma = \max_x p(X_i = x) < 1$.
        \item Expected payment: a feasible solution $(\kappa, R_i)$ could be constructed that satisfies the bound \eqref{eq: kappa upper bound}, while ensuring the expected reward in the truthful equilibrium under $p$ is $c+\delta$.
    \end{itemize}
\end{theorem}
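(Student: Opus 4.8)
The plan is to treat the two sides of the worst-case bound separately: the lower bound is immediate, while the upper bound comes from an explicit feasible construction whose largest entry I then control spectrally. For $\kappa^*\ge c+\delta$, I would note that the magnitude constraint $|R_i|\le\kappa$ forces $\mathbb{E}_p[R_i(X_i,X_j)\mid X_i]\le\kappa$, which together with the individual-rationality constraint $\mathbb{E}_p[R_i(X_i,X_j)\mid X_i]\ge c+\delta$ gives $\kappa\ge c+\delta$ at once.

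For the upper bound I would work in the matrix form and write $\mathbf{M}=\mathbf{B}\mathbf{R}^\intercal$, so that by invertibility of $\mathbf{B}$ any target $\mathbf{M}$ is realizable via $\mathbf{R}^\intercal=\mathbf{B}^{-1}\mathbf{M}$. I would take $\mathbf{M}$ symmetric with diagonal entries exactly $c+\delta$ (making IR tight) and a single constant off-diagonal value $m^*$, i.e.\ $\mathbf{M}=(c+\delta-m^*)\mathbf{I}+m^*\mathbf{J}$ with $\mathbf{J}=\mathbf{1}\mathbf{1}^\intercal$. The key simplification, and the step I expect to carry the argument, is the identity $\mathbf{R}\mathbf{d}=\mathbf{M}\mathbf{a}$, where $\mathbf{a}_x=\mathbb{P}(X_i=x)$: since $\mathbf{d}=\mathbf{B}^\intercal\mathbf{a}$, the factor $\mathbf{B}$ cancels and the no-free-lunch constraint $\mathbf{R}\mathbf{d}\le-\delta\mathbf{1}$ becomes a condition on $\mathbf{M}$ alone, free of the possibly ill-conditioned belief matrix. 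Evaluating $(\mathbf{M}\mathbf{a})_x=(c+\delta)\mathbf{a}_x+m^*(1-\mathbf{a}_x)$, I would observe this is increasing in $\mathbf{a}_x$ (as $m^*<0$), so the binding case is at $\mathbf{a}_x=\gamma=\max_x\mathbf{a}_x$, pinning down $m^*=-\frac{\delta+(c+\delta)\gamma}{1-\gamma}$. A short check then confirms the remaining constraints: truthfulness needs $m^*\le c-\delta$, which reduces to the always-true inequality $-2\delta\gamma\le c$, and the monotonicity in $\mathbf{a}_x$ guarantees no-free-lunch holds for every $x$, not only at $\gamma$.

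With a feasible $\mathbf{R}$ in hand, I would bound its largest-magnitude entry by its spectral norm, $\kappa=\max_{x,x'}|\mathbf{R}_{xx'}|\le\|\mathbf{R}\|_2\le\|\mathbf{B}^{-1}\|_2\,\|\mathbf{M}\|_2$, using $\|(\mathbf{B}^\intercal)^{-1}\|_2=\|\mathbf{B}^{-1}\|_2$. Since $\mathbf{M}$ is a multiple of $\mathbf{I}$ plus a multiple of $\mathbf{J}$, its eigenvalues are explicit: $\lambda_1=c+\delta+m^*(|\mathcal{Y}|-1)$ on $\mathbf{1}$ and $\lambda_2=c+\delta-m^*$ on $\mathbf{1}^\perp$, which simplify to $\lambda_2=\frac{c+2\delta}{1-\gamma}$ and $\lambda_1=\frac{(c+2\delta)-B}{1-\gamma}$ with $B=|\mathcal{Y}|\bigl(\delta+\gamma(c+\delta)\bigr)$. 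A triangle-inequality bound $|\lambda_1|\le\frac{(c+2\delta)+B}{1-\gamma}$ then dominates $|\lambda_2|$, and I would verify by direct expansion that $(c+2\delta)+B$ equals the stated numerator $c(\gamma|\mathcal{Y}|+1)+\delta((1+\gamma)|\mathcal{Y}|+2)$, yielding exactly the claimed bound on $\kappa^*$.

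Finally, for the expected-payment claim I would reuse the same construction: the truthful-equilibrium payment is $\sum_x\mathbf{a}_x(\mathbf{B}\mathbf{R}^\intercal)_{xx}=\sum_x\mathbf{a}_x\mathbf{M}_{xx}=(c+\delta)\sum_x\mathbf{a}_x=c+\delta$, so this solution simultaneously meets the $\kappa$-bound and attains expected payment $c+\delta$; that $c+\delta$ is minimal follows because each conditional payment $(\mathbf{B}\mathbf{R}^\intercal)_{xx}\ge c+\delta$ by IR, forcing the $\mathbf{a}$-weighted average to be at least $c+\delta$ for any feasible mechanism. The one place I would be careful is the sign and monotonicity bookkeeping around $m^*$, since the construction hinges on $m^*$ being negative enough for no-free-lunch yet still below the truthfulness ceiling $c-\delta$; everything else is routine once the identity $\mathbf{R}\mathbf{d}=\mathbf{M}\mathbf{a}$ is established.
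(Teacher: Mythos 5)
Your proof is correct and follows essentially the same route as the paper: both pass to the intermediate matrix $\mathbf{M}=\mathbf{B}\mathbf{R}^\intercal$, construct a feasible $\mathbf{M}$ of the form $a\mathbf{I}+b\mathbf{J}$ with diagonal $c+\delta$ and off-diagonal $-\frac{\delta+\gamma(c+\delta)}{1-\gamma}$, and bound $\kappa$ via $\|\mathbf{B}^{-1}\|_2\|\mathbf{M}\|_2$ using the explicit eigenvalues. The only cosmetic difference is that the paper obtains this matrix by superposing feasible solutions of $\lp(p,c,0)$ and $\lp(p,0,1)$, whereas you build the combined matrix in one shot (and in fact verify the feasibility constraints somewhat more explicitly).
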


The essential insight from Theorem \ref{thm: bounds on payments of robust mechanism} is that \eqref{eq: robust two-agent mechanism design} is a linear programming problem.
Hence, if we consider $\delta$ as a perturbation on the constraints, the shift in the objective itself should also linearly relate to the perturbation.
In other words, the sensitivity of both worst-case and expected payment to perturbation $\delta$ is $O(\delta)$.

\begin{corollary}[Bounds on achievable robustness]
\label{col: robustness-floor}
Let the (possibly inaccurate) design distribution be $p$.
Under the conditions of Theorem \ref{thm: bounds on payments of robust mechanism}, for any margin parameter $\delta > 0$, there exists a mechanism $R_i$ feasible for
\eqref{eq: robust two-agent mechanism design} such that
\begin{equation}
    \delta/2\kappa\;\ge\; \frac{\delta \cdot (1-\gamma)}{2\|\mathbf{B}^{-1}\|_2 (c\cdot (\gamma|\mathcal{Y}| + 1) + \delta \cdot ((1+\gamma)|\mathcal{Y}|+2))},
\end{equation}
where $\kappa=\max |R_i|$ and $\gamma = \max_x \mathbb{P}(X_i = x) < 1$.

Moreover, we have $\delta/2\kappa \leq 1/2$ for all $\delta$ and corresponding feasible mechanism $R_i$, meaning no mechanism has its robustness be more than $1/2$.
\end{corollary}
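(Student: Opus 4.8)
The plan is to derive both directions directly from Theorem~\ref{thm: bounds on payments of robust mechanism} together with the constraint structure of \eqref{eq: robust two-agent mechanism design}; no genuinely new work is needed, since the analytic effort has already been spent in the sensitivity analysis behind that theorem. Recall that the quantity $\delta/2\kappa$ is precisely the robustness radius appearing in the total-variation condition \eqref{eq: tv condition} of Theorem~\ref{thm: robustness}, so the corollary is really a statement about how large and how small this radius can be made.

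For the lower bound on achievable robustness, I would simply invoke the constructive upper bound on $\kappa$ furnished by Theorem~\ref{thm: bounds on payments of robust mechanism}. Taking the mechanism whose worst-case payment satisfies $\kappa \le \|\mathbf{B}^{-1}\|_2 \bigl(c(\gamma|\mathcal{Y}|+1) + \delta((1+\gamma)|\mathcal{Y}|+2)\bigr)/(1-\gamma)$, and observing that the map $\kappa \mapsto \delta/2\kappa$ is decreasing, I would substitute this upper bound for $\kappa$ in the denominator to obtain a lower bound on the ratio. A one-line rearrangement then reproduces exactly the claimed expression. This step is purely algebraic, and the only point requiring care is that the bound must be applied to a mechanism that is genuinely feasible, so that the conclusion is a valid existence statement; this is guaranteed because Theorem~\ref{thm: bounds on payments of robust mechanism} constructs such a mechanism explicitly.

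For the upper bound $\delta/2\kappa \le 1/2$, the key observation is that every feasible mechanism must obey the individual rationality constraint $\mathbb{E}_p[R_i(X_i,X_j)\mid X_i = x] \ge c+\delta$ for each realized $x$. Since this conditional expectation is $\sum_{x'} p(x'\mid x)\,\mathbf{R}_{xx'}$, a convex combination of entries each bounded in absolute value by $\kappa$ via the box constraint $|R_i|\le\kappa$, it is itself at most $\kappa$. Chaining the two inequalities yields $\kappa \ge c+\delta \ge \delta$, which is equivalent to $\delta/2\kappa \le 1/2$. I emphasize that this argument uses only feasibility and the box constraint, so it applies to \emph{every} feasible mechanism rather than merely the optimum, which is exactly the universal quantification the statement demands. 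I do not anticipate a real obstacle in either direction; the lower bound is a mechanical rearrangement of a result already in hand, and the upper bound is an elementary averaging argument.
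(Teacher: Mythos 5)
Your proposal is correct and follows essentially the same route as the paper: the lower bound is obtained by substituting the constructive upper bound on $\kappa$ from Theorem~\ref{thm: bounds on payments of robust mechanism} into $\delta/2\kappa$, and the cap $\delta/2\kappa \le 1/2$ follows from the lower bound $\kappa \ge c+\delta$ in that same theorem, which the paper derives from the binding diagonal constraint exactly as your averaging argument does.
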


Corollary \ref{col: robustness-floor} is obtained by simply replacing $\kappa$ in $\delta/2\kappa$ with its upper bound in Eq.\eqref{eq: kappa upper bound}.
It provides the minimum robustness achieved by solving \eqref{eq: robust two-agent mechanism design}.
Notice that increasing $\delta$ provides more robustness, but comes at increased cost.
Also, the marginal robustness from increasing $\delta$ would diminish: as $\delta \to \infty$, the lower bound is at most $(1-\gamma)/2\|\mathbf{B}^{-1}\|((1+\gamma)|\mathcal{Y}|+2)$.
When $\delta\to 0$, the robustness provided scales linearly with $\delta$.
In addition, from the lower bound on the worst case payments, we have an upper bound $1/2$ on the maximum robustness that can be provided from our scheme.
This result does not contradict the impossibility result (Theorem 1) shown in \cite{radanovic2013robust}, which proves that no mechanism can guarantee truthfulness for \textit{all} distributions.

Our theorem shows that while an all-round strictly dominantly truthful mechanism does not exist, it is still possible to design a mechanism that covers distributions that are relatively close to a design distribution $p$.
For distributions distanced too far away, the impossibility result still holds.
In other words, our scheme is enough to cover inaccuracies that are not too extreme.
For that reason, Corollary \ref{col: robustness-floor} is already sufficient for our purposes of adaptive mechanism design.
If the principal starts with a distribution estimation not too off (such that the total variation distance condition is satisfied with the robustness floor), then it is possible to maintain truthfulness and refine estimation at the same time.
The principal would first apply a mechanism that provides abundant robustness for its initial ambiguity set.
As more data is obtained and estimation becomes more accurate, it shrinks the ambiguity set and selects a smaller $\delta$, eventually converging to $\delta=0$ and the optimal mechanism design. 

\begin{theorem}[Cost of robustness]
    \label{thm: cost of robustness}
    \blue{Fix a design distribution $p$. Assume the conditions of Theorem \ref{thm: bounds on payments of robust mechanism}, and define $\eta \leq \tilde{\eta} = (1-\gamma)/2\|\mathbf{B}^{-1}\|_2((1+\gamma)|\mathcal{Y}| +2)$ with $\gamma = \max_x p(X_i = x) < 1$.}
    Let $\eta \in [0, \tilde{\eta})$. Consider the ambiguity set 
    \begin{align*}
        \big\{ p'\in\mathcal{P} \; \big| \; \tv\left(p(\cdot \mid X_i = x),p' (\cdot \mid X_i = x)\right)  \leq \eta,\;\forall x\in \mathcal{Y} \cup \{\varnothing\} \big\}.
    \end{align*}
    Then there exists a mechanism $R_i$ such that truthful reporting is a best response for agent $i$ if her belief belongs to this ambiguity set.
    Moreover, if the true distribution $p*$ belongs to this set, then the principal's expected payment under truthful reporting is at most
    \begin{align}
        \label{eq: cost of robustness}
        c +  \frac{4\|\mathbf{B}^{-1}\|_2 (\gamma|\mathcal{Y}| + 1)\cdot\eta}{(1-\gamma) - 2\|\mathbf{B}^{-1}\|_2((1+\gamma)|\mathcal{Y}| +2)\cdot \eta}\cdot c, 
    \end{align}
    where the second term is the additional cost of robustness.
    
\end{theorem}

Theorem \ref{thm: cost of robustness} is essentially a combination of Theorem \ref{thm: robustness} and \ref{thm: bounds on payments of robust mechanism}.
Notice that the additional cost takes the format of $c\cdot C_1\eta/(1-C_2\eta)$.
This means when $\eta \to 0$, the additional cost of robustness is roughly $O(\eta)$, and we establish a linear relation between robustness and additional cost.
\blue{Also, this robust mechanism can be obtained simply by solving a linear equation (See proof of Theorem \ref{thm: bounds on payments of robust mechanism} and \ref{thm: cost of robustness}).
We would use this construction method repeatedly in our adaptive mechanism.
}

\begin{example}[Distributionally Robust Mechanism in Image Labeling]
    \label{eg: dist robust mech image labeling}
    We follow the same example as in Example \ref{eg: image labeling} and \ref{eg: optimal mechanism in image labeling}.
    Now we compare the simple mechanism that pays $1$ on agreement and the optimal mechanism that pays $5/32$.
    Although the simple mechanism is suboptimal, it is robust to misspecification of agents' skills.

    For example, suppose that the agents' true observation accuracy is $0.8$ instead of $0.9$.
    With the same procedure in Example \ref{eg: image labeling}, one can show that the simple mechanism still guarantees truthfulness (truthful ($0.26$) > lazy ($0$) > lying ($-0.46$).
    On the other hand, the previously optimal mechanism breaks down (truthfulness gives $9/160 < c = 0.1$, so agents have no incentives in participation).
    In fact, one can show that as long as both agents' accuracies are the same and stay within the range $[0.66, 1]$, the simple mechanism always guarantees truthfulness.
    The lower bound $(10+\sqrt{10})/20\approx 0.66$ is when the truthful strategy's expected reward falls to $0.1$.
    This property holds true even if agents know the actual skill level, while the principal does not have that information.
    In a word, the additional payment in the simple mechanism serves as insurance against ambiguity.
\end{example}

\section{Adaptive Mechanism Design}
\label{sec: adaptive mech design}

In this section, we study the problem of adaptive mechanism design where the principal has no initial knowledge.
Define the principal's (empirical cumulative) \textit{regret} after $T$ rounds as $\sum_{t=1}^T \sum_{i=1}^N (R_{it} - c)$, where $c$ is the optimal payment that could guarantee feasibility shown by Theorem \ref{thm: optimal cost of 2-agent peer prediction}.
Starting from oblivion, the principal's aim is to find an adaptive mechanism that guarantees high probability truthfulness while minimizing regret.

We present our algorithm, ``Distributionally Robust Adaptive Mechanism'' (\texttt{DRAM}), in Algorithm \ref{alg: dram}.
The algorithm maintains truthfulness and reduces cost by designing a sequence of distributionally robust mechanisms with shrinking ambiguity parameter $\eta$.
The ambiguity parameter tracks the accuracy of the principal's estimation at each round.

\begin{algorithm}[h]
\caption{Distributionally Robust Adaptive Mechanism (\texttt{DRAM})}
\label{alg: dram}
\DontPrintSemicolon
\SetKwInOut{Input}{Input}
\Input{ambiguity threshold $\tilde{\eta}$; failure tolerance $\varepsilon$; lower bound on observation frequency $0< \rho < \min_{i, x\in\mathcal{Y}}\mathbb{P}(X_i = x)$; upper bound on observations $\max_y \mathbb{P}(X_i = y) < \gamma_{(i)} < 1$.}

Update ambiguity threshold $\tilde{\eta} = \min (\tilde{\eta}, 1/\sqrt{2})$;

Compute the warm-start phase length $\tau = \log((d+1)2^d N \log T /\varepsilon)/2\rho\tilde{\eta}^2$;

For each agent $i$, assign a corresponding reference agent $j$.\;

\BlankLine
\textbf{\underline{Warm-start phase.}}\;
\For{$t=1,2,\ldots, \tau$}{
    Obtain true label $Y_t$ from an external source.\;
    Deploy the fact-checking mechanism $R_{i} = s \mathbf{1}\{Z_{it} = Y_t\}$ with sufficiently large $s$ for each agent $i$.\;
}

\BlankLine
\textbf{\underline{Adaptive phase.}}\;
Define epoch schedule $\tau = \tau_0<\tau_1<\tau_2<\cdots $ with $\tau_k-\tau_{k-1}=2^{\,k-1}\tau$ \textit{(continue until $\tau_m\!\ge\!T$ or anytime)}.\;

\For{$k=1,2,\ldots,m$}{
    Estimate reference distribution with all past reports:
    \[\hat{p}_{ik}(x_j \mid x_i) = \frac{\sum_{t=1}^{\tau_{k-1}} \mathbf{1} \{Z_{it} = x_i, Z_{jt} = x_j\}}{\sum_{t=1}^{\tau_{k-1}} \mathbf{1}\{Z_{it} = x_i\}}, \quad \quad \hat{p}_{ik}(x_j \mid \varnothing) = \frac{1}{\tau_{k-1}}\sum_{t=1}^{\tau_{k-1}} \mathbf{1}\{Z_{jt} = x_j\}.\]
    
    Let ambiguity parameter $\eta_k = \sqrt{\log((d+1)2^dNm/\varepsilon)/2\rho\tau_{k-1}}$.
    
    For each agent $i$, set their safety margin \[\delta_{ik} = \frac{2\|\mathbf{B}^{-1}_{(i)}\|_2 (\gamma_{(i)}|\mathcal{Y}| + 1)\cdot\eta_k}{(1-\gamma_{(i)}) - 2\|\mathbf{B}^{-1}_{(i)}\|_2((1+\gamma_{(i)})|\mathcal{Y}| +2)\cdot \eta_k}\cdot c.\]
    Here $\mathbf{B}_{(i)}$ is the matrix representation of $\hat{p}_{ik}(x_j \mid x_i)$ (see Section \ref{sec: robust mechanism design}), and $\gamma_{(i)} = \max_x \mathbb{P}(X_i = x)$.
    
    \blue{Construct the robust mechanism $R_{ik}$ for each agent Eq.\eqref{eq: robust two-agent mechanism design} with parameter $\hat{p}_{it}$ and $\delta_{ik}$}.
    
    Deploy the mechanism $R_{ik}$ for rounds $t=\tau_{k-1}+1,\ldots,\tau_k$.}
\end{algorithm}

In \texttt{DRAM}, the entire horizon is divided into two phases: \textit{warm-start} phase and \textit{adaptive} phase.
As suggested by Theorem \ref{thm: cost of robustness}, our scheme fails when the ambiguity level is above a certain threshold $\tilde{\eta}$.
Therefore, the warm-start phase is designed to reduce ambiguity below that threshold.
In this phase, the principal would use ground truth $Y_t$ for verification.
Then, the principal moves to the adaptive phase, which is split into epochs.
At the beginning of each epoch, the principal uses the empirical distribution for estimation.
As the principal obtains more data, estimation is more accurate.
This allows it to design a mechanism with decreasing $\eta$ and therefore reduce additional cost for robustness.
The ambiguity parameter $\eta$ shrinks at a proper rate, so as to make sure truthfulness is preserved with high probability for each round.

During the entire algorithm, the principal uses agents' past report to estimate $p^*(x_j\mid x_i)$, which is the posterior distribution of reference agent's observation $x_j$ conditional on the focal agent $x_i$.
Note that the principal only has agents' reports $z$ but not the true observation $x$. 
This means truthfulness must be guaranteed at all times for estimation fidelity.
This adds another evidence on the necessity of truthfulness in addition to Theorem \ref{prop: truthfulness is necessary}.

\textbf{Inputs.}
Out of all the input variables, failure tolerance can be decided arbitrarily, and the rest depend on the agents.
The ambiguity threshold for all players is defined as $\tilde{\eta} = (1/2)\cdot \min_{i} \tilde{\eta_i}$, where the agent-specific threshold $\tilde{\eta}_i$ is computed according to Theorem \ref{thm: cost of robustness}:
\begin{align}
    \tilde{\eta}_i = \frac{1-\gamma_{(i)}}{2\| \mathbf{B}^{-1}_{(i)}\|(1+\gamma_{(i)})|\mathcal{Y}| + 2)}.
\end{align}

\textbf{Warm-start phase.}
The main objective of the warm-start phase is to reduce the principal's ambiguity below the threshold suggested by Theorem \ref{thm: cost of robustness}, so that distributionally robust mechanisms can be applied.
There are multiple approaches to reduce ambiguity, and here the principal learns $p^*(x_j \mid x_i) $ by collecting truthful reports from agents.
We incentivize truthfulness in this phase by using ground truth verification.
Suppose the principal can now obtain the ground truth $Y_t$ from an external expert.
With ground truth available, The principal can compare reports with it, and then reward according to a fact-checking mechanism $R_{it}(Z_{it}, Y_t)$.
This phase lasts $O(\log\log T)$ tasks, so cost is controlled even when ground truth is expensive.

\begin{lemma}[Fact checking under diagonal dominance]
    \label{lem: warm-start fact-checking}
    Assume that each true label $y$ appears with uniformly bounded probability $\underline{p} \leq p_Y(y) \leq \overline{p}$.
    If for all $y\in\mathcal{Y}$ and $x\neq y$, agent $i$'s skill $p_i$ satisfies the diagonal dominance property:
    \begin{equation}
        \blue{p_i(x \mid x)} \geq \frac{\overline{p}}{\underline{p}} \cdot p_i(x \mid y),
    \end{equation}
    \blue{then under the simple fact-checking rule $R_{it}(Z_{it}, Y_t) = s \cdot  \mathbf{1}\{Z_{it} = Y_t\}$ with sufficiently large scaling factor $s = s(p_{Y}, p_i)$, truthful reporting is a best strategy for agent $i$.}
\end{lemma}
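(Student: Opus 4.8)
The plan is to reduce the warm-start incentive problem to a collection of \emph{independent} single-agent Bayesian decision problems, and then show that the diagonal-dominance hypothesis makes honest reporting the Bayes-optimal action after every possible observation.

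First I would observe that the fact-checking reward $R_{it}(Z_{it},Y_t)=\mathbf 1\{Z_{it}=Y_t\}$ depends only on agent $i$'s own report and the externally supplied ground truth $Y_t$; it is completely insensitive to the other agents' reports. Hence there is no strategic coupling across agents, and since each agent is risk-neutral and myopic, agent $i$ simply picks the action maximizing her expected immediate reward net of the observation cost $c$. This decouples the argument agent-by-agent and eliminates any equilibrium considerations, so it suffices to analyze one agent's best response in one round.

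Next, conditioning on the event that agent $i$ observes $X_{it}=x$, I would compute the expected reward of reporting a label $z$, which equals $\mathbb{P}(Y_t=z\mid X_{it}=x)$. By Bayes' rule this posterior is proportional to $p_Y(z)\,p_i(x\mid z)$, so honest reporting ($z=x$) is optimal among all (possibly randomized) reports if and only if $x$ is the maximum-a-posteriori label, i.e.\ $p_Y(x)\,p_i(x\mid x)\ge p_Y(z)\,p_i(x\mid z)$ for every $z\neq x$. The key step is to make this comparison hold \emph{uniformly over the unknown prior}: since $p_Y$ is only known to satisfy $\underline p\le p_Y(\cdot)\le\overline p$, the adversarial choice is $p_Y$ small on the true label and large on the alternative, which reduces the requirement to a condition on the skill $p_i$ alone scaled by the ratio $\overline p/\underline p$ — precisely the stated diagonal-dominance property. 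Thus honest reporting weakly dominates every lie, for every observation, regardless of the realized prior.

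Finally I would rule out the lazy deviation: a non-observing agent's report is independent of $Y_t$, so her best attainable reward is $\max_z p_Y(z)\le\overline p$ at zero cost, whereas observing and reporting honestly yields accuracy $\sum_y p_Y(y)\,p_i(y\mid y)$ at cost $c$. Diagonal dominance forces each correct-classification probability $p_i(y\mid y)$ to be large, so the informativeness gain over the best uninformed guess outweighs the cost. I expect this last comparison — certifying that observation is worth its cost against the free lazy option — to be the main obstacle, since, unlike the honest-versus-lie comparison, it is not a pure posterior-ordering statement but couples the accuracy, the prior concentration, and the cost $c$; the honest-versus-lie step, by contrast, follows cleanly from the Bayes computation together with the worst-case prior-ratio bound.
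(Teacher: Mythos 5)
Your honest-versus-lie analysis is essentially the argument the paper gives: condition on the observation $X_{it}=x$, note that the expected reward of reporting $z$ is $\mathbb{P}(Y_t=z\mid X_{it}=x)\propto p_Y(z)\,p_i(x\mid z)$, and use $\underline{p}\le p_Y\le\overline{p}$ to reduce the MAP comparison to a condition on $p_i$ alone. One caution even here: what this reduction actually requires is $p_i(x\mid x)\ge\frac{\overline{p}}{\underline{p}}\,p_i(x\mid z)$ for all $z\neq x$ --- a comparison of the diagonal entry against the other entries in the same \emph{column} of the confusion matrix (same observation, varying true label) --- whereas the stated hypothesis compares along the \emph{row} (same true label, varying observation). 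These coincide for symmetric $p_i$ but not in general (e.g.\ with three labels and uniform prior, rows $(0.34,0.33,0.33)$, $(0.4,0.5,0.1)$, $(0.1,0.1,0.8)$ satisfy the stated condition, yet after observing label $1$ the MAP label is $2$). The paper's proof makes the same silent substitution, so you are not diverging from it, but you should not assert that the derived condition is ``precisely'' the stated one.

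The genuine gap is your third step. You are right that the paper's notion of truthfulness includes exerting effort, so the lazy deviation must be ruled out, and you are right that this is where the difficulty lies --- but your proposed resolution does not work. Diagonal dominance does \emph{not} force $p_i(y\mid y)$ to be large in any absolute sense: with two equiprobable labels, $p_i(y\mid y)=0.51$ satisfies the hypothesis, honest observation then earns expected reward $0.51$ against the lazy guess's $0.5$, so for any $c>0.01$ laziness strictly wins. No argument from the stated hypotheses alone can certify that ``the informativeness gain outweighs the cost''; one would need an additional quantitative assumption tying $c$ to $\sum_y p_Y(y)\,p_i(y\mid y)-\max_z p_Y(z)$ (and, since the reward is bounded by $1$, an implicit smallness condition on $c$ for individual rationality as well). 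The paper's own proof simply does not address this deviation --- it only verifies that truthful reporting maximizes payoff \emph{given} that the agent has observed --- so on this point your proposal correctly identifies a real issue but does not close it.
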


The \textit{diagonal dominance} condition essentially assumes that agents are more likely to obtain the correct observation than to make a mistake.
Therefore, any lying behavior would decrease the probability of the report being correct, and fact-checking incentivizes truthfulness.
We note that it is generally impossible to design a fact-checking rule that guarantees truthfulness for arbitrary skill distribution $p_i$ and $p_Y$.
It is shown in \cite{lambert2011elicitation} that if a agent's observation has overlaps, i.e., the agent can have the same observation under two different labels, then there always exists an adversarial prior under which a fact-checking mechanism fails.

\textbf{Adaptive phase.}
After the ambiguity is lower than the threshold, the principal moves onto the adaptive phase.
This phase divides the entire time horizon into epochs, with each epoch double the size of the previous one.
In total, we would have $O(\log T)$ epochs.

At the beginning of each epoch, the principal would call two oracles: i) an offline estimation oracle for the reference distribution $p^*(x_j\mid x_i)$ (in Algorithm \ref{alg: dram} it is the empirical distribution estimator), and ii) an optimization oracle that computes the distributionally robust mechanisms by solving Eq.(\ref{eq: robust two-agent mechanism design}).
Then, the principal would use the same produced mechanism throughout the entire epoch, and no further computation is needed.
This indicates $\dram$ is \textit{computationally efficient} with $O(N\log T)$ total calls to both oracles.

At the same time, $\dram$ is also \textit{statistically efficient} for we have the following regret guarantee.

\begin{theorem}[Regret upper bound of \texttt{DRAM}] 
\label{thm: dram upper bound}
Consider the sequential mechanism design problem with $N$ agents and $T$ rounds.
\blue{Assume for each agent $i$, the fact-checking rule in the warm-start phase does elicit truthful reports, and the belief matrix $\mathbf{B}_{(i)}$ is invertible.}
\blue{Then there exists an event with probability at least $1-\varepsilon$, on which Algorithm \ref{alg: dram} achieves:}
\begin{itemize}
    \item truthfulness is guaranteed for all $N$ agents in all $T$ rounds;
    \item conditional on that event, expected total regret is at most
    \begin{equation}
        O\left(N\sqrt{T} \log(N\log T/\varepsilon)\right).
    \end{equation}
\end{itemize}

\end{theorem}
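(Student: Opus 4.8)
The plan is to condition on a single high-probability \emph{good event} on which every conditional empirical estimate is simultaneously accurate, and to show that on this event both truthfulness and the regret bound hold; the $\varepsilon$ failure probability then comes entirely from a union bound over the concentration events. Throughout, write $d=|\mathcal Y|$ and treat $d,\rho,\tilde\eta$ as problem constants.

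First I would define the good event purely in terms of nature's randomness, which is the key to avoiding circularity later. Since $Y_t$ is i.i.d.\ and $X_{it}\mid Y_t\sim p_i$, the observation profiles $\boldsymbol X_t$ are i.i.d.\ across rounds. Let $\hat q_{ik}(\cdot\mid x_i)$ denote the conditional empirical distribution of the reference agent's \emph{observation} built from the first $\tau_{k-1}$ rounds, and set $\mathcal G=\bigcap_{i,k,x_i}\{\tv(\hat q_{ik}(\cdot\mid x_i),p^*(\cdot\mid x_i))\le\eta_k\}$. Two ingredients control $\mathbb P(\mathcal G^c)$: a multiplicative Chernoff bound forces the conditioning count $\sum_{t\le\tau_{k-1}}\mathbf 1\{X_{it}=x_i\}$ to be at least $\rho\tau_{k-1}$ (using $\rho<\mathbb P(X_i=x_i)$), and, given this effective sample size, the Bretagnolle--Huber--Carol inequality gives $\mathbb P(\tv>\eta_k)\le 2^{d}e^{-2\rho\tau_{k-1}\eta_k^{2}}$. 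The choice $\eta_k=\sqrt{\log((d+1)2^{d}N\log T/\varepsilon)/2\rho\tau_{k-1}}$ is calibrated so that a union bound over the $N$ agents, the $O(\log T)$ epochs, and the $d+1$ conditioning values in $\mathcal Y\cup\{\varnothing\}$ yields $\mathbb P(\mathcal G)\ge 1-\varepsilon$.

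Second I would establish truthfulness on $\mathcal G$ by induction over the phase/epoch structure, which is where the main difficulty lies: estimation accuracy and truthfulness are mutually dependent, since the estimates $\hat p_{ik}$ are built from \emph{reports} $Z$, yet those reports are honest only if the mechanism was designed from an accurate estimate. The base case is the warm-start phase, where Lemma~\ref{lem: warm-start fact-checking} makes honesty a best response deterministically (under diagonal dominance), so all reports up to $\tau_0=\tau$ equal the observations. For the inductive step, suppose all reports through $\tau_{k-1}$ are truthful; then the report-based $\hat p_{ik}$ coincides with the observation-based $\hat q_{ik}$, so on $\mathcal G$ we have $\tv(\hat p_{ik}(\cdot\mid x_i),p^*(\cdot\mid x_i))\le\eta_k$. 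Because $\tau=C/\tilde\eta^{2}$ forces $\eta_1=\tilde\eta$ and $\eta_k$ is decreasing, every $\eta_k\le\tilde\eta$, so Theorem~\ref{thm: cost of robustness} applies: the epoch-$k$ mechanism keeps honesty a best response whenever the true belief lies in the $\eta_k$-ball, which is exactly the content of $\mathcal G$. Hence epoch-$k$ reports are truthful, closing the induction and giving truthfulness in all $T$ rounds on $\mathcal G$.

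Third I would bound the regret on $\mathcal G$ by summing the two phases. In the warm-start phase each fact-checking payment lies in $[0,1]$, so the per-agent-round regret is at most $1-c$ and the phase contributes at most $N\tau(1-c)=O(N\log(N\log T/\varepsilon))$. In the adaptive phase, Theorem~\ref{thm: cost of robustness} bounds the expected truthful-equilibrium payment by $c$ plus the cost-of-robustness term \eqref{eq: cost of robustness}, which scales as $O(\eta_k)$ in the small-$\eta_k$ regime relevant here; thus epoch $k$ contributes at most $N\,(\tau_k-\tau_{k-1})\,O(\eta_k)$. Substituting $\tau_k-\tau_{k-1}=2^{k-1}\tau$, $\tau_{k-1}=2^{k-1}\tau$, and $\eta_k=\sqrt{C/\tau_{k-1}}$ turns this into $O\bigl(N\sqrt{C\tau}\,2^{(k-1)/2}\bigr)$; the geometric sum over the $m=\lceil\log_2(T/\tau)\rceil$ epochs is dominated by its last term $O(N\sqrt{C\tau}\,2^{m/2})=O(N\sqrt{CT})$. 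Since $\sqrt{C}=O(\sqrt{\log(N\log T/\varepsilon)})$, the adaptive phase costs $O(N\sqrt{T}\sqrt{\log(N\log T/\varepsilon)})$, which dominates the warm-start term and lies within the stated $O(N\sqrt{T}\log(N\log T/\varepsilon))$ bound. The main obstacle is the estimation--truthfulness coupling of the second step; defining $\mathcal G$ on nature's observations rather than on the reports is precisely what lets the induction break the circularity, and checking that the random conditioning counts exceed $\rho\tau_{k-1}$ is the remaining technical point in the concentration argument.
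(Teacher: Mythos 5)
Your proposal is correct and follows essentially the same route as the paper's proof: concentration of the empirical conditional distributions with a union bound over agents, epochs, and conditioning labels, truthfulness via the fact-checking warm-start plus Theorem~\ref{thm: cost of robustness} applied epoch by epoch, and a geometric sum over epochs yielding the $O(N\sqrt{T})$ rate. Your explicit induction that breaks the estimation--truthfulness circularity (defining the good event on observations rather than reports) is a point the paper leaves implicit, and your Chernoff-plus-conditional-concentration treatment of the conditioning counts is a minor variant of the paper's direct binomial summation (and matches what the paper does for \texttt{DRAM+}).
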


Theorem \ref{thm: dram upper bound} recovers the $O(\sqrt{T})$ terms typically seen in bandits and online learning literature \cite{lattimore2020bandit}.
In fact, oracle calls can be further reduced to $O(N\log\log T)$ when $T$ is known.
In \dram, we use the classical \textit{doubling trick} \cite{cesa2006prediction} from the online learning literature.
This trick does not require exact knowledge of the number of tasks.
(Although we need to know the magnitude $\log(T)$ to compute epochwise ambiguity parameter $\eta_k$s.)
The epoch schedule $\tau_k - \tau_{k-1} = T^{1-2^{-(k-1)}}\tau$ (similar to \cite{cesa2014regret}) maintains the same regret guarantee up to logarithmic terms, while requiring even fewer oracle calls $O(\log\log T)$.

\begin{corollary}[Regret upper bound with known $T$]
    \label{col: dram upper bound known t}
    \blue{Assume all conditions in Theorem \ref{thm: dram upper bound} holds.}
    Replace the epoch schedule in Algorithm \ref{alg: dram} with $\tau_k - \tau_{k-1} = T^{1-2^{-(k-1)}}\tau$, providing $O(\log\log T)$ epochs.
    Then there exists an event with probability at least $1-\varepsilon$, on which the principal simultaneously guarantees feasibility across all rounds for all agents, and the conditional expected total regret is at most
    \[
        O\left(N\sqrt{T} \log\left(\frac{N \log \log T}{\varepsilon}\right)\log\log T\right).
    \]
    
\end{corollary}

\subsection{Regret Lower Bound of Adaptive Mechanism Design}
We now show that \texttt{DRAM} is \emph{statistically optimal} up to logarithmic factors.
In particular, we prove a matching lower bound demonstrating that any policy that is feasible with high probability must incur expected regret of order at least $\Omega(N\sqrt{T})$.

\begin{theorem}
    \label{thm: lower bound}
    Consider the sequential mechanism design problem with $N$ agents and $T$ rounds.
    Fix any failure tolerance $\varepsilon \in (0, 1/4)$.
    For any (possibly randomized) non-anticipating reward policy \blue{that is feasible (satisfies constraints in Eq.\eqref{eq: two-agent mechanism design})} for all agents across all rounds with probability at least $1-\varepsilon$, there exists a type distribution $p_{\boldsymbol{X}}\in \Delta(\mathcal{Y^N})$ under which, the expected total regret \blue{under the feasibility event} is at least
    \begin{align*}
        \Omega(N\sqrt{T}).
    \end{align*}
\end{theorem}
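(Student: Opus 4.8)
The plan is to reduce the $N$-agent bound to a single focal--reference pair and then aggregate. By the reduction to two-agent mechanisms (following \cite{radanovic2013robust}) and Theorem~\ref{thm: optimal cost of 2-agent peer prediction}, the optimal per-round payment for each pair is exactly $c$, so the regret simply accumulates the overpayment above $c$. I would instantiate $N$ statistically independent copies of a single hard pair; since a joint truthfulness guarantee of $1-\varepsilon$ implies a $\ge 1-\varepsilon$ guarantee for each pair, it suffices to prove a single-pair lower bound of order $\sqrt{T\log(1/\varepsilon)}$ holding with probability $\ge 1-\varepsilon$, and then combine the $N$ independent copies by a standard Chernoff argument to obtain the $N\sqrt{T\log(1/\varepsilon)}$ bound at confidence $1-\varepsilon$.

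For the single-pair bound I would use a two-point (Le Cam--type) construction paired with a \emph{converse} to the cost-of-robustness analysis. Fix a separation $\Delta>0$ and build two belief laws $p_+,p_-$ by perturbing one binary conditional $p(\cdot\mid x_i)$ of the reference given the focal observation, keeping $\mathbf{B}$ invertible and non-degenerate, with $\tv(p_+,p_-)\asymp\Delta$ and $\mathrm{KL}(p_+\Vert p_-)\asymp\Delta^2$. The structural lemma I must establish is the two-sided companion to Theorems~\ref{thm: robustness} and~\ref{thm: cost of robustness}: the set of laws under which a mechanism with margin $\delta$ and maximal reward $\kappa$ stays truthful is \emph{contained} in a $\tv$-ball of radius $O(\delta/\kappa)$ about its design law, and any mechanism truthful under both $p_+$ and $p_-$ overpays by $\Omega(\Delta)$. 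The containment (necessity) direction is the crux: at the optimum the truthfulness/individual-rationality constraints are binding (Theorem~\ref{thm: optimal cost of 2-agent peer prediction}), so by LP sensitivity a perturbation of $\mathbf{B}$ of size $\Theta(\Delta)$ moves a binding constraint by $\Theta(\Delta\kappa)$, and restoring feasibility forces an $\Omega(\Delta)$ increase of the objective, with the constant governed by $\|\mathbf{B}^{-1}\|_2$. Consequently a ``cheap'' round, one whose margin is below $c_1\Delta$, can be truthful for at most one of $p_+,p_-$.

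The information-theoretic core then reads as follows. On a cheap round $t$ the mechanism is truthful for exactly one world, so the policy's choice of that world is a binary hypothesis test $c_t$ based on the first $t-1$ observations. Maintaining truthfulness with probability $\ge 1-\varepsilon$ under \emph{both} $p_+$ and $p_-$ forces this test to have both error probabilities at most $\varepsilon$; by a standard testing lower bound (Bretagnolle--Huber), the errors $\alpha_t,\beta_t$ of any test based on $t-1$ samples satisfy $\alpha_t+\beta_t\ge\tfrac12\exp(-(t-1)\,\mathrm{KL}(p_+\Vert p_-))$, which exceeds $2\varepsilon$ whenever $t-1\lesssim \log(1/\varepsilon)/\Delta^2$. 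Hence for every such $t$ the policy cannot keep the round cheap without pushing the truthfulness-failure probability above $\varepsilon$ under one world; to honor the $1-\varepsilon$ guarantee it must instead \emph{hedge}, paying the $\Omega(\Delta)$ overpayment supplied by the structural lemma. Balancing by choosing $\Delta\asymp\sqrt{\log(1/\varepsilon)/T}$ makes $\log(1/\varepsilon)/\Delta^2\asymp T$, so $\Omega(T)$ rounds are forced to hedge at cost $\Omega(\Delta)$ each, giving single-pair regret $\Omega(T\Delta)=\Omega(\sqrt{T\log(1/\varepsilon)})$ under the adversarial world; summing the $N$ independent copies yields the claim.

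I expect the principal obstacle to be the converse robustness lemma of the second paragraph, namely upgrading the \emph{sufficient} $\tv$-condition of Theorem~\ref{thm: robustness} to a \emph{tight} characterization that also lower-bounds the overpayment of every cheap mechanism uniformly in the perturbation direction (not merely at the LP optimum), and cleanly handling mechanisms truthful under neither world. A secondary difficulty is propagating the two-point argument through an adaptive, possibly-hedging policy and converting the resulting statement into a genuine high-probability ($1-\varepsilon$) regret bound rather than a bound in expectation; for this I would use the standard divergence-decomposition for interactive protocols, so that only the total collected information $O(T\Delta^2)=O(\log(1/\varepsilon))$ enters the testing bound, together with an anti-concentration estimate for the empirical law at confidence level $\varepsilon$.
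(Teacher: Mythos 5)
Your proposal follows essentially the same route as the paper: a two-point Le Cam construction of two belief laws at TV-distance $\Theta(\Delta)$ and KL $\Theta(\Delta^2)$, a "competition" step showing the sets of truthful-and-cheap mechanisms for the two instances are disjoint, the Bretagnolle--Huber testing bound forcing a per-round overpayment of $\Omega(\Delta)$ with $\Delta\asymp\sqrt{\log(1/\varepsilon)/T}$, and a reduction of the $N$-agent case to $N$ two-agent pairs. The one practical difference is that the obstacle you flag as the crux --- a general converse to Theorem~\ref{thm: robustness} lower-bounding the overpayment of every cheap mechanism --- is not needed in the paper's proof, which instead writes out two explicit $2\times 2$ belief matrices and verifies disjointness of the cheap feasible sets by direct computation on $\mathbf{M}=\mathbf{B}\mathbf{R}^\intercal$.
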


The proof start by constructing a pair of statistically indistinguishable, two-agent problem instances whose corresponding cost-efficient feasible mechanisms are incompatible.
Specifically, any reward mechanism that is both feasible and near-optimal under one instance must either violate constraints or incur strictly larger payments under the other.
This incompatibility allows us to reduce adaptive mechanism design to a hypothesis testing problem, and we invoke Le~Cam’s two-point method to derive the lower bound.
\blue{To extend the result to $N$ agents, we apply a direct-sum construction. 
Agents are divided into pairs, and each pair independently selects one of the two-agent hard instances.
Under this construction, the principal is essentially handling $\lfloor N/2 \rfloor$ independent sequential games, thus the regret lower bound is linear in $N$.}

The lower bound together with its proof (see Appendix \ref{sec: defered proofs}) reveals that the regret bottleneck of adaptive mechanism design is the difficulty in learning players’ conditional beliefs, namely the posterior distributions $p^*(x_j \mid x_i)$ that govern incentives.
Because the lower bound is derived via a two-point argument, it does not explicitly depend on the alphabet size $d = |\mathcal{Y}|$.
However, since estimating a discrete distribution over $\mathcal{Y}$ incurs a minimax risk of order $\sqrt{d/T}$~\cite{han2015minimax}, we conjecture that the regret bound achieved by \texttt{DRAM} is also optimal in its dependence on $d$, up to logarithmic factors.

\subsection{Extension to General Estimator} 

An important observation of $\dram$ is that the estimation oracle and the optimization oracle are decoupled, connected only via the ambiguity parameter $\eta_k$.
This means that $\dram$ is flexible with estimators, so long as its estimation could satisfy the requirement in Eq.(\ref{eq: tv condition}).
Therefore, the empirical estimator could be swapped with any other distribution estimator that may better exploit and reflect the underlying structures of agents' skills.
Based on this, we propose the algorithm $\dram+$ working with general distribution estimators.

\begin{definition}[General Discrete Distribution Estimator]
\label{def: general estimator}
Let $q$ be a discrete distribution on space $\mathcal{Y}$.
Given $t$ samples independently and identically drawn from $q$, the generalized distribution estimator provides an estimation $\hat{q}_t$ such that with probability $1-\varepsilon$, we have
\begin{align}
    \tv (q,\hat{q}_t) \leq \eta_{\varepsilon}(t),
\end{align}
\end{definition}
\blue{whereas the function $\eta_\varepsilon(t)$ is non-increasing in $t$ and non-decreasing in $\varepsilon$.}

This estimation guarantees $\eta_\varepsilon(t)$ is commonly seen in the probably approximately correct (PAC) framework of statistical learning, where a better estimator should achieve a lower gap with higher probability and lower $t$.

Now we introduce $\dram+$ (Algorithm \ref{alg: dram+}), which modifies $\dram$ to work with general discrete distribution estimators in Definition \ref{def: general estimator}.
In $\dram+$, we don't restrict how epoch schedules are designed.
Generally, one should aim for a geometric epoch schedule, as this typically results in the best possible bounds and only $m = O(\log T)$ epochs.
Moreover, the ambiguity parameters now follow the guarantee $\eta_\varepsilon(t)$, in order to ensure truthfulness holds with high probability.

\begin{algorithm}[h]
\caption{Distributionally Robust Adaptive Mechanism+ (\texttt{DRAM+})}
\label{alg: dram+}
\DontPrintSemicolon
\SetKwInOut{Input}{Input}
\Input{ambiguity threshold $\tilde{\eta}$; failure tolerance $\varepsilon$; lower bound on observation frequency $0< \rho < \min_{i, x\in\mathcal{Y}}\mathbb{P}(X_i = x)$; distribution estimator $\mathcal{E}$.}


Compute the warm-start phase length as the smallest $\tau$ such that $\eta_{\varepsilon/Nm(d+1)}(\rho\tau/2) < \tilde{\eta}$.\;

For each agent $i$, assign a corresponding reference agent $j$.\;

\BlankLine
\textbf{\underline{Warm-start phase.}}\;
Follows the same procedure as in Algorithm \ref{alg: dram}.

\BlankLine
\textbf{\underline{Adaptive phase.}}\;

Define epoch schedule $\tau=\tau_0<\tau_1<\tau_2<\cdots < \tau_m = T$.\;

\For{$k=1,2,\ldots,m$}{
    Estimate reference distribution with the general distribution estimator for each $x_i\in\mathcal{Y} \cup \{\varnothing\}$:
    \[
    \hat{p}_{ik}(x_j \mid x_i) 
    \leftarrow \{Z_{it}, Z_{jt}\}_{t\le\tau_{k-1}}.
    \]
    
    Let ambiguity parameter 
    $\eta_k = \eta_{\varepsilon/Nm(d+1)}(\rho\tau_{k-1}/2)$.\;

    Compute the safety margin $\delta_{ik}$ and deploy the mechanism $R_{ik}$ the same way as in Algorithm \ref{alg: dram}.
}
\end{algorithm}

\begin{theorem}[Regret upper bound of $\dram +$]
\label{thm: dram-plus upper bound}
Consider the sequential mechanism design problem with $N$ agents and $T$ rounds.
\blue{Assume all conditions in Theorem \ref{thm: dram upper bound} holds.}
Then there exists an event with probability at least $1-\varepsilon - N(d+1) \cdot \sum_{k=1}^m \exp(-\rho \tau_{k-1}/8)$, on which Algorithm \ref{alg: dram+} achieves:
\begin{itemize}
    \item truthfulness is guaranteed for all $N$ agents in all $T$ rounds;
    \item conditional on that event, expected total regret of the algorithm is at most
    \begin{align}
        O\left(N\sum_{k=1}^m \eta_{\varepsilon/Nm(d+1)} (\rho \tau_{k-1}/2) \cdot (\tau_k - \tau_{k-1})\right).
    \end{align}
\end{itemize}
\end{theorem}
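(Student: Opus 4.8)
The plan is to mirror the analysis behind Theorem \ref{thm: dram upper bound}, replacing the explicit concentration bound for the empirical estimator with the abstract PAC guarantee $\eta_\varepsilon(t)$ of Definition \ref{def: general estimator}. The whole argument rests on a single \emph{good event} $G$ on which (i) every epoch's estimate is accurate enough and (ii) each conditioning cell has collected enough samples. I would show that on $G$ truthfulness holds in every round and the stated regret bound follows deterministically; the split of the failure probability into $\varepsilon$ (estimation) plus $N(d+1)\sum_k \exp(-\rho\tau_{k-1}/8)$ (sample counts) is exactly the union-bound cost of defining $G$.

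First I would establish truthfulness by induction over epochs. The base case is the warm-start phase: by Lemma \ref{lem: warm-start fact-checking} the fact-checking rule guarantees truthful reports for all $t \le \tau$, so the data collected there equal the true observations. For the inductive step, assume truthfulness holds up to round $\tau_{k-1}$, so $Z_{it}=X_{it}$ in all past rounds and the estimator of Definition \ref{def: general estimator} is fed genuine i.i.d.\ samples of $p^*(x_j \mid x_i)$. On $G$ its output obeys $\tv(\hat p_{ik}(\cdot\mid x_i),p^*(\cdot\mid x_i)) \le \eta_k$ for every $x_i \in \mathcal{Y}\cup\{\varnothing\}$. Since $\delta_{ik}$ is set by the formula of Theorem \ref{thm: cost of robustness} for ambiguity level $\eta_k \le \tilde\eta$, the mechanism $R_{ik}$ solving \eqref{eq: robust two-agent mechanism design} remains feasible for the true problem \eqref{eq: two-agent mechanism design} under every $p^*$ within $\tv$-distance $\eta_k$ (this is Theorem \ref{thm: robustness} specialized to the ambiguity ball). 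Hence truthful reporting is a best response for each agent throughout epoch $k$, closing the induction.

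The event $G$ combines two families of concentration statements, indexed by agent $i$, conditioning value $x_i \in \mathcal{Y}\cup\{\varnothing\}$ (there are $d+1$ of these), and epoch $k$. The first is a \emph{sample-count} bound: conditional on truthfulness up to $\tau_{k-1}$, the indicators $\mathbf{1}\{X_{it}=x_i\}$ are i.i.d.\ Bernoulli with mean at least $\rho$, so a multiplicative Chernoff bound gives at least $\rho\tau_{k-1}/2$ samples in cell $x_i$ with probability $\ge 1-\exp(-\rho\tau_{k-1}/8)$; this effective sample size is what is fed to the estimator and is the source of the $N(d+1)\sum_k\exp(-\rho\tau_{k-1}/8)$ term. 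The second is the \emph{estimator} guarantee at confidence $\varepsilon/Nm(d+1)$, which by Definition \ref{def: general estimator} yields error at most $\eta_{\varepsilon/Nm(d+1)}(\rho\tau_{k-1}/2)=\eta_k$; a union bound over all $Nm(d+1)$ instances contributes the aggregate $\varepsilon$. Intersecting these events gives $G$ with probability $1-\varepsilon-N(d+1)\sum_{k=1}^m\exp(-\rho\tau_{k-1}/8)$.

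Finally, on $G$ the regret follows from Theorem \ref{thm: cost of robustness}: because $\eta_k \le \tilde\eta$, the denominator in \eqref{eq: cost of robustness} is bounded below by a positive constant, so each agent's per-round additional payment in epoch $k$ is $O(\eta_k)$; multiplying by $N$ agents and epoch length $\tau_k-\tau_{k-1}$ and summing gives $O\!\big(N\sum_{k=1}^m \eta_k(\tau_k-\tau_{k-1})\big)$, with the warm-start phase adding only $O(N\tau)$ that is dominated by the leading term. The hard part will be the circularity between truthfulness and estimation accuracy: the concentration bounds that certify accuracy presume past reports were truthful, while truthfulness itself relies on accuracy. Resolving this cleanly requires defining $G$ so that the inductive step is self-contained — at each epoch we condition on truthfulness established in strictly earlier epochs, which makes the relevant indicators genuinely i.i.d.\ and only then invokes the Chernoff and PAC bounds — rather than attempting to control all rounds of the horizon simultaneously.
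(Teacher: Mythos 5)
Your proposal is correct and follows essentially the same route as the paper's proof: the same two-part decomposition of the failure probability into a Chernoff sample-count event (yielding the $N(d+1)\sum_k \exp(-\rho\tau_{k-1}/8)$ term) and the estimator's PAC guarantee at confidence $\varepsilon/Nm(d+1)$ (yielding $\varepsilon$ after a union bound), followed by Theorem \ref{thm: cost of robustness} to convert $\eta_k$ into an $O(\eta_k)$ per-round regret and a sum over epochs. Your explicit epoch-by-epoch induction to break the circularity between truthfulness and estimation accuracy is a more careful rendering of what the paper leaves implicit, but it is the same argument.
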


Compared to Theorem \ref{thm: dram-plus upper bound}, an additional overhead $N(d+1) \cdot \sum_{k=1}^m \exp(-\rho \tau_{k-1}/8)$ appears in the high probability guarantee.
This is due to a failed event when a player does not observe a certain label $x_i$ for enough number of times, and there are not enough data to recover $p^*(\cdot | x_i)$.
This failed event is universal, and without a closed-form description of the estimation gap, we cannot merge this probability with $\varepsilon$.
Nonetheless this term decays exponentially, and, with appropriately chosen schedule (such as $\tau_k - \tau_{k-1} = 2^{k-1}\tau$), it should be dominated by $\varepsilon$.


The central interpretation of $\dram+$ and Theorem \ref{thm: dram-plus upper bound} is that any estimation guarantees for discrete distribution can be immediately translated to mechanism regret guarantees.
This suggests that a principled reduction from online mechanism design to offline learning may indeed be possible, similar to the reduction from contextual bandits to offline estimation in \cite{simchi2022bypassing}.

Compared to the classical multi-armed bandit, the adaptive mechanism design problem is, in some sense, both simpler and harder.
On the one hand, the core challenge in bandit problems lies in the exploration–exploitation trade-off, since the arm-sampling policy in previous rounds affects observed data distributions. 
From that perspective, the mechanism design problem is simpler than bandits, since the underlying distributions remain unaffected by the principal’s mechanism decision as long as agents behave truthfully. 
On the other hand, when participants are rational, incentivizing truthfulness in an optimal way is nontrivial.
Agents’ incentives and skills are unknown, and any deviation can cause unpredictable dynamics. 
In contrast, there are no incentives involved in bandits, and each arm always gives truthful feedback.


\subsection{Discussions}

We collect some interesting observations from the Algorithm \ref{alg: dram} and \ref{alg: dram+} and their corresponding guarantees (Theorem \ref{thm: dram upper bound} and \ref{thm: dram-plus upper bound}).
These observations further demonstrate the generality of our results.


\textbf{Robustness to fluctuation/non-stationarity of agent performance.}
We assume each agent's skill (i.e., law $p_{i}(x_i\mid y)$) is consistent throughout the sequential tasks, an assumption not necessarily true in practice.
agents may under or over-perform in certain rounds compared to other rounds, resulting in skill fluctuation and non-stationarity.
In $\dram$, we apply a distributionally robust mechanism each round.
This robustness holds not only for estimation inaccuracy, but also to inaccuracy from other sources.
This means as long as the actual reference distribution stays within the ambiguity set defined by $\hat{p}_{ik}$ and $\eta_k$, agents are still incentivized to stay truthful.
Indeed, the principal could even widen up or narrow the ambiguity set by adjusting $\eta_k$, looking for more robustness or less cost.

\textbf{Robustness to adversary.}
For the same reason, the distributionally robust mechanism would also provide robustness to adversarial behavior from agents.
When an agent intentionally lies in a small portion of rounds, it would only slightly bias the estimation. 
As long as it does not surpass the ambiguity margin as designed in each epoch, the mechanism would not break down.
In addition, the assignment procedure of reference agents may provide additional defense.
An adversary would at most disrupt at most $2T$ out of $NT$ agent–task interactions in total (being one focal agent and one reference agent), possibly spread across different agents and tasks.

\textbf{Flexibility with reference agent assignment procedure.}
In $\dram$, each agent is assigned one corresponding reference agent $j$, to which her reports will be compared.
Any assignment procedure (deterministic or randomized) could be used for this process, and some could provide robustness to possible adverse agents.
As an example, suppose we use cyclic matching, where agent $i+1$ is assigned to agent $i$ as reference for $i<N$, and agent $1$ is assigned to agent $N$.
Under cyclic matching, any adversary would disrupt at most two agents, and the majority is not affected.
Furthermore, at the beginning of each epoch, we could rerun the procedure and assign new agents.
Such replacement generates little extra computational costs, since the principal needs to update estimation and regenerate mechanism anyway.

\textbf{Compatibility with delayed/batched feedback.}
In the practical setting, the feedback to the principal might not be immediately available and may come in batches \cite{chapelle2011empirical, mcmahan2013ad}. 
The delayed/batched feedback setting has been studied in multiple online learning and decision-making problems \cite{joulani2013online, gao2019batched}.
In $\dram$, the mechanisms are computed at the beginning of each epoch, and stay the same throughout.
This means $\dram$ naturally handles delayed and batched feedback since report data is only required for computation at the beginning of each epoch.
Particularly, Corollary \ref{col: dram upper bound known t} suggests that $O(\log\log T)$ epochs are already sufficient for the $O(\sqrt{T})$ bound up to logarithmic terms.
Nevertheless, such small epoch counts rely on a carefully designed epoch schedule.
For example, $\dram$ uses a geometric epoch schedule, under which it quickly adapts early on, and slows down when sufficient data are gained.
Deviation from the $O(\sqrt{T})$ bound may appear when the principal faces a different epoch schedule constraint \cite{perchet2016batched}.


\section{Experiments}

In this section, we perform numerical experiments to verify and demonstrate our proposed algorithm.

\textbf{Environment.}
We consider a sequential labeling game (as in Figure \ref{fig: toy_example} and Example \ref{eg: image labeling}) with $N=3$ agents and $d=3$ labels with a uniform prior $p_Y(y)=1/3$.
Each agent $i$ has a diagonally-dominant skill distribution $p_i(\cdot\mid y)$ that is symmetric across labels:
\[
p_i(x\mid y)=
\begin{cases}
\alpha_i, & x=y,\\[2pt]
\frac{1-\alpha_i}{d-1}, & x\neq y,
\end{cases}
\qquad
\alpha_i = 0.7 + \Big(i-\frac{N-1}{2}\Big)\cdot 0.02.
\]
Thus for $N=3$, $\alpha_0=0.68$, $\alpha_1=0.70$, and $\alpha_2=0.72$, with the remaining probability mass spread uniformly over the $d-1$ incorrect labels.
We use horizon $T=10^6$ and observation cost $c=0.3$.
During warm-start, the principal acquires the ground-truth label $Y_t$ from an external expert at cost $C_{\mathrm{lab}}=3.0$ per round.
We run $1000$ independent episodes.

\textbf{Algorithm setup.}
We implement the exact $\dram$ algorithm (Algorithm \ref{alg: dram}) in this simulation.
To match the theoretical parameterization, we compute
\[
\rho_{\text{true}}=\min_{i,\,x\in\mathcal{Y}} \mathbb{P}(X_i=x),
\qquad
\gamma_{(i)}=\max_{x\in\mathcal{Y}} \mathbb{P}(X_i=x),
\]
and the agent-wise robustness thresholds $\tilde{\eta}_i$ from Theorem~\ref{thm: cost of robustness}, then set 
\[
\tilde{\eta}_{\text{true}}=\min_i \tilde{\eta}_i,
\qquad
\tilde{\eta}_{\text{used}}=\min\!\left(0.9\,\tilde{\eta}_{\text{true}},\,1/\sqrt{2}\right),
\qquad
\rho_{\text{used}}=0.99\,\rho_{\text{true}}.
\]
Given $\varepsilon=10^{-3}$, we plug $(\tilde{\eta}_{\text{used}},\rho_{\text{used}})$ into the warm-start length formula in Algorithm~\ref{alg: dram} to obtain $\tau$. For this setting, $\tau$ is on the order of $10^5$, so the warm-start phase occupies only a moderate fraction of the horizon. In the warm-start phase, we use the simple fact-checking mechanism: reward both agent $1$ if their report agrees, and $0$ if not.

\begin{figure}
    \centering
    \includegraphics[width=0.98\linewidth]{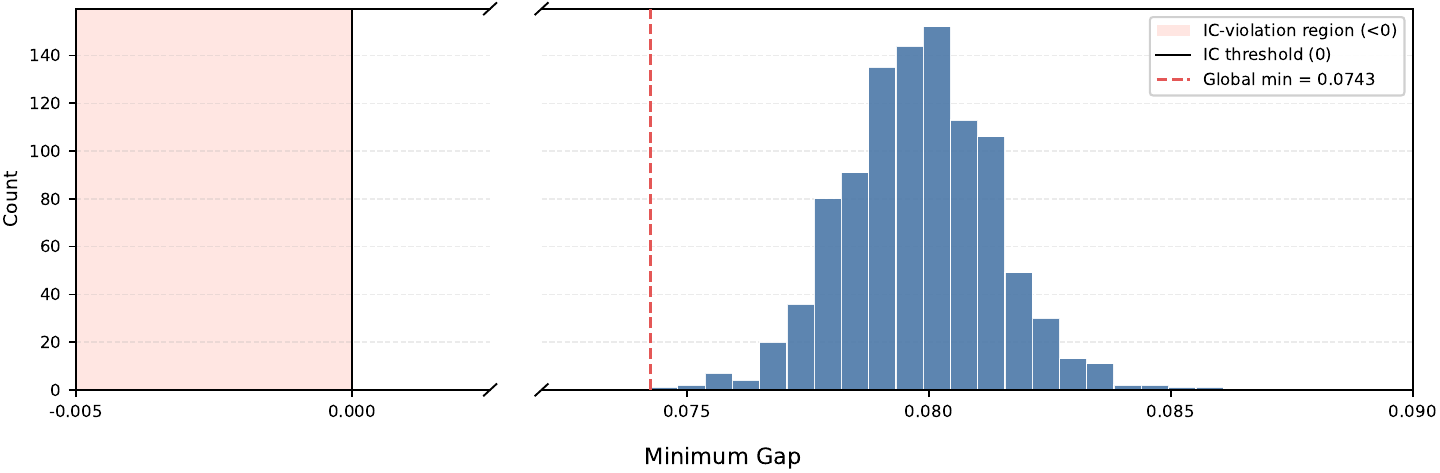}
    \caption{Minimum reward gap between truthful reporting and other pure strategies across 1000 runs of a sequential labeling game. Negative gap means the constraints are violated. In this simulation, the minimum gap distribution is well separated from $0$, meaning that truthful reporting dominates other strategies by a considerable margin, and $\dram$ guarantees truthfulness even with spare robustness.}
    \label{fig: ic gap}
\end{figure}

\textbf{Truthfulness checks.}
We verify truthfulness via a retrospective approach.
We set all participating agents to be always truthful.
At the beginning of every epoch, we perform a truthfulness check using the true joint distribution $p^*(X_i, X_j)$.


We compute the truthful expected utility
\[
U_{ik}^{\mathrm{truth}}
=
\mathbb{E}_{(X_i,X_j)\sim p^*}\!\left[R_{ik}(X_i,X_j)\right]-c,
\]
and compare it against two families of deviations:
\begin{itemize}
    \item \textbf{Lazy strategies:} the best constant report (without observation) $z\in\mathcal{Y}$ with no observation cost,
    \[
    U_{ik}^{\mathrm{lazy}}
    =
    \max_{z\in\mathcal{Y}}
    \mathbb{E}_{X_j\sim p^*}\!\left[R_{ik}(z,X_j)\right].
    \]
    \item \textbf{Misreporting strategies:} all deterministic mappings from observation to $g:\mathcal{Y}\to\mathcal{Y}$ excluding the identity,
    \[
    U_{ik}^{g}
    =
    \mathbb{E}_{(X_i,X_j)\sim p^*}\!\left[R_{ik}(g(X_i),X_j)\right]-c.
    \]
\end{itemize}


We define the IC gap for agent $i$ in epoch $k$ as
\[
\mathrm{Gap}_{ik}
=
U_{ik}^{\mathrm{truth}}
-
\max\!\Big\{U_{ik}^{\mathrm{lazy}},\ \max_{g\neq \mathrm{id}} U_{ik}^{g}\Big\},
\]
and, for each episode, record the minimum $\mathrm{Gap}_{ik}$ across all agents and epochs.
If any $\mathrm{Gap}_{ik}$ is negative, we count the episode as an IC violation.
Figure~\ref{fig: ic gap} shows the histogram of per-episode minimum IC gaps.


\begin{figure}
    \centering
    \includegraphics[width=0.98\linewidth]{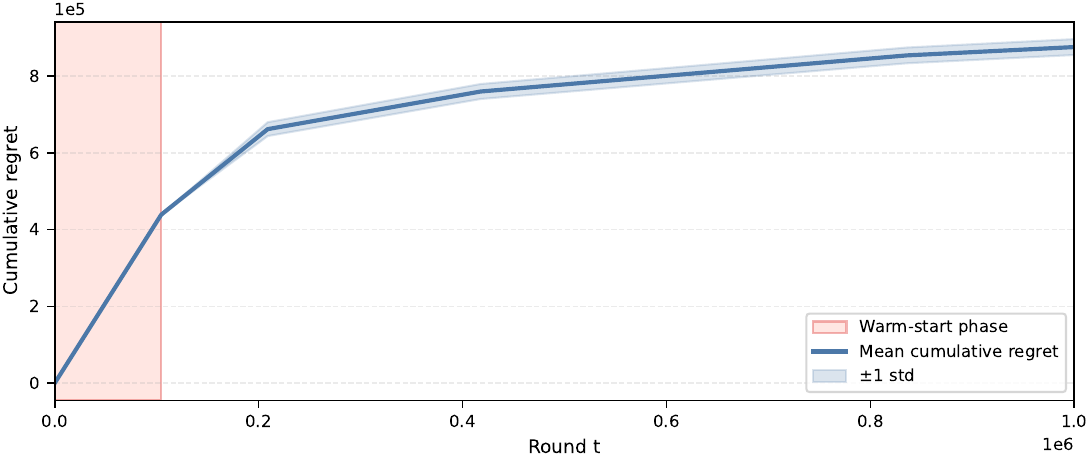}
    \caption{Average cumulative regret over time across 1000 runs of a sequential labeling game. The first $\sim10^5$ rounds are warm-start phase, and then comes with doubling epoch lengths. The regret curve is piecewise linear, as the expected round-wise regret within each epoch stays unchanged. The geometric epoch schedule ensures $O(\sqrt{T})$ regret.}
    \label{fig: regret}
\end{figure}

\textbf{Regret checks.}
We collect the cumulative regret at each round within each episode of the game.
In addition to the regret notion defined in Section~\ref{sec: adaptive mech design}, we include the warm-start verification cost in the regret formula.
Specifically, we plot the cumulative regret up to time $t$:
\[
\mathrm{Reg}(t)
=
\sum_{s=1}^{t}\left(\sum_{i=1}^{N} R_{is} - Nc + \mathbf{1}\{s\le \tau\}\,C_{\mathrm{lab}}\right),
\]
and report the mean and standard deviation across $1000$ episodes.
Figure~\ref{fig: regret} shows the resulting regret trajectory.


\textbf{Results.}
$\dram$ passes the truthfulness checks as across the $1000$ episodes, we observe \emph{no} truthfulness violations.
The global minimum gap is approximately $0.0743 > 0$, and the distribution of per-episode minimum gaps is well separated from zero.
This indicates that, in a setting that exactly matches our assumptions and with theoretically chosen parameters $(\tau, \delta)$, $\dram$ indeed implements a truthful mechanism in practice.

$\dram$ also consistently achieves the $O(\sqrt{T})$ regret, as shown in Figure \ref{fig: regret}.
In this simulation we have 5 epochs in total.
Within each epoch, the mechanism stays unchanged, therefore the cumulative regret curve is piecewise linear.
In summary, this experiment demonstrates the efficiency and robustness of the vanilla $\dram$ algorithm, and validates the correctness of Theorem \ref{thm: dram upper bound}.
In fact, the existence of extra IC gap seems to suggest that further refinement are possible, as the current parameters are set for theoretical proofs rather than optimized for practical implementations.

\section{Conclusions}
In this paper, we designed an adaptive mechanism for the sequential mechanism design problem.
The studied problem assumes rational feedback compared to the prediction with expert advice problem from online learning, and relaxes the common knowledge assumption compared to the peer prediction problem from mechanism design.
Drawing insights from both fields, our proposed mechanism ensures truthful behaviors with high probability, while achieving optimal payment regret.
It also remains robust and adaptable in changing environments.

Looking forward, our work motivates interesting questions.
In Section \ref{sec: robust mechanism design}, the mechanism design problem is formulated as a linear optimization problem, with truthfulness encoded as constraints.
A key idea of our algorithm is to solve a distributionally robust variant of this problem while gradually learning the relevant constraints over time.
This principle seems to be broadly applicable: since many decision-making problems can be cast as optimization tasks, the same approach might extend naturally to online, adaptive, or sequential variants of other real-world decision-making problems beyond mechanism design.

\section*{Acknowledgments}
The authors thank Rui Ai, Jiachun Li, Chonghuan Wang, Yunzong Xu, Yuan Zhou, Feng Zhu for helpful comments and suggestions.
The authors are especially grateful to Runhuan Wang for his careful reading of the manuscript and detailed technical feedback, which significantly improved the paper.






\bibliographystyle{ACM-Reference-Format}
\bibliography{refs}

@String{Computer = "{IEEE} Computer" }

@String{Springer = "Springer-Verlag" }

@article{myerson1979incentive,
  title={Incentive compatibility and the bargaining problem},
  author={Myerson, Roger B},
  journal={Econometrica},
  pages={61--73},
  year={1979},
  publisher={JSTOR}
}

@article{blackwell1953equivalent,
  title={Equivalent comparisons of experiments},
  author={Blackwell, David},
  journal={The Annals of Mathematical Statistics},
  pages={265--272},
  year={1953},
  publisher={JSTOR}
}

@book{cover1999elements,
  title={Elements of information theory},
  author={Cover, Thomas M},
  year={1999},
  publisher={John Wiley \& Sons}
}

@inproceedings{radanovic2013robust,
  title={A robust bayesian truth serum for non-binary signals},
  author={Radanovic, Goran and Faltings, Boi},
  booktitle={Proceedings of the AAAI Conference on Artificial Intelligence},
  volume={27},
  number={1},
  pages={833--839},
  year={2013}
}

@article{cremer1988full,
  title={Full extraction of the surplus in Bayesian and dominant strategy auctions},
  author={Cr{\'e}mer, Jacques and McLean, Richard P},
  journal={Econometrica: Journal of the Econometric Society},
  pages={1247--1257},
  year={1988},
  publisher={JSTOR}
}

@article{koccyiugit2020distributionally,
  title={Distributionally robust mechanism design},
  author={Ko{\c{c}}yi{\u{g}}it, {\c{C}}a{\u{g}}{\i}l and Iyengar, Garud and Kuhn, Daniel and Wiesemann, Wolfram},
  journal={Management Science},
  volume={66},
  number={1},
  pages={159--189},
  year={2020},
  publisher={INFORMS}
}

@article{bergemann2005robust,
  title={Robust mechanism design},
  author={Bergemann, Dirk and Morris, Stephen},
  journal={Econometrica},
  pages={1771--1813},
  year={2005},
  publisher={JSTOR}
}

@article{lambert2011elicitation,
  title={Elicitation and evaluation of statistical forecasts},
  author={Lambert, Nicolas S},
  journal={Preprint},
  year={2011}
}

@article{weissman2003inequalities,
  title={Inequalities for the l1 deviation of the empirical distribution},
  author={Weissman, Tsachy and Ordentlich, Erik and Seroussi, Gadiel and Verdu, Sergio and Weinberger, Marcelo J},
  journal={Hewlett-Packard Labs, Tech. Rep},
  pages={125},
  year={2003}
}

@article{chapelle2011empirical,
  title={An empirical evaluation of thompson sampling},
  author={Chapelle, Olivier and Li, Lihong},
  journal={Advances in neural information processing systems},
  volume={24},
  year={2011}
}

@incollection{von2007theory,
  title={Theory of games and economic behavior: 60th anniversary commemorative edition},
  author={Von Neumann, John and Morgenstern, Oskar},
  booktitle={Theory of games and economic behavior},
  year={2007},
  publisher={Princeton university press}
}

@book{lattimore2020bandit,
  title={Bandit algorithms},
  author={Lattimore, Tor and Szepesv{\'a}ri, Csaba},
  year={2020},
  publisher={Cambridge University Press}
}

@book{cesa2006prediction,
  title={Prediction, learning, and games},
  author={Cesa-Bianchi, Nicolo and Lugosi, G{\'a}bor},
  year={2006},
  publisher={Cambridge university press}
}

@article{cesa2014regret,
  title={Regret minimization for reserve prices in second-price auctions},
  author={Cesa-Bianchi, Nicolo and Gentile, Claudio and Mansour, Yishay},
  journal={IEEE Transactions on Information Theory},
  volume={61},
  number={1},
  pages={549--564},
  year={2014},
  publisher={IEEE}
}

@article{zhao2024takes,
  title={It Takes Two: A Peer-Prediction Solution for Blockchain Verifier's Dilemma},
  author={Zhao, Zishuo and Chen, Xi and Zhou, Yuan},
  journal={arXiv preprint arXiv:2406.01794},
  year={2024}
}

@article{simchi2022bypassing,
  title={Bypassing the monster: A faster and simpler optimal algorithm for contextual bandits under realizability},
  author={Simchi-Levi, David and Xu, Yunzong},
  journal={Mathematics of Operations Research},
  volume={47},
  number={3},
  pages={1904--1931},
  year={2022},
  publisher={INFORMS}
}

@inproceedings{mcmahan2013ad,
  title={Ad click prediction: a view from the trenches},
  author={McMahan, H Brendan and Holt, Gary and Sculley, David and Young, Michael and Ebner, Dietmar and Grady, Julian and Nie, Lan and Phillips, Todd and Davydov, Eugene and Golovin, Daniel and others},
  booktitle={Proceedings of the 19th ACM SIGKDD international conference on Knowledge discovery and data mining},
  pages={1222--1230},
  year={2013}
}

@inproceedings{joulani2013online,
  title={Online learning under delayed feedback},
  author={Joulani, Pooria and Gyorgy, Andras and Szepesv{\'a}ri, Csaba},
  booktitle={International conference on machine learning},
  pages={1453--1461},
  year={2013},
  organization={PMLR}
}

@article{gao2019batched,
  title={Batched multi-armed bandits problem},
  author={Gao, Zijun and Han, Yanjun and Ren, Zhimei and Zhou, Zhengqing},
  journal={Advances in Neural Information Processing Systems},
  volume={32},
  year={2019}
}

@article{perchet2016batched,
  title={Batched Bandit Problems},
  author={Perchet, Vianney and Rigollet, Philippe and Chassang, Sylvain and Snowberg, Eric},
  journal={The Annals of Statistics},
  volume={44},
  number={2},
  pages={660--681},
  year={2016}
}

@article{roughgarden2010algorithmic,
  title={Algorithmic game theory},
  author={Roughgarden, Tim},
  journal={Communications of the ACM},
  volume={53},
  number={7},
  pages={78--86},
  year={2010},
  publisher={ACM New York, NY, USA}
}

@inproceedings{hajiaghayi2004adaptive,
  title={Adaptive limited-supply online auctions},
  author={Hajiaghayi, Mohammad Taghi and Kleinberg, Robert and Parkes, David C},
  booktitle={Proceedings of the 5th ACM Conference on Electronic Commerce},
  pages={71--80},
  year={2004}
}

@article{cesa2004generalization,
  title={On the generalization ability of on-line learning algorithms},
  author={Cesa-Bianchi, Nicolo and Conconi, Alex and Gentile, Claudio},
  journal={IEEE Transactions on Information Theory},
  volume={50},
  number={9},
  pages={2050--2057},
  year={2004},
  publisher={IEEE}
}

@article{cesa2007improved,
  title={Improved second-order bounds for prediction with expert advice},
  author={Cesa-Bianchi, Nicolo and Mansour, Yishay and Stoltz, Gilles},
  journal={Machine Learning},
  volume={66},
  number={2},
  pages={321--352},
  year={2007},
  publisher={Springer}
}

@article{agarwal2017second,
  title={Second-order stochastic optimization for machine learning in linear time},
  author={Agarwal, Naman and Bullins, Brian and Hazan, Elad},
  journal={Journal of Machine Learning Research},
  volume={18},
  number={116},
  pages={1--40},
  year={2017}
}

@inproceedings{gaillard2014second,
  title={A second-order bound with excess losses},
  author={Gaillard, Pierre and Stoltz, Gilles and Van Erven, Tim},
  booktitle={Conference on Learning Theory},
  pages={176--196},
  year={2014},
  organization={PMLR}
}

@article{besbes2016optimization,
  title={Optimization in online content recommendation services: Beyond click-through rates},
  author={Besbes, Omar and Gur, Yonatan and Zeevi, Assaf},
  journal={Manufacturing \& Service Operations Management},
  volume={18},
  number={1},
  pages={15--33},
  year={2016},
  publisher={INFORMS}
}

@inproceedings{helmbold1997some,
  title={Some label efficient learning results},
  author={Helmbold, David and Panizza, Sandra},
  booktitle={Proceedings of the tenth annual conference on Computational learning theory},
  pages={218--230},
  year={1997}
}

@article{cesa2005minimizing,
  title={Minimizing regret with label efficient prediction},
  author={Cesa-Bianchi, Nicolo and Lugosi, G{\'a}bor and Stoltz, Gilles},
  journal={IEEE Transactions on Information Theory},
  volume={51},
  number={6},
  pages={2152--2162},
  year={2005},
  publisher={IEEE}
}

@inproceedings{mitra2020adaptivity,
  title={On adaptivity in information-constrained online learning},
  author={Mitra, Siddharth and Gopalan, Aditya},
  booktitle={Proceedings of the AAAI Conference on Artificial Intelligence},
  volume={34},
  number={04},
  pages={5199--5206},
  year={2020}
}

@article{castro2023adaptive,
  title={Adaptive selective sampling for online prediction with experts},
  author={Castro, Rui and Hellstr{\"o}m, Fredrik and van Erven, Tim},
  journal={Advances in Neural Information Processing Systems},
  volume={36},
  pages={134--154},
  year={2023}
}

@article{milgrom2019auction,
  title={Auction market design: Recent innovations},
  author={Milgrom, Paul},
  journal={Annual Review of Economics},
  volume={11},
  number={1},
  pages={383--405},
  year={2019},
  publisher={Annual Reviews}
}

@article{wang2017display,
  title={Display advertising with real-time bidding (RTB) and behavioural targeting},
  author={Wang, Jun and Zhang, Weinan and Yuan, Shuai and others},
  journal={Foundations and Trends{\textregistered} in Information Retrieval},
  volume={11},
  number={4-5},
  pages={297--435},
  year={2017},
  publisher={Now Publishers, Inc.}
}

@article{choi2020online,
  title={Online display advertising markets: A literature review and future directions},
  author={Choi, Hana and Mela, Carl F and Balseiro, Santiago R and Leary, Adam},
  journal={Information systems research},
  volume={31},
  number={2},
  pages={556--575},
  year={2020},
  publisher={INFORMS}
}

@article{hart2000simple,
  title={A simple adaptive procedure leading to correlated equilibrium},
  author={Hart, Sergiu and Mas-Colell, Andreu},
  journal={Econometrica},
  volume={68},
  number={5},
  pages={1127--1150},
  year={2000},
  publisher={Wiley Online Library}
}

@article{myerson1981optimal,
  title={Optimal auction design},
  author={Myerson, Roger B},
  journal={Mathematics of operations research},
  volume={6},
  number={1},
  pages={58--73},
  year={1981},
  publisher={INFORMS}
}

@article{satchidanandan2023incentive,
  title={Incentive compatibility in two-stage repeated stochastic games},
  author={Satchidanandan, Bharadwaj and Dahleh, Munther A},
  journal={IEEE Transactions on Control of Network Systems},
  volume={11},
  number={1},
  pages={295--306},
  year={2023},
  publisher={IEEE}
}

@article{papadimitriou2022complexity,
  title={On the complexity of dynamic mechanism design},
  author={Papadimitriou, Christos and Pierrakos, George and Psomas, Alexandros and Rubinstein, Aviad},
  journal={Games and Economic Behavior},
  volume={134},
  pages={399--427},
  year={2022},
  publisher={Elsevier}
}

@techreport{wilson1985game,
  title={Game-Theoretic Analysis of Trading Processes.},
  author={Wilson, Robert},
  year={1985}
}

@article{miller2005eliciting,
  title={Eliciting informative feedback: The peer-prediction method},
  author={Miller, Nolan and Resnick, Paul and Zeckhauser, Richard},
  journal={Management Science},
  volume={51},
  number={9},
  pages={1359--1373},
  year={2005},
  publisher={INFORMS}
}

@inproceedings{li2022optimization,
  title={Optimization of scoring rules},
  author={Li, Yingkai and Hartline, Jason D and Shan, Liren and Wu, Yifan},
  booktitle={Proceedings of the 23rd ACM Conference on Economics and Computation},
  pages={988--989},
  year={2022}
}

@article{chen2020truthful,
  title={Truthful data acquisition via peer prediction},
  author={Chen, Yiling and Shen, Yiheng and Zheng, Shuran},
  journal={Advances in Neural Information Processing Systems},
  volume={33},
  pages={18194--18204},
  year={2020}
}

@article{zheng2024proper,
  title={Proper Dataset Valuation by Pointwise Mutual Information},
  author={Zheng, Shuran and Qi, Xuan and Chen, Rui Ray and Kwon, Yongchan and Zou, James},
  journal={arXiv preprint arXiv:2405.18253},
  year={2024}
}

@inproceedings{dasgupta2013crowdsourced,
  title={Crowdsourced judgement elicitation with endogenous proficiency},
  author={Dasgupta, Anirban and Ghosh, Arpita},
  booktitle={Proceedings of the 22nd international conference on World Wide Web},
  pages={319--330},
  year={2013}
}

@article{wang2023uncertainty,
  title={An uncertainty-and collusion-proof voting consensus mechanism in blockchain},
  author={Wang, Shengling and Qu, Xidi and Hu, Qin and Wang, Xia and Cheng, Xiuzhen},
  journal={IEEE/ACM Transactions on Networking},
  volume={31},
  number={5},
  pages={2376--2388},
  year={2023},
  publisher={IEEE}
}

@article{kong2024dominantly,
  title={Dominantly truthful peer prediction mechanisms with a finite number of tasks},
  author={Kong, Yuqing},
  journal={Journal of the ACM},
  volume={71},
  number={2},
  pages={1--49},
  year={2024},
  publisher={ACM New York, NY}
}

@article{kong2019information,
  title={An information theoretic framework for designing information elicitation mechanisms that reward truth-telling},
  author={Kong, Yuqing and Schoenebeck, Grant},
  journal={ACM Transactions on Economics and Computation (TEAC)},
  volume={7},
  number={1},
  pages={1--33},
  year={2019},
  publisher={ACM New York, NY, USA}
}

@inproceedings{shnayder2016informed,
  title={Informed truthfulness in multi-task peer prediction},
  author={Shnayder, Victor and Agarwal, Arpit and Frongillo, Rafael and Parkes, David C},
  booktitle={Proceedings of the 2016 ACM Conference on Economics and Computation},
  pages={179--196},
  year={2016}
}

@article{zhang2025stochastically,
  title={Stochastically Dominant Peer Prediction},
  author={Zhang, Yichi and Xu, Shengwei and Pennock, David and Schoenebeck, Grant},
  journal={arXiv preprint arXiv:2506.02259},
  year={2025}
}

@article{zhu2022sample,
  title={The sample complexity of online contract design},
  author={Zhu, Banghua and Bates, Stephen and Yang, Zhuoran and Wang, Yixin and Jiao, Jiantao and Jordan, Michael I},
  journal={arXiv preprint arXiv:2211.05732},
  year={2022}
}

@inproceedings{ho2014adaptive,
  title={Adaptive contract design for crowdsourcing markets: Bandit algorithms for repeated principal-agent problems},
  author={Ho, Chien-Ju and Slivkins, Aleksandrs and Vaughan, Jennifer Wortman},
  booktitle={Proceedings of the fifteenth ACM conference on Economics and computation},
  pages={359--376},
  year={2014}
}

@article{blum2004online,
  title={Online learning in online auctions},
  author={Blum, Avrim and Kumar, Vijay and Rudra, Atri and Wu, Felix},
  journal={Theoretical Computer Science},
  volume={324},
  number={2-3},
  pages={137--146},
  year={2004},
  publisher={Elsevier}
}

@inproceedings{han2015minimax,
  title={Minimax estimation of discrete distributions},
  author={Han, Yanjun and Jiao, Jiantao and Weissman, Tsachy},
  booktitle={2015 IEEE International Symposium on Information Theory (ISIT)},
  pages={2291--2295},
  year={2015},
  organization={IEEE}
}

\newpage
\appendix
\begin{center}
    \large \textbf{Appendix}
\end{center}

\section{Deferred Proofs}
\label{sec: defered proofs}

\subsection{Proof of Proposition \ref{prop: truthfulness is necessary}}

Suppose the past rounds' report history is $\mathbf{z}_1, \mathbf{z}_2, \dotsb, \mathbf{z}_{t-1}$.
All probability laws and strategies discussed below are conditional on such history.

\blue{We first show that the optimal value $V(\boldsymbol{Z})$ is obtained by the Bayes optimal decision rule $\delta^*$ that maximizes posterior expected objectives:
$$
    \delta^*(\boldsymbol{z}) = \max_{a\in\mathcal{A}} \sum_{y\in\mathcal{Y}} u(y, a)\mathbb{P}(Y_t = y \mid \boldsymbol{Z}=\boldsymbol{z}), \quad \forall \boldsymbol{z}.
$$
This is because we have the tower property:
$$
    V(\boldsymbol{Z}) = \max_\delta \mathbb{E}[u(Y_t, \delta(\boldsymbol{Z}))] = \max_\delta \mathbb{E}[\mathbb{E}[u(Y_t, \delta(\boldsymbol{Z}))\mid \boldsymbol{Z}]]
$$
Hence for each report realization $\boldsymbol{Z} = \boldsymbol{z}$, the action only affects the inner conditional expected objective itself, and the optimal decision rule is achieved by pointwise picking the optimal action for each $\boldsymbol{z}$.}

\blue{Now we introduce some definitions to connect our proposition with Blackwell's informativeness theorem.}
In round $t$, denote agent $i$'s \textit{report strategy} by $s_{it}: \mathcal{Y} \to \Delta(\mathcal{Y})$, where $s_{it}(z\mid x)$ is the probability to report $z$ given $x$.
Define agent $i$'s \textit{information structure} $\sigma_{it}$ as her conditional probability law to report $z$ conditional on true label $y$. Then we have
$$
    \sigma_{it} (z \mid y) = \sum_{x\in \mathcal{Y}} s_{it}(z\mid x) p_i(x\mid y).
$$

Alternatively, we can write $\sigma_{it} = s_{it}\circ p_i$ as the information structure is induced by strategy $s_{it}$.
We let $\boldsymbol{\sigma} = (\sigma_{1 t}, \dotsb, \sigma_{Nt})$ be the information structure profile of all agents, \blue{and let $\boldsymbol{s} = (s_{1 t}, \dotsb, s_{Nt})$ be the strategy profile of all agents.
With a slight abuse of notation, for a given decision-making problem $(\mathcal{A}, u)$, we define the Bayes-optimal value under information structure profile $\boldsymbol{\sigma}$ and prior belief on true label $p_Y$ as:
$$
    V(\boldsymbol{\sigma}) = \sum_{\boldsymbol{z}} \mathbb{P}(\boldsymbol{Z} = \boldsymbol{z})\left(\max_{a\in\mathcal{A}} \sum_{y\in\mathcal{Y}}u(y,a)\frac{\boldsymbol{\sigma}(\boldsymbol{z} | y)p_Y(y)}{p_{\boldsymbol{Z}}(\boldsymbol{z})}\right)
$$ 
Notice that a report profile $\boldsymbol{Z}$ is generated by applying the information structure $\boldsymbol{\sigma}$ to the underlying ground truth $Y_t$.
Also, in the beginning we established that the optimal value $V(\boldsymbol{Z})$ can be obtained by adopting the Bayes-optimal decision rule.
Therefore, for a given information structure profile $\boldsymbol{\sigma}$ and its corresponding report $\boldsymbol{Z}$, the definition of $V(\boldsymbol{\sigma})$ and $V(\boldsymbol{Z})$ are equivalent, and we can use them interchangeably.
}



We now state the Blackwell's informativeness theorem.
The theorem states three equivalent conditions, of which we only use the following two.

\begin{lemma}[\cite{blackwell1953equivalent}]
\label{lem: blackwell}
Suppose we have two information structure profiles $\boldsymbol{\sigma}$ and $\boldsymbol{\sigma}'$, the following two conditions are equivalent:
\begin{itemize}
    \item \blue{$V(\boldsymbol{\sigma}) \geq V(\boldsymbol{\sigma}')$ for every prior belief $p_Y$ and every decision problem $(\mathcal{A}, u)$.}
    \item There exists a stochastic map $\Gamma$ such that $\boldsymbol{\sigma}^{\prime}=\Gamma \circ \boldsymbol{\sigma}$. That is, $\boldsymbol{\sigma}^{\prime}$ is a garbling of $\boldsymbol{\sigma}$.
\end{itemize}
\end{lemma}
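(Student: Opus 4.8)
The plan is to establish the two implications of the equivalence separately, reading the first condition in its Blackwell sense: $\boldsymbol{\sigma}$ is \emph{more informative} than $\boldsymbol{\sigma}'$, meaning that the optimal attainable expected utility under $\boldsymbol{\sigma}$ is weakly larger than under $\boldsymbol{\sigma}'$ for \emph{every} decision problem (the accuracy value $W$ defined above being the special case where the action is a guess of $Y_t$ and the utility is $\mathbf{1}\{A_t=Y_t\}$). It is convenient to work in the finite matrix representation: identify $\boldsymbol{\sigma}$ with the row-stochastic matrix $[\boldsymbol{\sigma}]_{y,\mathbf{z}}=\sigma(\mathbf{z}\mid y)$, rows indexed by $y\in\mathcal{Y}$ and columns by report profiles $\mathbf{z}\in\mathcal{Y}^N$, and likewise for $\boldsymbol{\sigma}'$. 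In this language ``$\boldsymbol{\sigma}'=\Gamma\circ\boldsymbol{\sigma}$'' is the matrix identity $\boldsymbol{\sigma}'=\boldsymbol{\sigma}\,\Gamma$ for some row-stochastic $\Gamma$ (the channel from the signal space of $\boldsymbol{\sigma}$ to that of $\boldsymbol{\sigma}'$), and a decision problem is a utility matrix $U=(U_{y,a})$ whose value under an experiment $\mu$ is $W_U(\mu)=\sum_{s}\max_{a}\sum_y\pi(y)\,\mu_{y,s}\,U_{y,a}$, with $\pi$ a fixed full-support prior.

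First I would prove the easy direction, garbling $\Rightarrow$ domination. Suppose $\boldsymbol{\sigma}'=\boldsymbol{\sigma}\,\Gamma$. Any (without loss deterministic) decision rule $d':S'\to\mathcal{A}$ used under $\boldsymbol{\sigma}'$ can be \emph{simulated} under $\boldsymbol{\sigma}$ by the randomized rule that, on signal $s$, draws $s'\sim\Gamma(\cdot\mid s)$ and plays $d'(s')$. Since $\boldsymbol{\sigma}'=\boldsymbol{\sigma}\,\Gamma$, this simulation reproduces exactly the joint law of $(Y_t,\text{action})$ obtained under $\boldsymbol{\sigma}'$, hence the same expected utility. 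Therefore the optimal value under $\boldsymbol{\sigma}$ is at least that under $\boldsymbol{\sigma}'$ for every $U$, i.e.\ $W_U(\boldsymbol{\sigma})\ge W_U(\boldsymbol{\sigma}')$; specializing $U$ to the guessing problem recovers the $W$ inequality.

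The hard direction, domination $\Rightarrow$ garbling, is where the real work lies, and I would argue the contrapositive by a separating-hyperplane construction. The set $G(\boldsymbol{\sigma})=\{\boldsymbol{\sigma}\,\Gamma:\Gamma\text{ row-stochastic}\}$ of all garblings of $\boldsymbol{\sigma}$ is the image of the compact convex polytope of stochastic matrices under a linear map, hence compact and convex. If $\boldsymbol{\sigma}'\notin G(\boldsymbol{\sigma})$, strict separation yields a matrix $C=(C_{y,s'})$ with $\langle C,\boldsymbol{\sigma}'\rangle>\sup_{\Gamma}\langle C,\boldsymbol{\sigma}\,\Gamma\rangle$. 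The crux is to read $C$ as a decision problem: take action set $\mathcal{A}=S'$ and utility $U_{y,a}=C_{y,a}/\pi(y)$. A direct computation gives $\sup_\Gamma\langle C,\boldsymbol{\sigma}\,\Gamma\rangle=\sum_{s}\max_{s'}\sum_y\boldsymbol{\sigma}_{y,s}\,C_{y,s'}=W_U(\boldsymbol{\sigma})$, because the per-column optimization over $\Gamma(\cdot\mid s)$ selects the best target signal; meanwhile the feasible (identity) rule $a=s'$ shows $W_U(\boldsymbol{\sigma}')\ge\langle C,\boldsymbol{\sigma}'\rangle$. Chaining these gives $W_U(\boldsymbol{\sigma}')\ge\langle C,\boldsymbol{\sigma}'\rangle>\sup_\Gamma\langle C,\boldsymbol{\sigma}\,\Gamma\rangle=W_U(\boldsymbol{\sigma})$, so $\boldsymbol{\sigma}'$ strictly beats $\boldsymbol{\sigma}$ on this problem, contradicting domination. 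Hence $\boldsymbol{\sigma}'\in G(\boldsymbol{\sigma})$, which is exactly the garbling condition.

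The main obstacle is precisely this conversion of the abstract separating functional $C$ into a genuine decision problem whose values under the two experiments reproduce the two sides of the separation inequality: the identities $\sup_\Gamma\langle C,\boldsymbol{\sigma}\,\Gamma\rangle=W_U(\boldsymbol{\sigma})$ and $W_U(\boldsymbol{\sigma}')\ge\langle C,\boldsymbol{\sigma}'\rangle$ must be arranged so that the separation gap survives the passage to optimal decision rules. Two routine points I would verify to close the argument are that $G(\boldsymbol{\sigma})$ is closed (so strict separation is available) and that restricting to deterministic decision rules and to the single full-support prior $\pi=p_Y$ loses no generality for finite $\mathcal{Y}$ and $\mathcal{A}$ (mixed rules are dominated by their support, and priors are folded into $U$). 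Since everything is finite-dimensional, no measure-theoretic care beyond these is required.
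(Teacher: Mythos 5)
This lemma is imported by the paper directly from \cite{blackwell1953equivalent} and is never proved in the text, so there is no in-paper argument to compare against; what you have written is the standard Blackwell--Sherman--Stein proof, and it is correct. Your easy direction (simulate any rule for $\boldsymbol{\sigma}'$ under $\boldsymbol{\sigma}$ by randomizing through $\Gamma$, which reproduces the joint law of $(Y_t,\text{action})$) and your hard direction (the garbling set $G(\boldsymbol{\sigma})$ is a compact convex polytope image, strictly separate $\boldsymbol{\sigma}'\notin G(\boldsymbol{\sigma})$ by a functional $C$, then convert $C$ into a decision problem with action set $S'$ and utility $U_{y,a}=C_{y,a}/\pi(y)$ so that $\sup_\Gamma\langle C,\boldsymbol{\sigma}\Gamma\rangle=W_U(\boldsymbol{\sigma})$ while the identity rule gives $W_U(\boldsymbol{\sigma}')\ge\langle C,\boldsymbol{\sigma}'\rangle$) both check out, including the two housekeeping points you flag (closedness of $G(\boldsymbol{\sigma})$, full-support prior, which holds here since $p_Y(y)\ge\underline{p}>0$).

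One aspect of your write-up deserves emphasis because it is a genuine repair of the statement rather than a stylistic choice: as literally written, with $W$ the single fixed accuracy functional, the equivalence is \emph{false} in the domination-implies-garbling direction --- two experiments can attain the same optimal accuracy on the labeling problem with neither being a garbling of the other, so a one-problem value comparison cannot certify a garbling. Blackwell's theorem requires the value inequality to hold for \emph{every} decision problem, exactly as you reinterpret it, and your separation argument produces the witnessing decision problem on which domination would fail. This is also the version the paper implicitly needs in the necessity step of Proposition \ref{prop: truthfulness is necessary}, where maximal accuracy is used to conclude a garbling relation. You also silently correct a second slip in the statement: the displayed inequality $W(\boldsymbol{\sigma})\le W(\boldsymbol{\sigma}')$ is inconsistent with both the accompanying prose and the garbling direction $\boldsymbol{\sigma}'=\Gamma\circ\boldsymbol{\sigma}$ (a garbling of $\boldsymbol{\sigma}$ can only have weakly \emph{smaller} value); your reading with the roles as in the prose is the right one.
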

Note that this lemma applies for general $\boldsymbol{\sigma}$, even those where agent $i$'s report may depend on other agents' observation.
In our setting, we are only concerned with the subset of structures where each agent's report is independent from others' observation.

\textbf{Sufficiency.} 
We first prove that permutative strategies are sufficient for maximal quality.
Fix any agent $i$, and let $\pi$ be her permutative strategy.
Any report strategy $s$ is a garbling of $\pi$, since $s = (s \circ \pi^{-1}) \circ \pi$.
Therefore the corresponding information structure $\sigma_i' = s\circ p_i$ is also a garbling of $\sigma_i = \pi \circ p_i$.
This applies for all $i$, therefore any information structure profile $\boldsymbol{\sigma_{s}}$ induced by strategy profile $\boldsymbol{s}$ is also a garbling for $\boldsymbol{\sigma_{\pi}}$ induced by any permutation strategy profile $\boldsymbol{\pi}$.
\blue{From Lemma \ref{lem: blackwell}, we therefore have $V(\boldsymbol{\sigma}_{\pi}) \geq V(\boldsymbol{\sigma}_{s})$, and equivalently $V(\boldsymbol{\boldsymbol{Z}}_{\pi}) \geq V(\boldsymbol{Z}_{s})$ for every $p_Y$ and decision problem under a fixed skill profile $p_{\boldsymbol{X}|Y}$.
Finally, notice that for every skill profile the above inequality always hold, so by taking union over all possible skill profile, we can recover the original maximal quality statement in Theorem \ref{prop: truthfulness is necessary}.}

\textbf{Necessity.} 
\blue{
We now show that permutative strategies are necessary.
Since the strategy profile produces a report that dominates other reports for any joint law $p_{Y,\boldsymbol{X}}$ and decision problems, we can fix a specific naive skill profile, and the information structure profile generated by such strategy profile  would dominate any other information structure under this specific naive skills.
}

\blue{
Fix any focal player $i$.
Consider a skill profile $p_{\boldsymbol{X} \mid Y }$ whereas only focal player $i$ is $100\%$ accurate: $p_i(x_i\mid y) = \mathbf{1}\{x_i = y\}$, while the rest players always observes the same symbol arrangements $x^* \in \mathcal{Y}^{N-1}$: $p_{-i} (x_{-i} \mid y) = \mathbf{1} \{x_{-i} = x^*\}$.
}

\blue{
Combined with this skill profile, the report strategy profile produces an information profile $\sigma$ that dominates in value over other strategy-induced information profile $\sigma'$ under any $p_Y$.
From Lemma \ref{lem: blackwell} this implies that $\sigma' = \Gamma \circ \sigma$.
Consider the information profile $\sigma'$ produced by the following strategy: player $i$ stays always truthful, while others uses the same report strategy as $\sigma$.
In other words, this gives us the following:
\begin{align*}
    \sigma((z_i, z_{-i}) \mid y) &= s_i (z_i \mid y) \cdot s_{-i}(z_{-i} \mid x^*) \\
    \sigma'((z_i, z_{-i}) \mid y) &= \mathbf{1}\{z_i = y\} \cdot s_{-i}(z_{-i} \mid x^*)
\end{align*}
and $\sigma'$ is a garbling of $\sigma$.
}

\blue{
We first show the truthful strategy $\mathbf{1}\{z_i = y\}$ must be a garbling of player $i$'s current strategy $s_i(z_i \mid y)$.
We have
\begin{align*}
    \mathbf{1}\{z_i = y\} \cdot s_{-i}(z_{-i} \mid x^*) = \sum_{w} \Gamma(z_i, z_{-i} \mid w_i, w_{-i}) \cdot s_i (w_i \mid y) \cdot s_{-i}(w_{-i} \mid x^*)
\end{align*}
Marginalizing out $z_{-i}$ on the left hand side gives that
\begin{align*}
    \sum_{z_{-i}} \mathbf{1}\{z_i = y\} \cdot s_{-i}(z_{-i} \mid x^*) &= \mathbf{1}\{z_i = y\},
\end{align*}
and on the right
$$
    \sum_{w} \left( \sum_{z_{-i}} \Gamma(z_i, z_{-i} \mid w_i, w_{-i}) \right) \cdot s_i (w_i \mid y) \cdot s_{-i}(w_{-i} \mid x^*)
$$
which is a stochastic kernel from $\left(w_i, w_{-i}\right)$ to $z_i$. Then
$$
\mathbf{1}\left\{z_i=y\right\}=\sum_{w_i, w_{-i}} \widetilde{\Gamma}\left(z_i \mid w_i, w_{-i}\right) S_i\left(w_i \mid y\right) S_{-i}\left(w_{-i} \mid x^*\right)
$$
Now average over $w_{-i}$ according to $S_{-i}\left(\cdot \mid x^*\right)$, and define
$$
\bar{\Gamma}\left(z_i \mid w_i\right):=\sum_{w_{-i}} \widetilde{\Gamma}\left(z_i \mid w_i, w_{-i}\right) s_{-i}\left(w_{-i} \mid x^*\right) .
$$
Then $\bar{\Gamma}$ is a stochastic kernel from $w_i$ to $z_i$, and we obtain that for all $y \in \mathcal{Y}$
$$
\mathbf{1}\left\{z_i=y\right\}=\sum_{w_i} \bar{\Gamma}\left(z_i \mid w_i\right) s_i\left(w_i \mid y\right),
$$
}

\blue{
We now show that such property $\mathbf{1}\{z_i = x_i \} = \bar{\Gamma} \circ s$ indicates that $s$ must be a permutative strategy $\pi$.}

First, we prove that such a strategy cannot have overlapped labels.
Consider the truthful strategy $s'(z\mid x) = \mathbf{1}\{z=x\}$, we would have
$$
1 = s'(x\mid x) = \sum_{z\in \mathcal{Y}} \Gamma(x \mid z)s(z\mid x).
$$
Since $\sum_{z\in \mathcal{Y}} s(z\mid x) = 1$, for each $z$ where $s(z\mid x) > 0$, we must have $\Gamma(x\mid z) = 1$, otherwise the sum of weighted average would fall short of $1$.
Now, assume we have such overlapped labels $z$ and the corresponding $x_1$ and $x_2$, it would mean that $\Gamma(x_1 \mid z) = \Gamma(x_2 \mid z) = 1$.
But $\Gamma(\cdot \mid z)$ is a single probability distribution, so it cannot assign probability $1$ to two different outcomes simultaneously, forming a contradiction and we prove the overlapped signals cannot exist.

Since $s$ cannot have overlapped labels, by counting we know each observation must be mapped to one and only one label, meaning it is a permutative strategy.
Finally, performing this arguments over all player $i$ proves the necessity.

\textbf{Suboptimality of laziness.} Finally, we show that the lazy option (directly report according to a prior belief $\hat{p}_Y$) is strictly worse than observation with permutation strategy regardless of the lazy reporting strategy $\hat{p}_Y$ used.
Consider the case with two labels, where $p_Y(y) = 1/2$ for both $y \in \mathcal{Y}$ and $p_i = \mathbf{1}\{x = y\}$.
The induced information structure is $\sigma_Y(z \mid y) = \hat{p}_Y(z)$, and the information structure from a permutation strategy is $\sigma_{\pi} = \pi \circ p_i$.
Actually $\Gamma = \sigma_Y \circ \pi^{-1}$ makes $\sigma_Y$ a garbling of $\sigma_\pi$, since $\sigma_Y \circ \pi^{-1} \circ \pi \circ p_i = \sigma_Y \circ p_i = \sigma_Y$.
However, $\sigma_\pi$ is not a garbling of $\sigma_Y$ since the corresponding row-stochastic matrix of $\sigma_Y$ is rank $1$ but $\sigma_\pi$'s row-stochastic matrix is rank $2$.
Therefore Lemma \ref{lem: blackwell} suggests lazy option is strictly dominated by observation with permutation strategy.

$\square$

\subsection{Proof of Theorem \ref{thm: optimal cost of 2-agent peer prediction}}
\label{appendix: proof of optimal cost of 2-agent peer prediction}
\textbf{Feasibility.}
Suppose $\mathbf{B}$ is invertible. Then for arbitrary matrix $\mathbf{M}$, there exists $\mathbf{R} = (\mathbf{B}^{-1} \mathbf{M})^\intercal$ such that $\mathbf{B}\mathbf{R}^\intercal = \mathbf{M}$.
Notice that the entry $\mathbf{M}_{xy}$ is exactly agent $1$'s expected reward given she observes label $x$ and reports label $y$.
Hence for our purposes, we can construct an $\mathbf{M}$ whose diagonal entries are greater than $c$, and off-diagonal entries are less than $c$, then the corresponding $\mathbf{R}$ satisfies the first two constraints.
(Actually, if $\mathbf{B}$ is invertible, the first two constraints of \eqref{eq: LP formation for two-agent mechanism design} can be satisfied for arbitrary right-hand side values.)

Now we consider the third constraint $\mathbf{R}\mathbf{d} \leq \mathbf{0}$.
Notice that we have $\mathbf{d}^\intercal = \sum_x \mathbb{P}(X_i = x) \cdot \mathbf{B}_{x:}$.
Therefore letting $\mathbf{B}\mathbf{R}^\intercal = \mathbf{M}$, we have 
\begin{align*}
    \mathbf{R}\mathbf{d} &= \sum_x \mathbb{P}(X_i = x)\cdot \mathbf{R}  (\mathbf{B}_{x:})^\intercal \\
    &= \sum_x \mathbb{P}(X_i = x)\cdot (\mathbf{M}_{x:})^\intercal.
\end{align*}
Therefore, to satisfy all the three constraints, we need to find a matrix $\mathbf{M}$ whose linear combinations of its rows under coefficients $\{\mathbb{P}(X_i = x)\}_{x\in \mathcal{Y}}$ yield a vector with non-positive entries.
Let $\gamma = \max_x \mathbb{P}(X_i = x)$, we know that $\gamma <1$.
Then if we let all diagonal values of $\mathbf{M}$ be $c$, and all off-diagonal values of $\mathbf{M}$ equals $-c\gamma/(1-\gamma)$, then we would have for all $x'\in \mathcal{Y}$, 
\begin{align*}
    (\mathbf{R}\mathbf{d})_{x'} &= \sum_x \mathbb{P}(X_i = x)\cdot (\mathbf{M}_{xx'})^\intercal \\
    &= \mathbb{P}(X_i = x')\cdot c + (1-\mathbb{P}(X_i=x')) \cdot (-c\gamma/(1-\gamma)) \\
    &\leq \gamma \cdot c + (1-\gamma) \cdot (-c\gamma/(1-\gamma)) \\
    &= 0.
\end{align*}
Hence such a matrix $\mathbf{M}$ exists and the corresponding $\mathbf{R}$ is a feasible solution.

\textbf{Optimality.}
Notice that the objective is essentially $\sum_{x} \mathbb{P}(X_i = x) \mathbf{M}_{xx}$. Since we constructed $\mathbf{M}$ with all diagonal values being $c$, the objective value is $c$.
The first constraint is binding.
Smaller objective is not possible as it would require $\mathbf{M}_{xx} < c$ for some $x$, violating the first constraint.

$\square$

\subsection{Proof of Theorem \ref{thm: robustness}}
It is more convenient to use the matrix notation (see \eqref{eq: LP formation for two-agent mechanism design}).
The condition is essentially saying $\|\mathbf{B} - \mathbf{B}^*\|_{\infty} \leq \delta/\kappa$ and $\|\mathbf{d} - \mathbf{d}^*\|_1 \leq \delta/\kappa$, where $\|\cdot \|_{\infty}$ is the matrix norm induced by vector $\infty$-norm.
(It is essentially the maximum absolute row sum of the matrix.)

Therefore, we have
\begin{align*}
    \max_{x,y} |\mathbf{B}\mathbf{R}^\intercal - \mathbf{B}^*\mathbf{R}^\intercal|_{xy} & = \max_{x,y} \sum_z (\mathbf{B}-\mathbf{B}^*)_{xz}\mathbf{R}_{yz} \\
    &\leq \max_x \|(\mathbf{B}-\mathbf{B}^*)_{x:}\|_1 \cdot \kappa \\
    &\leq 2 \cdot (\delta/2\kappa) \cdot \kappa\\
    & = \delta.
\end{align*}
Here it is crucial to notice that $\|(\mathbf{B}-\mathbf{B}^*)_{x:}\|_1 = 2\tv\left(p(\cdot \mid x_i),p^*(\cdot \mid x_i)\right)$.
Following the same procedure, we can show that $\|\mathbf{R}(\mathbf{d} -\mathbf{d}^*)\|_\infty \leq \delta$.

Therefore, the constraints in \eqref{eq: robust two-agent mechanism design} shift by at most $\delta$, which means the $\delta$-margin mechanism $R_i$ satisfies \eqref{eq: two-agent mechanism design}.

$\square$

\subsection{Proof of Theorem \ref{thm: bounds on payments of robust mechanism}}

\textbf{Worst-case payment.} 
We still use the matrix formulation for the problem (see \eqref{eq: LP formation for two-agent mechanism design}).
Under this notation, the problem becomes
\begin{align*}
    \min_{\mathbf{R}} \quad& \kappa \\ 
    \textrm{s.t.} 
    \quad& \|\mathbf{R}\|_{\max} \leq \kappa, \\
    \quad& (\mathbf{BR}^\intercal)_{xx} \geq c+\delta, \quad \forall x \in \mathcal{Y}\\
    \quad& (\mathbf{BR}^\intercal)_{xy} \leq c-\delta, \quad \forall x \neq y \in \mathcal{Y}\\
    \quad& \mathbf{Rd}\leq -\delta \cdot \mathbf{1}.
\end{align*}

We call this problem $\lp(p, c, \delta)$, since it is a linear programming problem with distribution $p$, cost $c$ and margin $\delta$. 
Notice that if $(\kappa, \mathbf{R})$ is a feasible solution to $\lp(p, c, 0)$, and $(\kappa', \mathbf{R}')$ is a feasible solution to $\lp(p, 0, 1)$, then $(\kappa+\delta\kappa', \mathbf{R} + \delta \mathbf{R}')$ is a feasible solution to $\lp(p, c, \delta)$. Therefore, we can construct upper bounds of $\lp(p, c, \delta)$ by constructing upper bounds on $\lp(p, c, 0)$ and $\lp(p, 0, 1)$ separately.
We apply the same strategy as proof of Theorem \ref{thm: optimal cost of 2-agent peer prediction}, that is, to consider the intermediate solution $\mathbf{M} = \mathbf{B}\mathbf{R}^\intercal$.
The mechanism can be easily acquired by $\mathbf{R} = (\mathbf{B}^{-1}\mathbf{M})^\intercal$.
With this reformulation (see Section \ref{appendix: proof of optimal cost of 2-agent peer prediction} for details), the problem can be constructed as 
\begin{align*}
    \min_{\mathbf{M}} \quad& \kappa \\ 
    \textrm{s.t.} 
    \quad& \|\mathbf{B}^{-1}\mathbf{M}\|_{\max} \leq \kappa, \\
    \quad& \mathbf{M}_{xx} \geq c+\delta, \quad \forall x \in \mathcal{Y}\\
    \quad& \mathbf{M}_{xy} \leq c-\delta, \quad \forall x \neq y \in \mathcal{Y}\\
    \quad& \mathbf{M}^\intercal \mathbf{d}' \leq -\delta \cdot \mathbf{1},
\end{align*}
where $\mathbf{d}'_x = \mathbb{P}(X_i = x)$.

\blue{First, consider a feasible solution $(\kappa, R_i)$. 
When $|R_i| < \kappa$, we then have for each $x \in \mathcal{Y}$
$$
    \mathbf{M}_{xx} = \sum_{x'\in \mathcal{Y}} \mathbf{B}_{xx'} \mathbf{R}_{xx'} \leq \sum_{x'\in \mathcal{Y}} \mathbf{B}_{xx'} \kappa = \kappa.
$$
}
The lower bound is now apparent, since when absolute maximal payment goes under $c+\delta$ we violate the constraint $\mathbf{M}_{xx} \geq c+\delta$.

\textit{Upper bounds of $\lp(p, c, 0)$.} 
Following the same construction as Appendix \ref{appendix: proof of optimal cost of 2-agent peer prediction}, we let $\mathbf{M}$ has all diagonal values being $c$, and all off-diagonals equal $-c\gamma/(1-\gamma)$, where $\gamma = \max_x \mathbb{P}(X_i = x)$. 
Then $\mathbf{M}$ satisfies the three constraints on matrix.
With this $\mathbf{M}$, we have
\begin{align*}
    \|\mathbf{B}^{-1}\mathbf{M}\|_{\max}   \leq  \|\mathbf{B}^{-1}\mathbf{M}\|_2   &\leq   \|\mathbf{B}^{-1}\|_2 \|\mathbf{M}\|_2  \\ 
    &= \|\mathbf{B}^{-1}\|_2  \cdot \frac{c}{1-\gamma}\max(1, \mid1-\gamma\cdot |\mathcal{Y}| \mid)\\
    &\leq \|\mathbf{B}^{-1}\|_2  \cdot \frac{c(\gamma|\mathcal{Y}| + 1)}{1-\gamma}.
\end{align*}
Here, all eigenvalues of $\mathbf{M}$ can be easily calculated since $\mathbf{M}$ is a combination of identity matrix $\mathbf{I}$ and all‑ones matrix $\mathbf{J}$, whose eigenvalues are known.
In the end we can take $\kappa \leq \|\mathbf{B}^{-1}\|_2 \cdot c(\gamma|\mathcal{Y}| + 1)/(1-\gamma)$.

\textit{Upper bounds of $\lp(p, 0, 1)$.}
Similarly, construct $\mathbf{M}'$ with diagonal $1$ and off-diagonal $-(1+\gamma)/(1-\gamma)$.
This $\mathbf{M}’$ satisfies all three constraints on matrix.
A similar argument gives us
\begin{align*}
    \kappa' \leq \|\mathbf{B}^{-1}\|_2 \cdot \frac{(1+\gamma)|\mathcal{Y}|+2}{1-\gamma}.
\end{align*}

Combining the two upper bounds, it means that $(\kappa+\delta\kappa', \mathbf{M}+\delta\mathbf{M}')$ is a feasible solution, and we end up with an upper bound on $\lp(p, c, \delta)$, which is:
\begin{align*}
    \kappa \leq \|\mathbf{B}^{-1}\|_2 \cdot \left(c\cdot\frac{\gamma|\mathcal{Y}| + 1}{1-\gamma} + \delta \cdot\frac{(1+\gamma)|\mathcal{Y}|+2}{1-\gamma}\right).
\end{align*}

\textbf{Expected payment.} 
The solution $(\kappa+\delta\kappa', \mathbf{M}+\delta\mathbf{M}')$ ensures that the constraint $\mathbf{M}_{xx} \geq c+\delta$ is binding.
Therefore, the expected payment under truthful equilibrium is $\mathbb{E}_p[R_i (X_i, X_j)] = \sum_x \mathbb{P}(X_i = x)\mathbb{E}_p[R_i(X_i, X_j)\mid X_i] = c+\delta$.

$\square$

\subsection{Proof of Theorem \ref{thm: cost of robustness}}
\blue{For each $\delta > 0$, by the second part of Theorem \ref{thm: bounds on payments of robust mechanism}, there exists a feasible solution $(\kappa(\delta), R_i)$ of Eq.\eqref{eq: robust two-agent mechanism design} whose $\kappa(\delta)$ satisfies Eq.\eqref{eq: kappa upper bound} and  expected payment is $c+\delta$.
To ensure agent $i$ stays truthful, according to Theorem \ref{thm: robustness}, we need to find a margin $\delta$ such that $\delta/2\kappa(\delta) \geq \eta$.
From Corollary \ref{col: robustness-floor}, it suffices to find a $\delta$ such that
\begin{align*}
    \frac{\delta}{2\kappa(\delta)} \geq \frac{\delta \cdot (1-\gamma)}{2\|\mathbf{B}^{-1}\|_2 (c\cdot (\gamma|\mathcal{Y}| + 1) + \delta \cdot ((1+\gamma)|\mathcal{Y}|+2))} = \eta,
\end{align*}
Solving the second inequality equation gives us
\begin{align*}
    \delta = \frac{2\|\mathbf{B}^{-1}\|_2 (\gamma|\mathcal{Y}| + 1)\cdot\eta}{(1-\gamma) - 2\|\mathbf{B}^{-1}\|_2((1+\gamma)|\mathcal{Y}| +2)\cdot \eta}\cdot c
\end{align*}
And under this margin we have a robust mechanism that guarantees truthfulness.}

If the actual distribution $p^*$ is in the required ambiguity set, it shifts this expected payment by at most an additional $\delta$, making the final expected payment at most $c+2\delta$.
Substituting it with the above selected margin $\delta$ gives the final result.

$\square$

\subsection{Proof of Lemma \ref{lem: warm-start fact-checking}}
Suppose agent $i$'s observation $X_{it} = x$.
Then from Bayes' rule we have
\begin{align*}
    \mathbb{P}(Y_t = x \mid X_{it} = x) &= p_i(x\mid x)\; p_Y(x)\;/\;\mathbb{P}(X_{it} = x) \geq \underline{p} \cdot p_i(x\mid x)\;/\;\mathbb{P}(X_{it} = x)\\
    \mathbb{P}(Y_t = y \mid X_{it} = x) &= p_i(x\mid y)\; p_Y(y)\;/\;\mathbb{P}(X_{it} = x) \leq \overline{p} \cdot p_i(x\mid y)\;/\;\mathbb{P}(X_{it} = x)
\end{align*}
for any $y\in\mathcal{Y}$.
The diagonal dominance then implies that $\mathbb{P}(Y_t = x \mid X_{it} = x) \geq \mathbb{P}(Y_t = y \mid X_{it} = x)$ for any $y\in\mathcal{Y}$. Hence a truthful strategy $Z_{it} = X_{it}$ maximizes reward under $s \mathbf{1}\{Z_{it} = Y_t\}$ for any $s$. 

\blue{
Finally, the best lazy strategy that reports $z = \operatorname{argmax}_{y} p_Y(y)$, which generates reward $(s  \max p_Y(y)$.
Observing and then reporting yields expected reward $s\sum_{y} p_Y(y) p_i(y\mid y) - c$.
Therefore to ensure truthfulness, we need that
$$
    s\left(\sum_{y} p_Y(y) p_i(y\mid y) - \max p_Y(y) \right)  \geq c
$$
which always holds for large enough $s$.
Now we find a sufficient lower bond for $s$.
Notice that the diagonal dominance property implies $(\underline{p}/\bar{p}) p_i(y\mid y) \geq p_i(x\mid y)$. 
Marginalizing out $x$, we have 
$$
    \left(1 + (|\mathcal{Y}|- 1)\frac{\underline{p}}{\bar{p}}\right) \cdot p_i(y\mid y) \geq 1,
$$
which gives a lower bound on $p_i(y\mid y)$.
Therefore, combining with the above condition, it is sufficient to find $s$ such that
$$
    s\left(\sum_{y} p_Y(y) p_i(y\mid y) - \max p_Y(y) \right)  \geq s\left( \frac{\bar{p}}{\bar{p} + (|\mathcal{Y}|-1)\underline{p}} - \bar{p}\right) \geq c
$$
and solving the second inequality gives a sufficient lower bound.
}

$\square$

\subsection{Proof of Theorem \ref{thm: dram upper bound}}

\textbf{Estimation of the reference distribution $p^*(x_j \mid x_i).$}
As suggested in Section \ref{sec: robust mechanism design}, optimizing the cost of mechanism relies on accurate knowledge over the reference distribution $p^*(x_j \mid x_i)$.
Therefore, we first focus on accurate estimations on this distribution using agents reports.
Throughout this part, agents' reports are assumed to be truthful, i.e. $z_i = x_i$, so intuitively speaking principal should faithfully recover $p*$ if given enough data.

We begin with a lemma on concentration bound on using the empirical estimator for a discrete distribution.
Let $q$ be a discrete distribution on sample space $\mathcal{Y}$, from which we obtain $t$ i.i.d. samples.
Let $\hat{q}$ be the empirical probability distribution where $\hat{q}_t(y) = t_y / t$.
Here $t_y$ is number of times label $y$ appears in the $t$ samples.
We also define $d = |\mathcal{Y}|$.

\begin{lemma}[Concentration inequality of the empirical distribution \cite{weissman2003inequalities}]
    \label{lem: l1 bound on empirical distribution}
    For all $\eta > 0$, we have
    \[
        \mathbb{P}(\tv({q, \hat{q}_t}) \geq \eta) \leq (2^d - 2)\exp\left(-t\varphi(\pi_q)\eta^2\right) \leq (2^d - 2)\exp\left(-2t\eta^2\right),
    \]
    where $\varphi(x) = \log((1-x)/x) / (1-2x)$ with $\varphi(1/2) = 2$, and $\pi_q = \max_{A\subseteq \mathcal{Y}} \min (\mathbb{P}(A)), 1-\mathbb{P}(A))$.
\end{lemma}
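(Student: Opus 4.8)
The plan is to reduce the total-variation deviation to a family of scalar Bernoulli deviations indexed by subsets of $\mathcal{Y}$, and then to apply a sharp one-sided concentration bound to each. First I would rewrite the total variation distance in its variational form: letting $A^\star=\{y:q(y)>\hat{q}_t(y)\}$, one checks that $\tv(q,\hat{q}_t)=q(A^\star)-\hat{q}_t(A^\star)=\max_{A\subseteq\mathcal{Y}}\bigl(q(A)-\hat{q}_t(A)\bigr)$. Consequently the event $\{\tv(q,\hat{q}_t)\ge\eta\}$ is exactly the union $\bigcup_A\{q(A)-\hat{q}_t(A)\ge\eta\}$, and since $A=\varnothing$ and $A=\mathcal{Y}$ both give $q(A)-\hat{q}_t(A)=0<\eta$, a union bound over the remaining $2^d-2$ subsets yields
\[
\mathbb{P}\!\left(\tv(q,\hat{q}_t)\ge\eta\right)\;\le\;\sum_{A\neq\varnothing,\mathcal{Y}}\mathbb{P}\!\left(q(A)-\hat{q}_t(A)\ge\eta\right).
\]

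Next, for a fixed subset $A$, note that $\hat{q}_t(A)=\frac{1}{t}\sum_{s=1}^t\mathbf{1}\{Y_s\in A\}$ is an empirical mean of $t$ i.i.d.\ Bernoulli random variables with parameter $p:=q(A)$. The heart of the argument is a sharp lower-tail bound for this mean. I would invoke the optimal sub-Gaussian constant of a centered Bernoulli (the Kearns--Saul inequality), namely $\log\mathbb{E}[e^{\lambda(B-p)}]\le\lambda^2/\bigl(4\varphi(p)\bigr)$ for all $\lambda$, with $\varphi(p)=\log\!\bigl((1-p)/p\bigr)/(1-2p)$. Combined with the Chernoff method applied to the lower tail (optimizing over the tilt $\lambda=2\varphi(p)\eta$), this gives the per-subset estimate
\[
\mathbb{P}\!\left(q(A)-\hat{q}_t(A)\ge\eta\right)\;\le\;\exp\!\bigl(-t\,\varphi(q(A))\,\eta^2\bigr).
\]

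Finally I would make the exponent uniform over $A$. The function $\varphi$ is symmetric about $1/2$ (since $\varphi(p)=\varphi(1-p)$) and decreasing on $(0,1/2)$, with minimum $\varphi(1/2)=2$. For any $A$, writing $m=\min(q(A),1-q(A))\le\pi_q$ by the definition of $\pi_q$, symmetry gives $\varphi(q(A))=\varphi(m)$ and monotonicity gives $\varphi(m)\ge\varphi(\pi_q)$; hence $\exp(-t\varphi(q(A))\eta^2)\le\exp(-t\varphi(\pi_q)\eta^2)$ for every subset. Summing over the $2^d-2$ nontrivial subsets establishes the first inequality, and the second follows from $\varphi(\pi_q)\ge\varphi(1/2)=2$.

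The step I expect to be the main obstacle is the sharp Bernoulli bound $\log\mathbb{E}[e^{\lambda(B-p)}]\le\lambda^2/(4\varphi(p))$: the naive Hoeffding lemma only delivers variance proxy $1/4$, which yields the weaker uniform exponent and hence merely $\exp(-2t\eta^2)$, so recovering the $p$-dependent constant $\varphi(p)$ requires the tight analysis of the Bernoulli log-MGF. I would establish it by a one-dimensional calculus argument, verifying that $\psi(\lambda)=\log\mathbb{E}[e^{\lambda(B-p)}]$ satisfies $\psi''(\lambda)\le 1/(2\varphi(p))$ uniformly in $\lambda$ (equivalently, bounding the variance of the exponentially tilted Bernoulli), after which two integrations using $\psi(0)=\psi'(0)=0$ give the claim; alternatively one may cite Kearns--Saul directly. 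The symmetry and monotonicity of $\varphi$ on $(0,1/2)$, together with the value $\varphi(1/2)=2$ obtained by L'H\^opital, are routine derivative computations.
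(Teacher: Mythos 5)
The paper does not actually prove this lemma: it is imported verbatim (up to the substitution $\tv = \tfrac{1}{2}\|\cdot\|_1$) from \cite{weissman2003inequalities}, so there is no in-paper argument to compare against. Your reconstruction follows exactly the structure of the cited proof: the variational identity $\tv(q,\hat q_t)=\max_{A}\bigl(q(A)-\hat q_t(A)\bigr)$, a union bound over the $2^d-2$ nontrivial subsets, a sharp one-sided tail bound for the Bernoulli mean with exponent $\varphi(q(A))\eta^2$, and then the symmetry and monotonicity of $\varphi$ to replace $\varphi(q(A))$ by $\varphi(\pi_q)\ge\varphi(1/2)=2$. All of those steps are correct, and the overall skeleton is sound.

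The one genuine flaw is in your proposed self-contained derivation of the key Bernoulli bound. The inequality $\psi''(\lambda)\le \tfrac{1}{2\varphi(p)}$ uniformly in $\lambda$ is \emph{false}: $\psi''(\lambda)=p_\lambda(1-p_\lambda)$ is the variance of the exponentially tilted Bernoulli, and as $\lambda$ varies the tilted parameter $p_\lambda$ sweeps through $1/2$, so $\sup_\lambda \psi''(\lambda)=1/4$, which strictly exceeds $\tfrac{1}{2\varphi(p)}=\tfrac{1-2p}{2\log((1-p)/p)}$ whenever $p\neq 1/2$ (e.g.\ $p=0.1$ gives $\tfrac{1}{2\varphi(p)}\approx 0.182<0.25$). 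Two integrations of that (false) second-derivative bound therefore cannot yield the Kearns--Saul constant; this is precisely why Kearns--Saul is regarded as a nontrivial refinement of Hoeffding's lemma. The gap is easily closed in either of two legitimate ways: cite Kearns--Saul directly as a black box (which you offer as an alternative), or follow Weissman et al.'s own route, namely the Chernoff bound $\mathbb{P}(\hat p_t - p\ge\eta)\le e^{-tD(p+\eta\,\|\,p)}$ followed by the elementary (calculus-checkable) inequality $D(p+\eta\,\|\,p)\ge\varphi(p)\,\eta^2$. With either substitution the rest of your argument goes through and recovers the lemma as stated.
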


Lemma \ref{lem: l1 bound on empirical distribution} gives an concentration inequality on the empirical distribution. 
We now apply this lemma to derive a concentration bound on estimating \textit{conditional} distribution using the empirical \textit{conditional} distribution estimator.
The empirical \textit{conditional} distribution, defined as $\hat{p}_t(x_j \mid x_i) = t_{x_j\mid x_i} / t_{x_i}$, is what we ended up using in Algorithm \ref{alg: dram}.

\begin{lemma}[Concentration property of the empirical conditional distribution]
    \label{lem: bound on conditional distribution}
    Suppose the principal has received $T$ rounds of reports from agent $i$ and $j$.
    Assuming agents are always truthful.
    Let $\hat{p}(x_j \mid x_i)$ be the empirical conditional distribution defined in Algorithm \ref{alg: dram}.
    Define the ambiguity set with ambiguity level $\eta$ as
    \begin{align*}
        S_\eta(\hat{p}) = \big\{ p\in\mathcal{P} \; \big| \; \tv\left(\hat{p}(\cdot \mid x_i),p (\cdot \mid x_i)\right)  \leq \eta,\;\forall x_i\in \mathcal{Y} \cup \{\varnothing\} \big\}.
    \end{align*}
    
    If the number of rounds satisfies
    \begin{align*}
        T \geq \frac{1+2\eta^2}{2\rho\eta^2} \log\left(\frac{(d+1)2^d}{\varepsilon}\right),
    \end{align*}
    where $\rho = \min_{x\in \mathcal{Y}} \mathbb{P}(X_i = x)$, then with probability at least $1-\varepsilon$, the true distribution $p^*$ belongs to $S_\eta$.    
\end{lemma}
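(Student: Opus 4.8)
The plan is to bound the probability that the true conditional distribution $p^*$ falls outside the ambiguity set $S_\eta(\hat{p})$, and show this probability is at most $\varepsilon$ whenever $T$ exceeds the stated threshold. By definition of $S_\eta$, the event $p^* \notin S_\eta(\hat{p})$ means that for some conditioning value $x_i \in \mathcal{Y} \cup \{\varnothing\}$ we have $\tv(\hat{p}(\cdot \mid x_i), p^*(\cdot \mid x_i)) > \eta$. So I would apply a union bound over the $d+1$ possible values of $x_i$, and for each fixed $x_i$ control the deviation of the empirical conditional distribution from the truth.

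**The key difficulty and how I would handle it.**

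The main obstacle is that the empirical conditional estimator $\hat{p}(x_j \mid x_i) = t_{x_j \mid x_i} / t_{x_i}$ is built from a \emph{random} number of effective samples, namely $t_{x_i}$, the number of rounds in which agent $i$'s (truthful) report equals $x_i$. I cannot directly apply the fixed-sample concentration bound of Lemma~\ref{lem: l1 bound on empirical distribution} because the sample size is itself random. The clean way to resolve this is a two-stage argument. First, condition on the event that $t_{x_i}$ is reasonably large; given $t_{x_i} = n$, the rounds in which $X_i = x_i$ provide $n$ i.i.d.\ samples of $x_j$ drawn from $p^*(\cdot \mid x_i)$, so Lemma~\ref{lem: l1 bound on empirical distribution} yields $\mathbb{P}(\tv > \eta \mid t_{x_i} = n) \le (2^d - 2)\exp(-2n\eta^2)$. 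Second, I would argue that $t_{x_i}$ is large with high probability: since each round independently has $\mathbb{P}(X_i = x_i) \ge \rho$, the count $t_{x_i}$ stochastically dominates a $\mathrm{Binomial}(T, \rho)$ variable, whose mean is at least $\rho T$, and a multiplicative Chernoff bound shows $t_{x_i} \ge \rho T / 2$ fails only with exponentially small probability in $T$.

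**Assembling the bound.**

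Putting these together for a fixed $x_i$, I would split on whether $t_{x_i} \ge \rho T / 2$. On the good event, substituting $n \ge \rho T / 2$ into the conditional bound gives a deviation probability of at most $(2^d - 2)\exp(-\rho T \eta^2)$; the bad event contributes a Chernoff term of order $\exp(-\rho T / 8)$. The factor $1 + 2\eta^2$ in the threshold and the shape of the bound suggest that the intended analysis actually folds the random-sample-size fluctuation directly into a single exponent, which is why the threshold reads $\frac{1+2\eta^2}{2\rho\eta^2}\log(\cdots)$ rather than separate terms; I would reconcile this by choosing the Chernoff deviation so that both the $\exp(-\rho T \eta^2)$ and the sample-count failure decay at compatible rates, absorbing the $2\eta^2$ correction. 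Finally, I would apply the union bound over all $d+1$ values of $x_i$, picking up the factor $(d+1)$, so the overall failure probability is at most $(d+1)(2^d)\exp(-\rho T \eta^2 / (1 + 2\eta^2))$ (after combining terms), and solving for when this is $\le \varepsilon$ recovers precisely the stated threshold $T \ge \frac{1+2\eta^2}{2\rho\eta^2}\log\!\big(\tfrac{(d+1)2^d}{\varepsilon}\big)$. The one genuinely delicate point is the bookkeeping that merges the random-sample-size failure into the same exponential rate as the conditional concentration, which is exactly what the $(1+2\eta^2)$ factor encodes.
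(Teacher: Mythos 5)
Your overall skeleton matches the paper's: union bound over the $d+1$ conditioning symbols $x_i \in \mathcal{Y}\cup\{\varnothing\}$, condition on the count $t_{x_i}$, apply the fixed-sample bound of Lemma~\ref{lem: l1 bound on empirical distribution} to the i.i.d.\ draws from $p^*(\cdot\mid x_i)$, and use stochastic dominance by $\mathrm{Binomial}(T,\rho)$ to replace $\mathbb{P}(X_i=x_i)$ by $\rho$. The divergence — and the gap — is exactly at the point you flag as delicate: how the randomness of $t_{x_i}$ is absorbed. The paper does \emph{not} split into a good/bad event for the count. It averages the conditional bound $2^d e^{-2t\eta^2}$ against the full binomial law, which is just the binomial moment generating function:
\begin{align*}
\sum_{t=0}^{T} 2^d e^{-2t\eta^2}\binom{T}{t}\rho^t(1-\rho)^{T-t} \;=\; 2^d\bigl[\rho e^{-2\eta^2}+1-\rho\bigr]^{T},
\end{align*}
and then inverts using $-\log(1-x)\ge x$ and $1-e^{-x}\ge x/(1+x)$, which is precisely where the factor $(1+2\eta^2)/(2\rho\eta^2)$ comes from, with no Chernoff loss on the count.

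Your two-stage split cannot recover that constant. Conditioning on $t_{x_i}\ge \rho T/2$ throws away half the samples, so the conditional term becomes $2^d e^{-\rho T\eta^2}$ rather than (effectively) $e^{-2\rho T\eta^2/(1+2\eta^2)}$; solving for $\varepsilon$ then requires $T\gtrsim \tfrac{1}{\rho\eta^2}\log\bigl(\tfrac{(d+1)2^d}{\varepsilon}\bigr)$, which for small $\eta$ is about \emph{twice} the lemma's hypothesis $T\ge\tfrac{1+2\eta^2}{2\rho\eta^2}\log(\cdots)$ — so under the stated hypothesis your bound does not close. You also leave the $\exp(-\rho T/8)$ count-failure term unmerged (it is not dominated when $\eta^2>1/8$), and your final arithmetic is inconsistent: $(d+1)2^d\exp\bigl(-\rho T\eta^2/(1+2\eta^2)\bigr)\le\varepsilon$ solves to $T\ge\tfrac{1+2\eta^2}{\rho\eta^2}\log(\cdots)$, not the stated threshold with the $2$ in the denominator. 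The fix is to abandon the good/bad split and compute the expectation over $t_{x_i}$ in closed form as above; everything else in your argument then goes through.
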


We note the increasing rate of $T$ is on the order of $O(\log(1/\varepsilon)/\eta^2)$ for arbitrary $\varepsilon$ and small $\eta$. 
Even when $\eta$ is large, there is still a threshold $T > O(\log(1/\varepsilon))$ that must be satisfied.
This is because there are two possible ways for the event $S_\eta$ to fail: the first case is when the estimator for a certain conditional distribution is $\eta$-away from the true distribution, and the second case is when a certain symbol $x_i$ never appears in $i$'s report.
To ensure the second case does not happen with probability larger than $\varepsilon$, we need $T$ to be large enough.

\begin{proof}
    We first consider one label $x_i \in \mathcal{Y}$.
    Within $T$ rounds, the count of $x_i$ from agent $i$'s report follows a binomial distribution. Let $\rho = \min_{x\in \mathcal{Y}} \mathbb{P}(X_i = x)$.

    We then have
    \begin{align*}
        \mathbb{P}(\tv(\hat{p}(\cdot \mid x_i), p^*(\cdot\mid x_i)) > \eta) &= \sum_{t=0}^T \mathbb{P}(\tv(\hat{p}(\cdot \mid x_i), p^*(\cdot\mid x_i)) > \eta \mid T_{x_i} = t) \cdot \mathbb{P}(T_{x_i} = t) \\
        &\leq \sum_{t=0}^T 2^d \exp (-2t\eta^2) \cdot \binom{T}{t} \rho^t (1-\rho)^{T-t} \\
        &= 2^d \binom{T}{t} \sum_{t=0}^T \left[\rho \exp(-2\eta^2)\right]^t (1-\rho)^{T-t} \\ 
        &= 2^d \left[\rho \exp(-2\eta^2) + 1-\rho\right]^T.
    \end{align*}

    Utilizing union bound across all $(d+1)$ symbols $x_i \in \mathcal{Y} \cup \{\varnothing\}$ would give us
    \begin{align*}
        \mathbb{P}\left(\exists x_i, \tv(\hat{p}(\cdot \mid x_i), p^*(\cdot\mid x_i) > \eta)\right) \leq (d+1)2^d \left[\rho \exp(-2\eta^2) + 1-\rho\right]^T.
    \end{align*}

    Inverting this inequality gives that when
    \begin{align*}
        T \geq \frac{\log((d+1)2^d/\varepsilon)}{-\log(1-\rho(1-\exp(-2\eta^2)))},
    \end{align*}
    the original event in lemma holds with probability at least $1-\varepsilon$.

    Notice that we have 
    \begin{align*}
        \frac{\log((d+1)2^d/\varepsilon)}{-\log(1-\rho(1-\exp(-2\eta^2)))} \leq \frac{\log((d+1)2^d/\varepsilon)}{\rho(1-\exp (-2\eta^2))} \leq \frac{1+2\eta^2}{2\rho \eta^2} \log\left(\frac{(d+1)2^d}{\varepsilon}\right),
    \end{align*}
    where the first inequality holds since $-\log(1-x)\ge x$, and the second holds since $1-\exp(-x)\ge x/(1+x)$.
    Thus a sufficient bound is
    \[
        T \geq \frac{1+2\eta^2}{2\rho \eta^2} \log\left(\frac{(d+1)2^d}{\varepsilon}\right).
    \]
    Also, for $\eta < 1/\sqrt{2}$, we have a sufficient bound 
    \[
        T \geq \frac{1}{\rho \eta^2} \log\left(\frac{(d+1)2^d}{\varepsilon}\right).
    \]
\end{proof}

\textbf{Warm-starting.}
The ambiguity threshold $\tilde{\eta}$ is the smallest value across agents, on the maximum ambiguity a distributionally robust mechanism can tolerate.
For mathematical convenience we update the parameter to let $\tilde{\eta} < 1/\sqrt{2}$, so that we have a cleaner bound in subsequent derivations.
The fact-checking mechanism ensures truthfulness because of Lemma \ref{lem: warm-start fact-checking}.
The length of this phase is $O(\log (N\log T))$, which results in smaller order total regrets even when we have constant regret in each round.

\textbf{Bounding the regret.}
Now we focus on the algorithm for a single agent $i$.
Consider what happens in epoch $k$, where $t \in (\tau_{k-1},\tau_k]$.
At the beginning of the epoch, we have $\tau_{k-1} = \blue{2^{k-1}}\tau$ data points, and we set $\eta_k$ in a specific way so that
\[
    \tau_{k-1} \geq \frac{1}{\rho\eta_k^2} \log\left(\frac{(d+1)2^d N \blue{m}}{\varepsilon}\right),
\]
thus we know from Lemma \ref{lem: bound on conditional distribution} that $p_i^* \in S_{\eta_k }(\hat{p}_{ik})$ with probability at least $1-\varepsilon/N m$.

From Theorem \ref{thm: cost of robustness}, we know that a mechanism $R_{ik}$ can be constructed that guarantees agent $i$'s truthfulness when $p_i^* \in S_{\eta_k }(\hat{p}_{ik})$.
Also, in the duration of epoch $k$, the expected regret for a single round is $c \cdot C_1\eta_k/(1-C_2\eta_k)$ where $C_1$ and $C_2$ are constants as defined in Eq.(\ref{eq: cost of robustness}).
Therefore, the total expect regret for agent $i$ across all $T$ rounds is
\begin{align*}
    \sum_{k=1}^m \frac{C_1 \eta_k}{1-C_2 \eta_k} c \cdot (\tau_k - \tau_{k-1}) &\leq \frac{C_1 }{1-C_2/\sqrt{2}} c \cdot \sum_{k=1}^m \eta_k \cdot (\tau_k-\tau_{k-
    1}) \\
    &= \frac{C_1 }{1-C_2/\sqrt{2}} c \cdot \sqrt{\log((d+1)2^d N m/\varepsilon)} \cdot\sum_{k=1}^m \frac{1}{\sqrt{\tau_{k-1}}} \cdot (\tau_k-\tau_{k-
    1}) \\ 
    &= \frac{C_1 }{1-C_2/\sqrt{2}} c \cdot \sqrt{\log((d+1)2^d N m/\varepsilon)} \cdot\sum_{k=1}^m \sqrt{2^{k-1}\tau_0} \\
    &\precsim O(\sqrt{T}\log(d \cdot 2^d N m /\varepsilon))  
\end{align*}
In the last inequality we used the equation on the sum of geometric sequences.

Finally, we know that for one agent in one epoch, the scheme ensures that applying union bound across all $N$ agents and $m$ periods.
Hence applying union bound, we know that truthfulness is held with probability $1-\varepsilon$. 
Also, note that the regret in the warm start phase is at most $O(\tau)$ and therefore dominated by the regret from the adaptive phase.
Thus the total regret is $N$ times the single agent's regret and therefore $O(N\sqrt{T}\log(N m/\varepsilon)).$
With the fact that $m \leq \lceil\log_2T\rceil.$ we have the desired bound.

\subsection{Proof of Corollary \ref{col: dram upper bound known t}}

The updated upper bound under the epoch schedule $\tau_k - \tau_{k-1} = T^{1-2^{-(k-1)}}\tau$.
All steps are identical to the proof of Theorem \ref{thm: dram upper bound}, except for the final step, where we sum over the regret across epochs:
\blue{\begin{align*}
\sum_{k=1}^m \frac{C_1 \eta_k}{1-C_2 \eta_k} \, c \, (\tau_k - \tau_{k-1})
&\leq \frac{C_1}{1-C_2/\sqrt{2}} \, c \sum_{k=1}^m \eta_k (\tau_k - \tau_{k-1}).
\end{align*}
Using $\eta_k = \sqrt{\log((d+1)2^d N m/\varepsilon)} / \sqrt{\tau_{k-1}}$, it suffices to bound
\[
\sum_{k=1}^m \frac{\tau_k - \tau_{k-1}}{\sqrt{\tau_{k-1}}}.
\]
Let $L_k := \tau_k - \tau_{k-1}$. Then $L_1=\tau$ and $\tau_0=\tau$, so $L_1/\sqrt{\tau_0}=\sqrt{\tau}$. For $k\ge2$, since $\tau_{k-1}\ge L_{k-1}$,
\[
\frac{\tau_k - \tau_{k-1}}{\sqrt{\tau_{k-1}}}
\le \frac{L_k}{\sqrt{L_{k-1}}}
= \sqrt{T}\,\sqrt{\tau_0}.
\]
(The same bound holds if the last epoch is truncated.) Hence,
\[
\sum_{k=1}^m \frac{\tau_k - \tau_{k-1}}{\sqrt{\tau_{k-1}}}
\le m \sqrt{T}\sqrt{\tau_0}.
\]
Therefore,
\[
\sum_{k=1}^m \eta_k (\tau_k - \tau_{k-1})
= O\!\left(
m \sqrt{T}\sqrt{\tau}
\sqrt{\log\!\left(\frac{(d+1)2^d N m}{\varepsilon}\right)}
\right).
\]
Choosing $\tau_0 = \Theta\!\big(\log\!\big(\tfrac{(d+1)2^d N m}{\varepsilon}\big)\big)$ yields
\[
O\!\left(
m \sqrt{T}
\log\!\left(\frac{(d+1)2^d N m}{\varepsilon}\right)
\right).
\]
Summing over $N$ agents gives the final bound.
}

We note that this epoch schedule is sub-geometric, but it grows faster at the first few steps than the geometric epoch schedule $\tau_k - \tau_{k-1} =2^{k-1}\tau$, therefore it uses logarithmically smaller $O(\log\log T)$ epoch count to reach $T$.

\subsection{Proof of the Lower Bound (Theorem \ref{thm: lower bound})}

We start by considering the two-agent, two-label case.
Let the label space be $\mathcal{Y} = \{0, 1\}$.
Without loss of generality we let cost $c=1$.

We first construct a pair of problem instances $p_0, p_1 \in \Delta(\mathcal{Y}^N)$ that simultaneously satisfies the two requirements: i) the two instances are statistically close; ii) the optimal mechanisms differ sharply.
Let the focal agent be agent $1$ and the reference agent be $2$.
We now study the reward mechanisms for the focal agent $1$.
Consider the following examples on $p_0$ and $p_1$, whereas from the focal agent's perspective, the two instances has:
\begin{align*}
    \mathbf{B}_0 = \begin{bmatrix}
        0.5 - \delta & 0.5+\delta\\
        1 & 0
    \end{bmatrix}&,
    \quad \mathbf{B}_1 = \begin{bmatrix}
        0.5 + \delta & 0.5-\delta\\
        1 & 0
    \end{bmatrix}\\
    \mathbb{P}_k(X_1 = x) = 0.5&,  \quad \forall x\in \mathcal{Y},\; k\in \{0,1\}.
\end{align*}
We call $\delta$ the \textit{cheapness} parameter, as it relates to how low the principal's expected payment can be.

\begin{lemma}[Hard instances]
    \label{lem: lower bound competition and similarity}
    For the aforementioned two instances $p_0$ and $p_1$ with parameter $\delta \in (0,1/4)$, we have:
    \begin{itemize}
        \item Competition: under instance $p_0$, any reward mechanism $R$ is feasible with respect to constraints in Eq.\eqref{eq: LP formation for two-agent mechanism design} and whose expected payment \blue{under truthful reporting} is less than $1+\delta$ must either violate IC constraints or pays more than $1+\delta$ under instance $p_1$.
        The statement holds with the roles of $p_0$ and $p_1$ reversed.

        \item Similarity: $p_0$ and $p_1$ are statistically close, i.e., $\operatorname{KL}(p_0 \| p_1) \leq 8 \delta^2$.
    \end{itemize}
    \blue{Finally, instances $p_0$ and $p_1$ are not pathological since corresponding $\mathbf{B}_0$ and $\mathbf{B}_1$ are bounded away from singular matrices.}
\end{lemma}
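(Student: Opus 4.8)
The statement has two independent halves: the \emph{similarity} bound is a short direct computation, whereas the \emph{competition} bound carries the real content, so I would treat them separately and spend most effort on the latter. For competition, the plan is to specialize the linear program \eqref{eq: LP formation for two-agent mechanism design} to the two-label case and turn every incentive constraint into a scalar inequality on the four entries of the focal agent's reward matrix, which I denote $R_{00},R_{01},R_{10},R_{11}$ (first index the focal report, second the reference report). Writing $(\mathbf{B}\mathbf{R}^\intercal)_{xy}=\mathbb{E}[R(y,X_2)\mid X_1=x]$ out for both instances, the individual-rationality constraint at a focal observation of $0$ becomes $A_k\ge 1$, where $A_0=(\tfrac12-\delta)R_{00}+(\tfrac12+\delta)R_{01}$ under $p_0$ and $A_1=(\tfrac12+\delta)R_{00}+(\tfrac12-\delta)R_{01}$ under $p_1$; since the second row of both belief matrices is $(1,0)$, the observation-$1$ constraint collapses to $R_{10}\ge 1$ in \emph{both} instances, and the expected truthful payment in instance $k$ is $\tfrac12 A_k+\tfrac12 R_{10}$. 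The two identities I would build on are $A_0+A_1=R_{00}+R_{01}$ and $A_0-A_1=2\delta(R_{01}-R_{00})$.

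The crux is to pin down $R_{00},R_{01}$ via the no-free-lunch constraints. Adding the report-$0$ row of $\mathbf{R}\mathbf{d}\le\mathbf{0}$ across the two instances (the marginals are $\mathbf{d}=(\tfrac34\mp\tfrac12\delta,\ \tfrac14\pm\tfrac12\delta)$, top signs for $p_0$) cancels the $\delta$-terms and yields $3R_{00}+R_{01}\le 0$. Combining this with $R_{00}+R_{01}=A_0+A_1\ge 2$ (individual rationality under both instances) forces $R_{00}\le -1$ and $R_{01}\ge 3$, hence $R_{01}-R_{00}\ge 4$. Substituting into the identity for $A_0$ gives $A_0=A_1+2\delta(R_{01}-R_{00})\ge 1+8\delta$, so any mechanism that is incentive-compatible under both instances must pay at least $1+4\delta$ under $p_0$. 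This single incompatibility fact delivers both directions of the claim: a mechanism that is IC and costs less than $1+\delta$ under $p_0$ cannot also be IC under $p_1$ (that would force payment $\ge 1+4\delta>1+\delta$ under $p_0$), so it violates IC under $p_1$; and a mechanism that is IC and cheap under $p_1$, if it remains IC under $p_0$, must pay $\ge 1+4\delta>1+\delta$ there. I expect the main obstacle to be exactly the discovery that summing the two no-free-lunch constraints is the combination that cancels $\delta$ and locks down $R_{00},R_{01}$; once that is found, everything downstream is mechanical, and Theorem \ref{thm: optimal cost of 2-agent peer prediction} guarantees the hypotheses are non-vacuous since the optimum under each instance is $1<1+\delta$.

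For the similarity bound I would write out the two joint laws on $\{0,1\}^2$. The mass on $X_1=1$ is identical, namely $(\tfrac12,0)$, so those terms vanish and only the $X_1=0$ row contributes; with $a=\tfrac14-\tfrac12\delta$ and $b=\tfrac14+\tfrac12\delta$ the divergence collapses to $a\log(a/b)+b\log(b/a)=(b-a)\log(b/a)=\delta\log\frac{1+2\delta}{1-2\delta}$. The proof then finishes with the elementary inequality $\log\frac{1+2\delta}{1-2\delta}\le \frac{4\delta}{1-4\delta^2}\le 8\delta$ valid for $\delta\in(0,\tfrac14)$, which gives $\operatorname{KL}(p_0\|p_1)\le 8\delta^2$ as claimed.
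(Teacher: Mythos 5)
Your proof is correct, and the competition half takes a genuinely different route from the paper's. The paper works in the transformed space $\mathbf{M}=\mathbf{B}\mathbf{R}^\intercal$: from cheapness and feasibility under $p_0$ alone it extracts $1\le \mathbf{M}_{00}\le 1+2\delta$ and $\mathbf{M}_{10}\le -1$, then pushes the solution through the change-of-basis matrix $\mathbf{B}_1\mathbf{B}_0^{-1}$ (upper-triangular with entries $\tfrac{1-2\delta}{1+2\delta}$ and $\tfrac{4\delta}{1+2\delta}$ in the first row) to show the individual-rationality entry drops strictly below $1$ under $p_1$, i.e.\ the IC constraints themselves break. You instead stay in reward space and combine constraints from \emph{both} instances: summing the two no-free-lunch inequalities cancels $\delta$ to give $3R_{00}+R_{01}\le 0$, which together with IR under both instances ($R_{00}+R_{01}=A_0+A_1\ge 2$) forces $R_{00}\le -1$, $R_{01}\ge 3$, hence $A_0=A_1+2\delta(R_{01}-R_{00})\ge 1+8\delta$ and a $p_0$-payment of at least $1+4\delta$ for any doubly-IC mechanism. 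I checked the arithmetic (the marginals $\mathbf{d}=(\tfrac34\mp\tfrac{\delta}{2},\tfrac14\pm\tfrac{\delta}{2})$, the identities for $A_0\pm A_1$, and the collapse of the observation-$1$ row to $R_{10}\ge 1$) and it is sound; your single quantitative fact delivers both directions of the lemma by contraposition and immediately gives the disjointness of the good events $G_0,G_1$ needed in the downstream Le~Cam argument, whereas the paper's version pinpoints \emph{which} constraint is violated. The similarity computation is essentially the paper's; your intermediate inequality $\log\tfrac{1+2\delta}{1-2\delta}\le \tfrac{4\delta}{1-4\delta^2}$ is valid (it is the standard bound on $\operatorname{artanh}$), though the cruder $\log(1+x)\le x$ applied to $\tfrac{4\delta}{1-2\delta}$, as in the paper, already suffices for $\delta\in(0,1/4)$.
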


\begin{proof}
    \textit{Competition}.
    We follow a similar procedure as in Section \ref{appendix: proof of optimal cost of 2-agent peer prediction}.
    Let $\mathbf{B}\mathbf{R}^\intercal = \mathbf{M}$, notice that $\mathbf{M}$ and $\mathbf{R}$ has 1-1 correspondence since both $\mathbf{B}_k$ are invertible.
    We therefore studies in the space of $\mathbf{M}$.
    The problem becomes:
    \begin{align*}
        \operatorname{Find} &\quad \mathbf{M} = \begin{bmatrix}
            a & b \\
            c & d \\
        \end{bmatrix} \\
        \operatorname{s.t.} &\quad \mathbf{M}_{xx} \geq 1, \quad \forall x\in \mathcal{Y} \\
        &\quad \mathbf{M}_{xy} \leq 1, \quad \forall x\neq y\in \mathcal{Y}\\
        &\quad \mathbf{M}^\intercal \mathbf{d}' \leq \mathbf 0 \\
        &\quad \sum_x \mathbb{P}(X_1 = x) \mathbf{M}_{xx} \leq 1 + \delta
    \end{align*}
    where $\mathbf{d}'_x = \mathbb{P}(X_1 = x) = 1/2$.
    This gives a set of necessary conditions on the entries for feasible $\mathbf{M}$:
    \begin{align*}
        1 \leq a & \leq 1+2\delta \\
        1 \leq d & \leq 1+2\delta \\
        b & \leq 1 \\
        c & \leq 1 \\
        a + c & \leq 0 \\
        b + d & \leq 0.
    \end{align*}
    (Here the $1+2\delta$ is a relaxation on the bound $(a+d)/2 < 1+\delta$.)
    Notice that $\mathbf{M}_{xy}$ is exactly agent $1$'s expected reward given she observes label $x$ and reports label $y$.
    Therefore, for any $\mathbf{M}_0$ that is cheap and satisfies the constraints under $p_0$, its performance under $p_1$ is $\mathbf{M}_1 = \mathbf{B}_1 \mathbf{B}_0^{-1} \mathbf{M}_0$.
    However, we have that:
    \begin{align*}
        \mathbf{M}_1 = \begin{bmatrix}
            \frac{1-2\delta}{1+2\delta} & \frac{4\delta}{1+2\delta} \\
            0 & 1
        \end{bmatrix}
        \begin{bmatrix}
            a & b \\
            c & d
        \end{bmatrix}
        = \begin{bmatrix}
            \frac{(1-2\delta) a+ 4\delta c}{1+2\delta}  &  \frac{(1-2\delta) b+ 4\delta d}{1+2\delta}\\
            c & d \\
        \end{bmatrix}.
    \end{align*}
    Therefore, the first entry must have
    \begin{align*}
        \frac{(1-2\delta) a+ 4\delta c}{1+2\delta} \leq  \frac{(1-2\delta) (1+2\delta) - 4\delta}{1+2\delta} < 1,
    \end{align*}
    leading to a violation of the truthfulness constraint under $p_1$.
    Similar procedure would also prove the statement with $p_0$ and $p_1$ reversed.
    For any cheap and feasible mechanism under $p_1$, the resulting first entry of corresponding $\mathbf{M}_0$ would be
    \begin{align*}
        \frac{(1+2\delta)a - 4\delta c}{1-2\delta} \geq \frac{(1+2\delta) + 4\delta }{1-2\delta} > 1+\delta.
    \end{align*}
    while the last entry is $d \geq 1$.
    This implies that such mechanism would violate the cheapness constraint under $p_0$.
    Hence we prove the competition property in both ways.
    
    \textit{Similarity}.
    \begin{align*}
        \operatorname{KL}(p_0 \| p_1) &= \frac{1-2\delta}{4}\log\frac{(1-2\delta)/4}{(1+2\delta)/4} + \frac{1+2\delta}{4}\log\frac{(1+2\delta)/4}{(1-2\delta)/4} + \frac{1}{2}\log \frac{1/2}{1/2} \\ 
        &= \delta \log\left(1 + \frac{4\delta}{1-2\delta}\right) \\
        &\leq \frac{4\delta^2}{1-2\delta} \\
        &\leq 8\delta^2.
    \end{align*}
    The first inequality holds since $\log(1+x) \leq x$, and the second holds since $\delta \in (0, 1/4)$.
\end{proof}

\blue{
Now we consider the sequential mechanism design problem.
Lemma \ref{lem: lower bound competition and similarity} gives a pair of hard instances in the two-agent, two-label case.
We would provide a direct-sum construction to create a hard instance for the $N$-player, two-label case.
Apparently, this instance would still apply even when labels are more than two.}

\blue{Group the $N$ players into $M = \lfloor N/2 \rfloor$ pairs where $j = \{1, 2,\dotsb, M\}$ denotes the pair index and agent $i = 2j -1$ and $2j$ belongs to pair $j$.
Fix $\theta \in \{0, 1\}^M$.
Define a pair-independent instance $p_\theta \in \Delta(\mathcal{Y}^N)$ by
$$
    p_\theta(\boldsymbol{x}) = \prod_{j=1}^M p_{\theta_j} (x_{2j-1}, x_{2j}) \cdot \prod_{i = 2M+1}^N \mathbf{1}\{x_i = 0\}.
$$
Essentially, in this instance each pair independently selects one of the hard instances $p_0$ or $p_1$, and their selections are parameterized by $\theta$.}

\blue{Let $\mathcal{F}_{t-1}$ denote the $\sigma$-field generated by all the history up to the end of round $t-1$, including randomness from the policy.
In round $t$, the policy announces a reward mechanism $R_{it}: \{0, 1\}^N \to \mathbb{R}$ for agent $i$, based on the available history from $\mathcal{F}_{t-1}$.
Write $i = 2j-1$.
Define the two-agent projected mechanism under $p_\theta$ by:
$$
\tilde{R}_{it}(z_i, z_{i+1}) = \mathbb{E}
[R_{it} (z_i, z_{i+1}, \boldsymbol{X}_{-(i, i+1), t}) \mid \mathcal{F}_{t-1}]$$
This projection specifies the payment for the agent $i$, averaged over the randomness caused by cross-pair influence from other pairs.
Since each pair $j$ is independent from others, we have $\boldsymbol{X}_{-(i, i+1)} \indep X_i, X_{i+1}$.
It follows that for each $x, z \in \{0,1\}$, we have
\begin{align*}
    \mathbb{E}[R_{it}(z, \boldsymbol{X}_{-i}) \mid X_i = x, \mathcal{F}_{t-1}] &= \mathbb{E}[\mathbb{E}[R_{it}(z,  X_{i+1}, \boldsymbol{X}_{-(i, i+1)})  \mid X_{i+1}, X_i = x, \mathcal{F}_{t-1}]\mid X_i = x, \mathcal{F}_{t-1}] \\
    &= \mathbb{E}[\tilde{R}_{it}(z, X_{i+1}) \mid X_i = x, \mathcal{F}_{t-1}].
\end{align*}
This indicates that from agent $i$'s perspective, any $N$-agent is effectively equivalent to the two-agent projection to her belonged pair.
This is because knowing $X_i$ only changes the belief of the other agent in the pairs but not the rest.
Likewise, this reduction holds for unconditional expectations under any report $z$.
}

\blue{Now for $k\in \{0, 1\}$, we define the good event $G_k$ where
$$
    G_k = \{R: R \text{ is feasible for Eq.\eqref{eq: two-agent mechanism design} under $p_k$ and } \mathbb{E}_{p_k}[R(X_1, X_2)] \leq 1+\delta\}.
$$
Lemma \ref{lem: lower bound competition and similarity} proves that $G_0 \cap G_1 = \varnothing$.
For a given $\theta$ and an agent $i = 2j-1$, define the event 
$$
    E_{ik} = \{\tilde{R}_{it} \text{ is feasible under $p_k$ for all } t\} \cap \left\{\sum_{t=1}^T \left(\mathbb{E}_{p_k}[\tilde{R}_{it}(X_1, X_2)] - 1 \right) \leq \delta T /2\right\}.
$$
On $E_k$, any round $t$ with $\tilde{R}_{it} \notin G_k$ must have $\mathbb{E}_{p_k}[\tilde{R}_{it}(X_1, X_2)] > 1+\delta$.
This is because feasibility holds on $E_k$, so the only way is to violate the cheapness condition in $G_k$.
This indicates that on $E_k$, we must have
$$
    \# \{t \in \{1, \dotsb, T\}: \tilde{R}_{it} \in G_k \} > T/2,
$$
since the total regret is smaller than $\delta T/2$.
}

\blue{Now, for that same pair $j$, fix the rest pairs' choices $\theta_{-j}$, and also fix all the players' reporting strategies up to round $T$.
Define the estimator $\hat{\theta}_j$ which outputs $0$ if $\# \{t \in \{1, \dotsb, T\}: \tilde{R}_{it} \in G_0 \} > T/2$, and outputs $1$ otherwise.
Notice that given $\theta_{-j}$, one can exactly compute $\tilde{R}_{it}$ using info before $t$ since $p_{\boldsymbol{X}_{-(i, i+1)}}$ is available.
Therefore $\hat{\theta}_j$ is indeed a valid test function on $\mathcal{F}_{t-1}$.
Since $G_0 \cap G_1 =\varnothing$, we have
$$
    E_{i0} \subseteq \{\hat{\theta}_j = 0\} \quad \text{ and } \quad E_{i1} \subseteq \{\hat{\theta}_j = 1\}. 
$$
Hence
$$
    \mathbb{P}_0 (E_{i0}^c) \geq \mathbb{P}_0(\hat{\theta}_j = 1)  \quad \text{  and } \quad \mathbb{P}_1 (E_{i1}^c) \geq \mathbb{P}_1 (\hat{\theta}_j = 0).
$$
Let $\mathcal{T}_{jk}$ denotes the full transcripts (all public data, including reports and possibly random reward but not private observations) up to $T$, generated according to the law whereas $\theta_j = k \in \{0, 1\}$ while holding the rest law ($\theta_{-j}$, strategies, and reward randomness) stays fixed.
Consider $p_{(0, \theta_{-j})}$ and $p_{(1, \theta_{-j})}$.
Since they differ only on the independent block $j$, we have 
$$
    \operatorname{KL}(p_{(0, \theta_{-j})} \| p_{(1, \theta_{-j})}) = \operatorname{KL}(p_0 \| p_1)\leq 8\delta^2.
$$
Over $T$ i.i.d. rounds, tensorization gives $\operatorname{KL}(p^T_{(0, \theta_{-j})} \| p^T_{(1, \theta_{-j})}) \leq 8T\delta^2$.
By data processing, we have $\operatorname{KL}(\mathcal{T}_{j0} \| \mathcal{T}_{j1}) \leq 8T\delta^2$. 
Therefore applying Bretagnolle-Huber, one has
$$
    \mathbb{P}_0(\hat{\theta}_j = 1) + \mathbb{P}_1 (\hat{\theta}_j = 0) \geq \frac{1}{2}\exp(-\operatorname{KL}(\mathcal{T}_{j0} \| \mathcal{T}_{j1})) \geq \frac{1}{2}\exp(-8T\delta^2).
$$
Since this holds for arbitrary $\theta_j$, averaging over all possible $2^M$ $\theta$ realizations, and summing over blocks $j = 1,\dotsb, M$ gives us
$$
\frac{1}{2^M} \sum_\theta \mathbb{P}_\theta(E_{i\theta}^c) \geq \frac{1}{2^M}  \sum_\theta \mathbb{P}_\theta (\hat{\theta}_j \neq \theta_j) \geq \frac{1}{4} \exp(-8T\delta^2).
$$
}

\blue{Now consider a policy such that for arbitrary instance, with probability at least $1-\varepsilon$, the mechanism is feasible for all players across all $T$ rounds.
This means for each player $i = 2j - 1$, feasibility failure contributes to $\mathbb{P}_\theta(E_{i\theta}^c)$ at most $\varepsilon$ under every possible $\theta$.
If $\mathbb{P}_\theta(E_{i\theta}^c) > \varepsilon$, the rest probability mass must come from the event that
$$
    \left\{\sum_{t=1}^T \left(\mathbb{E}_{p_\theta}[\tilde{R}_{it}(X_1, X_2)] - 1 \right) > \delta T /2\right\}.
$$
}

Now, the regret for player $i$ is $\reg_i = \sum_{t} (R_{it} - 1)$.
Under truthful reporting, this mechanism has
\begin{align*}
    \mathbb{E}[\reg_i] &= \mathbb{E}\left[ \sum_{t}(R_{it} -1)\right] \\
    &= \mathbb{E}\left[ \sum_{t}\mathbb{E}_{p_\theta}\left[(R_{it}(\boldsymbol{X}_t) -1) \mid \mathcal{F}_{t-1}\right]\right] \\
    &= \mathbb{E}\left[ \sum_{t}\mathbb{E}_{p_{\theta_j}}\left[(\tilde{R}_{it}(X_1, X_2) -1) \mid \mathcal{F}_{t-1}\right]\right].
\end{align*}
Combining this fact with the lower bound, and summing the regret over $j= 1,\dotsb, M$, we have that 
$$
    \frac{1}{2^M} \sum_\theta \sum_{j}\mathbb{E}_\theta [\reg_{2j-1}] \geq M \cdot \frac{\delta T}{2} \cdot \left(\frac{1}{4}\exp(-8T\delta^2) -\varepsilon \right).
$$
Now, choose a proper $\delta$:
$$
    \delta = \sqrt{\frac{1}{8T}\log \left(\frac{2}{1+4\varepsilon}\right)} \in (0, 1/4),
$$
and with that we have $\exp(-8T\delta^2) = (1+4\varepsilon)/2$, and the bracket equals $(1-4\varepsilon) / 8$. 
Therefore, under truthful reporting there exists some $\theta^*$ such that
$$
    \sum_{j}\mathbb{E}_{\theta^*} [\reg_{2j-1}] \geq M \cdot \frac{1-4\varepsilon}{16} \cdot \sqrt{\frac{T}{8} \log \left(\frac{2}{1+4\varepsilon}\right)}.
$$
Finally, when the policy is feasible for all players across all rounds, we have the individual rationality constraint that leads to $\mathbb{E}[R_{it}(\boldsymbol{X}_t)] \geq 1$, thus for the rest of the players regret is greater than $0$.
Therefore, under the event of feasibility for all players across all rounds and with $M = \lfloor N/2\rfloor \geq (N-1)/2$, we have
$$
    \mathbb{E}_\theta[\reg \mid \text{Feasibility}] \geq \sum_{j}\mathbb{E}_{\theta^*} [\reg_{2j-1}] \geq \frac{1-4\varepsilon}{64\sqrt{2}} (N-1) \sqrt{T \log \left(\frac{2}{1+4\varepsilon}\right)}.
$$

$\square$

\subsection{Proof of Theorem \ref{thm: dram-plus upper bound}}

The proof roughly follows the same procedure as Theorem \ref{thm: dram upper bound}, with a few modifications.
First, since we don't have an explicit formula for the PAC guarantee, we cannot invert the function $\eta_\varepsilon(T)$ to get a closed-form bound for $T$ under certain $\eta$ and $\varepsilon$.
This may lead to a relatively looser bound for certain estimators.
The tightest bound can always be specifically derived following proof of Theorem \ref{thm: dram upper bound}.
Second, we would use the estimator for conditional distribution $p(\cdot \mid x_i)$, for each $x_i \in \mathcal{Y}$.
For each $x \in \mathcal{Y}$, define the random sample size $T_x$ as the number of times $x$ is observed before and including round $T$.
There are two scenarios where the estimator could be off: 
\begin{enumerate}
    \item agent $i$ does not observe $x_i$ for enough number of times. ($T_{x_i}$ is small).
    \item The estimation on $p(\cdot \mid x_i)$ is off.
\end{enumerate}
The first scenario is not decided by whatever the estimator used by principal, since it is a tail events of a multinomial distribution.
For the same reason, we cannot directly merge the two probabilities together as it is done in Lemma \ref{lem: bound on conditional distribution}, resulting in the following lemma.

\begin{lemma}[Concentration property of general discrete distribution estimator]
    \label{lem: bound for general estimator}
    Suppose the principal has received $T$ rounds of reports from agent $i$ and $j$.
    Assuming agents are always truthful.
    Let $\hat{p}(x_j\mid x_i)$ be the conditional distribution estimation from the general estimator in Definition \ref{def: general estimator}.
    \blue{Let $p^*$ be the true sampling distribution with $\rho = \min_{x\in \mathcal{Y}} p^*(X_i = x)$.}
    Define the ambiguity set with ambiguity level $\eta_\varepsilon(\rho T/2)$ as
    \begin{align*}
        S_{\eta_{\varepsilon}(\rho T/2)}(\hat{p}) = \big\{ p\in\mathcal{P} \; \big| \; \tv\left(\hat{p}(\cdot \mid x_i),p (\cdot \mid x_i)\right)  \leq \eta_\varepsilon(\rho^*( T/2),\;\forall x_i\in \mathcal{Y} \cup \{\varnothing\} \big\}.
    \end{align*}
    Then, with probability at least $1-(d+1)(\varepsilon + \exp(-\rho T/8))$, the true distribution $p^*$ belongs to $S_{\eta_{\varepsilon}(\rho T/2)}$.
\end{lemma}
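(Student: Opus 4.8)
The plan is to bound, for each conditioning symbol $x_i \in \mathcal{Y}\cup\{\varnothing\}$ separately, the probability that the conditional estimate $\hat{p}(\cdot\mid x_i)$ is more than $\eta_\varepsilon(\rho T/2)$ away in total variation from $p^*(\cdot\mid x_i)$, and then take a union bound over the $d+1$ symbols. The essential difficulty, relative to Lemma \ref{lem: bound on conditional distribution}, is that the effective sample size feeding the estimator for symbol $x_i$ — namely the number $T_{x_i}$ of rounds in which agent $i$ reports $x_i$ — is itself random, whereas the PAC guarantee of Definition \ref{def: general estimator} is stated only for a deterministic number of samples. I would therefore decouple the count from the estimation error by first conditioning on $T_{x_i}$.

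First I would control the count. Since reports are truthful and rounds are i.i.d., $T_{x_i}$ is a sum of $T$ independent Bernoulli variables with success probability $\mathbb{P}(X_i = x_i)\ge\rho$, so $\mathbb{E}[T_{x_i}]\ge \rho T$. A multiplicative Chernoff bound with deviation parameter $1/2$ gives $\mathbb{P}(T_{x_i} < \rho T/2)\le \exp(-\rho T/8)$ (using $\rho T/2 \le \mathbb{E}[T_{x_i}]/2$ to place the target threshold below the Chernoff threshold). Hence with probability at least $1-\exp(-\rho T/8)$ the estimator for symbol $x_i$ is fed at least $\rho T/2$ clean samples. For the special symbol $\varnothing$, which corresponds to the unconditional distribution estimated from all $T$ rounds, this count step is vacuous and the same bound holds a fortiori.

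Second I would invoke the estimator guarantee on the conditionally i.i.d. sample. Conditioned on the set of rounds in which $X_i = x_i$ — and thus on $T_{x_i}=t$ — the corresponding reference observations $X_j$ are i.i.d.\ draws from $p^*(\cdot\mid x_i)$, so Definition \ref{def: general estimator} applies with sample size $t$ and yields $\tv(\hat{p}(\cdot\mid x_i), p^*(\cdot\mid x_i)) \le \eta_\varepsilon(t)$ with probability at least $1-\varepsilon$. On the event $t\ge\rho T/2$, monotonicity of $\eta_\varepsilon(\cdot)$ in the sample size gives $\eta_\varepsilon(t)\le\eta_\varepsilon(\rho T/2)$, so the estimate lands in the target ball. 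Combining the two disjoint failure modes — too few samples versus estimator failure on sufficiently many samples — the probability that $\tv(\hat{p}(\cdot\mid x_i), p^*(\cdot\mid x_i)) > \eta_\varepsilon(\rho T/2)$ is at most $\varepsilon + \exp(-\rho T/8)$ for each fixed $x_i$.

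Finally, a union bound over the $d+1$ conditioning symbols $x_i\in\mathcal{Y}\cup\{\varnothing\}$ shows that, with probability at least $1-(d+1)(\varepsilon+\exp(-\rho T/8))$, all $d+1$ conditional total-variation distances are simultaneously at most $\eta_\varepsilon(\rho T/2)$, which is precisely $p^*\in S_{\eta_\varepsilon(\rho T/2)}(\hat{p})$. I expect the main obstacle to be the random-sample-size coupling in the third step: one must verify that conditioning on $T_{x_i}$ (equivalently, on which rounds realize $x_i$) does not corrupt the i.i.d.\ structure of the associated $X_j$ draws, so that the per-sample-size guarantee transfers cleanly, and that the count-failure and estimation-failure events are combined additively rather than double-counted.
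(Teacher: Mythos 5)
Your proposal is correct and follows essentially the same route as the paper's proof: decompose on the random count $T_{x_i}$, bound the small-count event by a Chernoff bound on $\mathrm{Bin}(T,\rho)$ to get $\exp(-\rho T/8)$, invoke the PAC guarantee with monotonicity of $\eta_\varepsilon(\cdot)$ on the complementary event to get $\varepsilon$, and union bound over the $d+1$ conditioning symbols. Your explicit attention to the random-sample-size coupling (that conditioning on which rounds realize $x_i$ preserves the i.i.d.\ structure of the associated $X_j$ draws) is a point the paper's proof passes over silently, so your write-up is if anything slightly more careful.
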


\begin{proof}
    For one label $x_i \in\mathcal{Y}$, we have
    \begin{align*}
        \mathbb{P}(\tv (\hat{p}(\cdot \mid x_i), p^*(\cdot \mid x_i)) > \eta) &= \sum_{t=0}^T \mathbb{P}(\tv (\hat{p}(\cdot \mid x_i), p^*(\cdot \mid x_i)) > \eta \mid T_{x_i} = t) \cdot \mathbb{P}(T_{x_i} = t) \\ 
        &\leq \mathbb{P}(T_{x_i} \leq \rho T/ 2) + \blue{\mathbb{P}(\tv (\hat{p}(\cdot \mid x_i), p^*(\cdot \mid x_i)) > \eta \mid T_{x_i} > \rho T/2)} \\
        &\leq \exp(-\rho T/8) + \varepsilon
    \end{align*}
    \blue{The first inequality is when we split the failure event into the aforementioned two scenarios, and the second inequality is Chernoff bound applied to the binomial distribution $\operatorname{Bin}(T, \rho)$}.

    Applying union bound across all labels would give us
    \begin{align*}
        \mathbb{P}(\exists x_i, \tv (\hat{p}(\cdot \mid x_i), p^*(\cdot \mid x_i)) > \eta) \leq (d+1)(\exp(-\rho T/8) + \varepsilon).
    \end{align*}
    
\end{proof}

Now we focus on the regret for a single agent $i$.
Consider what happens in epoch $k$, where $t \in (\tau_{k-1},\tau_k]$.
At the beginning of the epoch, we have $\tau_{k-1}$ data points, and we set $\eta_k = \eta_{\varepsilon/Nm(d+1)}(\rho\tau_{k-1}/2)$.
Thus we know from Lemma \ref{lem: bound for general estimator} that $p_i^* \in S_{\eta_k }(\hat{p}_{ik})$ with probability at least $1-\varepsilon/(Nm) - (d+1)\exp (-\rho \tau_{k-1}/8) $.

From Theorem \ref{thm: cost of robustness}, we know that a mechanism $R_{ik}$ can be constructed that guarantees agent $i$'s truthfulness when $p_i^* \in S_{\eta_k }(\hat{p}_{ik})$.
Also, in the duration of epoch $k$, the expected regret for a single round is $c \cdot C_1\eta_k/(1-C_2\eta_k)$ where $C_1$ and $C_2$ are constants as defined in Eq.(\ref{eq: cost of robustness}).
Therefore, the total expect regret for agent $i$ across all $T$ rounds is
\begin{align*}
    \sum_{k=1}^m \frac{C_1 \eta_k}{1-C_2 \eta_k} c \cdot (\tau_k - \tau_{k-1}) &\leq C_1 c \cdot \sum_{k=1}^m \eta_k \cdot (\tau_k-\tau_{k-
    1}) \\
    &= C_1 c \cdot \sum_{k=1}^m \eta_{\varepsilon/Nm(d+1)} (\rho \tau_{k-1}/2) \cdot (\tau_k - \tau_{k-1})
\end{align*}

Finally, we know that for one agent in one epoch, the scheme ensures that applying union bound across all $N$ agents and $m$ periods. 
Hence applying union bound, we know that truthfulness is held with probability 
\[
    1 - \varepsilon - N(d+1) \cdot \sum_{k=1}^m \exp(-\rho \tau_{k-1}/8).
\]
Also, note that the regret in the warm start phase is at most $O(\tau)$ and therefore dominated by the regret from the adaptive phase.
Thus the total regret is $N$ times the single agent's regret and therefore 
\[
    O\left(N\sum_{k=1}^m \eta_{\varepsilon/Nm(d+1)} (\rho \tau_{k-1}/2) \cdot (\tau_k - \tau_{k-1})\right)
\]
$\square$

\end{document}